\theoremstyle{plain}
\newtheorem{thm}{Theorem}[section]
\newtheorem{lem}[thm]{Lemma}
\newtheorem{cor}[thm]{Corollary}
\newtheorem{cl}[thm]{Claim}
\theoremstyle{definition}
\newtheorem{rem}[thm]{Remark}
\def\final{0}  
\def\iflong{\iffalse}
\newcommand{\knote}[1]{{\color{red}[{\tiny \textbf{Kristóf:} \bf #1}]\marginpar{\color{red}*}}}
\newcommand{\tnote}[1]{{\color{blue}[{\tiny \textbf{Tamás:} \bf #1}]\marginpar{\color{blue}*}}}
\newcommand{\gnote}[1]{{\color{orange}[{\tiny \textbf{Gergő:} \bf #1}]\marginpar{\color{orange}*}}}
\newcommand{\ynote}[1]{{\color{purple}[{\tiny \textbf{Yuhang:} \bf #1}]\marginpar{\color{purple}*}}}
\newcommand{\knote}[1]{}
\newcommand{\tnote}[1]{}
\newcommand{\gnote}[1]{}
\newcommand{\ynote}[1]{}
\renewcommand{\paragraph}{%
  \@startsection{paragraph}{4}%
  {\z@}{1.5ex \@plus 1ex \@minus .2ex}{-1em}%
  {\normalfont\normalsize\bfseries}%
}
\DeclareMathOperator\comp{comp}
\newcommand{\bR}{\mathbb{R}}
\newcommand{\bZ}{\mathbb{Z}}
\newcommand{\cI}{\mathcal{I}}
\newcommand{\cM}{{\bm{M}}}
\newcommand{\maxpf}{\textsc{Max-PF}\xspace}
\newcommand{\maxpt}{\textsc{Max-PT}\xspace}
\newcommand{\pst}{\textsc{PST}\xspace}
\newcommand{\minbdst}{\textsc{Min-BDST}\xspace}
\newcommand{\minubdst}{\textsc{Min-UBDST}\xspace}
\newcommand{\bdst}{\textsc{BDST}\xspace}
\newcommand{\ubdst}{\textsc{UBDST}\xspace}
\newcommand{\opt}{\textsc{Opt}\xspace}
\newcommand{\maxlf}{\textsc{Max-LF}\xspace}
\newcommand{\tsp}{\textsc{$(1,2)$-TSP}\xspace}
\newcommand{\longP}{\textsc{Longest-Path}\xspace}
\newcommand{\longUP}{\textsc{Undirected-Longest-Path}\xspace}
\definecolor[named]{Blue}{cmyk}{1,0.1,0,0.1}
\definecolor[named]{Yellow}{cmyk}{0,0.16,1,0}
\definecolor[named]{Orange}{cmyk}{0,0.42,1,0.01}
\definecolor[named]{Red}{cmyk}{0,0.90,0.86,0}
\definecolor[named]{LightBlue}{cmyk}{0.49,0.01,0,0}
\definecolor[named]{Green}{cmyk}{0.20,0,1,0.19}
\newcommand{\linkdest}[1]{\Hy@raisedlink{\hypertarget{#1}{}}}
\newlength{\bibitemsep}\setlength{\bibitemsep}{.2\baselineskip plus .05\baselineskip minus .05\baselineskip}
\newlength{\bibparskip}\setlength{\bibparskip}{1pt}
\let\oldthebibliography\thebibliography
\renewcommand\thebibliography[1]{%
\oldthebibliography{#1}%
\setlength{\parskip}{\bibitemsep}%
\setlength{\itemsep}{\bibparskip}%
}
\title{Approximating maximum-size properly colored forests}
\author{
Yuhang Bai\thanks{School of Mathematics and Statistics, Northwestern Polytechnical University and Xi'an-Budapest Joint Research Center for Combinatorics, Xi'an 710129,
Shaanxi, People's Republic of China. Email: \texttt{yhbai@mail.nwpu.edu.cn}.}
\and
Kristóf Bérczi\thanks{MTA-ELTE Matroid Optimization Research Group and HUN-REN–ELTE Egerváry Research Group, Department of Operations Research, ELTE Eötvös Loránd University, Budapest, Hungary. Email: \texttt{kristof.berczi@ttk.elte.hu,tamas.schwarcz@ttk.elte.hu}.}
\and
Gergely Csáji\thanks{Department of Operations Research, ELTE Eötvös Loránd University, Budapest, Hungary and HUN-REN Centre for Economic and Regional Studies Hungary. Email: \texttt{csaji.gergely@krtk.hun-ren.hu}.}
\and
Tamás Schwarcz\footnotemark[2]
}
\date{}
\begin{document}
\maketitle
\thispagestyle{empty}

\begin{abstract}

In the \emph{Properly Colored Spanning Tree} problem, we are given an edge-colored undirected graph and the goal is to find a properly colored spanning tree, i.e., a spanning tree in which any two adjacent edges have distinct colors. The problem is interesting not only from a graph coloring point of view, but is also closely related to the \emph{Degree Bounded Spanning Tree} and \emph{$(1,2)$-Traveling Salesman} problems, two classical questions that have attracted considerable interest in combinatorial optimization and approximation theory. Previous work on properly colored spanning trees has mainly focused on determining the existence of such a tree and hence has not considered the question from an algorithmic perspective. We propose an optimization version called \emph{Maximum-size Properly Colored Forest} problem, which aims to find a properly colored forest with as many edges as possible. We consider the problem in different graph classes and for different numbers of colors, and present polynomial-time approximation algorithms as well as inapproximability results for these settings. Our proof technique relies on the sum of matching matroids defined by the color classes, a connection that might be of independent combinatorial interest. 

We also consider the \emph{Maximum-size Properly Colored Tree} problem, which asks for the maximum size of a properly colored tree not necessarily spanning all the vertices. We show that the optimum is significantly more difficult to approximate than in the forest case, and provide an approximation algorithm for complete multigraphs.
\medskip

\noindent \textbf{Keywords:} Approximation algorithm, $(1,2)$-traveling salesman problem, Degree bounded spanning tree, Properly colored forest
\end{abstract}

\newpage
\pagenumbering{roman}
\tableofcontents
\newpage
\pagenumbering{arabic}
\setcounter{page}{1}

\section{Introduction}
\label{sec:intro}

Throughout the paper, we consider loopless graphs that might contain parallel edges. A \emph{$k$-edge-colored graph} is a graph $G=(V,E)$ with a coloring $c\colon E\to [k]$ of its edges by $k$ colors. We refer to a graph that is $k$-edge-colored for some $k\in\bZ_+$ as \emph{edge-colored}.  A subgraph $H$ of $G$ is called \emph{rainbow colored} if no two edges of $H$ have the same color, and \emph{properly colored} if any two adjacent edges of $H$ have distinct colors. Since rainbow colored forests form the common independent sets of two matroids, i.e., the partition matroid defined by the color classes and the graphic matroid of the graph, a rainbow colored forest of maximum size can be found in polynomial time using Edmonds' celebrated matroid intersection algorithm~\cite{edmonds1970submodular}. However, much less is known about the properly colored case. In~\cite{borozan2019maximum}, Borozan, de La Vega, Manoussakis, Martinhon, Muthu, Pham, and Saad initiated the study of properly edge-colored spanning trees of edge-colored graphs and investigated the existence of such a spanning tree, called the \emph{Properly Colored Spanning Tree} problem (\pst). This problem generalizes the well-known bounded degree spanning tree problem for uncolored graphs as the number of colors bounds the degree of each vertex, as well as the properly colored Hamiltonian path problem when the number of colors is restricted to two. Since both of these problems are NP-complete, finding a properly colored spanning tree is hard in general. 

The aim of this paper is to study the problem from an approximation point of view. Accordingly, we define the \emph{Maximum-size Properly Colored Forest} problem (\maxpf) in which the goal is to find a properly colored forest of maximum size in an edge-colored graph, and discuss the approximability of the problem in various settings. Throughout the paper, by the \emph{size} of a tree or a forest we mean the number of its edges.

\subsection{Related work and connections}
\label{sec:related}

Finding properly colored spanning trees in graphs is closely related to constrained spanning tree problems, or in a more general context, to the problem of finding a basis of a matroid subject to further matroid constraints. In what follows, we give an overview of questions that motivated our investigations. 

\paragraph{Properly colored trees.}

Properly colored spanning trees were first considered in Borozan et al.~\cite{borozan2019maximum} where their existence was studied from both a graph-theoretic and an algorithmic perspective. They showed that finding a properly colored spanning tree remains NP-complete when restricted to complete graphs. They also proved that \maxpt is hard to approximate within a factor of $55/56+\varepsilon$ for any $\varepsilon>0$, and conjectured the existence of an approximation algorithm with logarithmic approximation guarantee. On the positive side, they provided polynomial algorithms for graphs not containing properly edge-colored cycles. 

Since deciding the existence of a properly colored spanning tree is hard in general, a considerable amount of work has focused on finding sufficient conditions. For an edge-colored graph $G=(V,E)$, the \emph{color degree} of a vertex is the number of distinct colors appearing on the edges incident with it. Let $\delta^c(G)$ denote the minimum value among the color degrees of all the vertices of $G$, called the \emph{minimum color degree} of $G$. As a counterpart of Dirac's theorem~\cite{dirac1952some} on Hamiltonian cycles, Cheng, Kano and Wang~\cite{cheng2020properly} verified that the condition $\delta^c(G) \geq |V|/2$ implies the existence of a properly colored spanning tree. This result was further strengthened by Kano, Maezawa, Ota, Tsugaki, and Yashima~\cite{kano2020color} who proved an analogous result under the assumption that the color degree sum of any two adjacent vertices of $G$ is at least $|V|$. For bipartite graphs, Kano and Tsugaki~\cite{kano2021rainbow} showed that the lower bound on the minimum color degree can be improved to $\delta^c(G) \geq |V|/3+1$ while still implying the existence of a properly colored spanning tree.

Since a properly colored spanning tree may not exist, it is natural to ask for the maximum size of a properly colored tree not necessarily spanning all the vertices, called the \emph{Maximum-size Properly Colored Tree} problem (\maxpt). Hu, Liu and Maezawa~\cite{hu2022maximum} proved that the maximum size of a properly colored tree in an edge-colored connected graph is at least $\min \left\{|V|-1, 2 \delta^c(G)-1\right\}$. Note that this implies the existence of a properly colored spanning tree whenever $\delta^c(G)\geq |V|/2$, and so it generalizes the result of~\cite{cheng2020properly}.

\paragraph{Degree bounded spanning trees.}

In the \emph{Minimum Bounded Degree Spanning Tree} problem (\minbdst), we are given an undirected graph $G=(V,E)$ with $|V|=n$, a cost function $c\colon E\to\bR$ on the edges, and degree upper bounds $g\colon V\to\bZ_+$ on the vertices, and the task is to find a spanning tree of minimum cost that satisfies all the degree bounds. When the degree bounds are the same for every vertex, we get the \emph{Minimum Uniformly Bounded Degree Spanning Tree} problem (\minubdst). Furthermore, we drop ``\textsc{Min}'' from the notation when the edge-costs are identically one, thus leading to problems \bdst and \ubdst. For ease of discussion, below it is always assumed that the problem in question has a feasible solution, and for optimization problems the optimum value is denoted by \opt.

For the \ubdst problem with the upper bounds being identically $k\in\bZ_+$, F\"urer and Raghavachari~\cite{furer1992approximating} gave an iterative polynomial time algorithm that computes a spanning tree of maximum degree at most $O(k+\log n)$. Their algorithm extends to the Steiner case as well, that is, when only some subset of vertices need to be connected. They also described a refined version of their algorithm that produces a spanning tree of maximum degree $k+1$ and observed that, unless $P=NP$, this is the best bound achievable in polynomial time. Czumaj and Strothmann~\cite{czumaj1997bounded} studied the problem under additional connectivity and maximum degree assumptions. They presented algorithms that find a spanning tree of maximum degree at most $2k-2$ in biconnected graphs, and a spanning tree of maximum degree $k$ in $\lambda$-connected graphs of maximum degree $\lambda(k-2)+2$. On the hardness side, they proved that it is NP-complete to decide whether a $\lambda$-connected graph of maximum degree $\lambda(k-2)+3$ has a spanning tree of maximum degree $k$, provided $k\leq 3$, and that the same result holds for $\lambda$-connected graphs of maximum degree $\lambda(k-1)$ if $k\geq 3$.

Fischer~\cite{fischer1993optimizing} observed that the algorithms of~\cite{furer1992approximating} can be adapted to find a minimum cost spanning tree of approximate maximum degree. Let us emphasize that in this problem, the roles of the edge-costs and the degree bounds are reversed compared to the \ubdst problem: here the objective is to minimize the maximum degree of the solution over minimum cost spanning trees, while in the \ubdst problem the objective is to minimize the cost of the solution over spanning trees of maximum degree at most $k$. Fischer showed that if there exists a minimum cost spanning tree of maximum degree $k$, then one can find a minimum cost spanning tree of maximum degree at most $O(k+\log n)$ in polynomial time. By relying on the refined version of the algorithm of~\cite{furer1992approximating}, he also presented an algorithm to construct a minimum cost spanning tree of degree at most $\ell(k+1)$, where $\ell$ is the number of different cost values on the edges in the graph. 

The \minubdst problem was first proposed by Ravi, Marathe, Ravi, Rosenkrantz, and Hunt III~\cite{ravi1993many} in the context of network design problems with multiple design objectives. Using the local search technique of Fischer~\cite{fischer1993optimizing}, they  gave a polynomial algorithm that finds a spanning tree of maximum degree $O(k\log (n/k))$ and of cost at most $O(\log (n/k)\opt)$. It is worth mentioning that their techniques generalize to the case of constructing Steiner trees or generalized Steiner forests as well. Using ideas from Lagrangean duality, K\"onemann and Ravi~\cite{konemann2002matter} improved this by describing an algorithm that finds a spanning tree of maximum degree $O(k+\log n)$ and cost $O(\opt)$. The first approximation algorithms that find trees of optimal cost is due to Chaudhuri, Rao, Riesenfeld and Talwar~\cite{chaudhuri2005would,chaudhuri2009would}, who presented two algorithms. The first one runs in polynomial time and finds a tree of cost at most $\opt$ with maximum degree at most $kb/(2-b)+O(\log_b(n))$ for any $b\in(1,2)$. The second algorithm has quasi-polynomial running time and produces a tree of cost at most $\opt$ and maximum degree at most $k+O(\log n/\log\log n)$. A peculiar feature of the algorithms is that they involve lower bound requirements on the degrees, therefore the results extend to a generalized version of the problem in which both upper and lower degree bounds are given. The breakthrough result of Goemans~\cite{goemans2006minimum} showed that a spanning tree of maximum degree at most $k+2$ and of cost at most $\opt$ can be found in polynomial time, and formulated an analogous statement with $k+1$ instead of $k+2$ as a conjecture. 

Using the iterative rounding method introduced by Jain~\cite{jain2001factor}, Singh and Lau~\cite{singh2015approximating} verified Goemans' conjecture for the more general \minbdst problem. Their algorithm extends to the setting when both lower and upper bounds are given on the degree of each vertex.

\paragraph{Degree bounded matroids and multi-matroid intersection.}

Kir\'aly, Lau and Singh~\cite{kiraly2012degree} studied a matroidal extension of the \minbdst problem. In their setting, a matroid with a cost function on its elements, and a hypergraph on the same ground set with lower and upper bounds $f(e)\leq g(e)$ for each hyperedge $e$. The task is to find a minimum cost basis of the matroid which contains at least $f(e)$ and at most $g(e)$ elements from each hyperedge $e$. The algorithm presented in~\cite{kiraly2012degree} is also based on the iterative rounding technique of~\cite{jain2001factor}, and determines a basis of cost at most \opt and violating the degree bounds by at most $2\Delta-1$ if both lower and upper bounds are present, and by at most $\Delta-1$ if only lower or only upper bounds are given. If we choose the matroid to be the graphic matroid of a graph $G=(V,E)$ and the hyperedges to be the sets $\delta(v)$ for $v\in V$, we get back the \minbdst problem with the value of $\Delta$ being $2$.

In~\cite{zenklusen2012matroidal}, Zenklusen considered a different generalization of the \minbdst problem where for every vertex $v$, the edges adjacent to $v$ have to be independent in a matroid $\cM_v$. He presented an algorithm that returns a spanning tree of cost at most \opt, such that for every vertex $v$, it suffices to remove at most $8$ edges from the spanning tree to satisfy the matroidal degree constraint at $v$. This model was further extended by Linhares, Olver, Swamy and Zenklusen~\cite{linhares2020approximate} who studied the problem of finding a minimum cost basis of a matroid $\cM_0$ that is independent in other matroids $\cM_1,\dots,\cM_q$. They derived an algorithm that is based on an iterative refinement technique and returns a solution that violates the rank constraints by a multiplicative factor which depends on how strongly the ground sets of the matroids $\cM_i$ overlap.

\paragraph{$(1,2)$-traveling salesman problem.}

The metric Traveling Salesman Problem is one of the most fundamental combinatorial optimization problems. Karp~\cite{karp1972reducibility} showed that the problem is NP-hard even in the special case when all distances between cities are either $1$ or $2$, called the \emph{Traveling Salesman Problem with Distances 1 and 2} (\tsp). This result was further strengthened by Papadimitriou and Yannakakis~\cite{papadimitriou1993traveling} who showed that \tsp is in fact hard to approximate and MAX-SNP-hard. The currently best known inapproximability bound of $535/534$ is due to Karpinski and Schmied~\cite{karpinski2012approximation}. From an approximation point of view, the most natural local search approximation algorithm starts with an arbitrary tour and replaces at most $k$ edges in every iteration to get a shorter tour, or outputs the tour if no such improvement can be found. Khanna, Motwani, Sudan and Vazirani~\cite{khanna1998syntactic} gave an upper bound of $3/2$ on the approximation guarantee of this method for $k=2$, while Zhong~\cite{zhong2020approximation} showed a lower bound of $11/10$ for any fixed $k$. A local search-based $8/7$-approximation algorithm with running time $O(n^9)$ was given by Berman and Karpinski~\cite{berman20068}, while Adamaszek, Mnich and Paluch~\cite{adamaszek2018new} presented a faster algorithm with running time $O(n^3)$ achieving the same approximation factor. 

\bigskip

The problem \maxpf is closely related to the problems listed above. 
\begin{itemize}
\item \maxpf provides a relaxation of both the \pst and \maxpt problems. 
\item For an arbitrary graph $G$, let $G'$ be the $k$-edge-colored multigraph obtained by taking $k$ copies of each edge of $G$ colored by different colors. Then, $G$ has a uniformly bounded degree spanning tree with upper bound $k$ if and only if $G'$ admits a properly colored spanning tree.
\item For a $k$-edge-colored graph $G=(V,E)$, let $M$ be the graphic matroid of $G$. Furthermore, define a hypergraph on $E$ as follows: for each vertex $v\in V$ and color $i\in[k]$, let $e_{v,i}\coloneqq\{e\in E\mid c(e)=i,e\ \text{is incident to $v$}\}$ be a hyperedge with upper bound $1$. Then, $G$ has a properly colored spanning tree if and only if $M$ admits a degree bounded basis.
\item For a $k$-edge-colored graph $G=(V,E)$, let $M_0$ be the graphic matroid of $G$. Furthermore, for each vertex $v\in V$ and color $i\in[k]$, let $M_{v,i}$ be a rank-$1$ partition matroid whose ground set is the set of edges incident to $v$ having color $i$. Then, $G$ has a properly colored spanning tree if and only if the multi-matroid intersection problem $M_0,\{M_{v,i}\}_{v\in V,i\in[k]}$ admits a solution of size $|V|-1$.
\item  Consider an instance of \tsp on $n$ vertices and let $G$ denote the subgraph of edges of length $1$. Since any linear forest of $G$ of size $x$ can be extended to a Hamiltonian cycle of length $2n-x$, one can reformulate \tsp as the problem of finding a maximum linear forest in $G$. This problem reduces to \maxpf in $2$-edge-colored graphs, see Section~\ref{sec:hardnesspf} for further details. 
\end{itemize}

Consider now an instance of \maxpf, that is, an edge-colored graph $G$ and let \opt denote the maximum size of a properly colored forest in $G$. One can obtain a forest $F$ of $G$ of size at least \opt in which every color appears at most twice at every vertex, either by the approximation algorithm of~\cite{kiraly2012degree} for the bounded degree matroid problem, or by the approximation algorithm of~\cite{linhares2020approximate} for the multi-matroid intersection problem. Deleting conflicting edges from $F$ greedily results in a properly colored forest of size at least $|F|/2\geq\opt/2$, thus leading to a $1/2$-approximation for \maxpf. Our main motivation was to improve the approximation factor and to understand the inapproximability of the problem.

\subsection{Our results}
\label{sec:results}

We use the convention that, by an \emph{$\alpha$-approximation algorithm}, for minimization problems we mean an algorithm that provides a solution with objective value at most $\alpha$ times the optimum for some $\alpha\geq 1$, while for maximization problems we mean an algorithm that provides a solution with objective value at least $\alpha$ times the optimum for some $\alpha\leq 1$.

We initiate the study of properly colored spanning trees from an optimization point of view and focus on the problem of finding a properly colored forest of \emph{maximum size}, i.e., containing a maximum number of edges. We discuss the problem for several graph classes and numbers of colors, and provide approximation algorithms as well as inapproximability bounds for these problems. The results are summarized in Table~\ref{table:results}.

\begin{table}[h!]
\centering
\renewcommand*{\arraystretch}{1.2}
\setlength{\tabcolsep}{2pt}
\newcommand{\tableentrybas}[2]{\colorbox{#1}{\parbox[c][0.9em][c]{122mm}{\centering\footnotesize #2}}}
\newcommand{\tableentrybass}[2]{\colorbox{#1}{\parbox[c][0.9em][c]{80mm}{\centering\footnotesize #2}}}
\newcommand{\tableentrybase}[2]{\colorbox{#1}{\parbox[c][2.8em][c]{18mm}{\centering\footnotesize #2}}}
\newcommand{\tableentrybasee}[2]{\colorbox{#1}{\parbox[c][0.9em][c]{38mm}{\centering\footnotesize #2}}}
\newcommand{\tableentrybaseeb}[2]{\colorbox{#1}{\parbox[c][2.9em][c]{38mm}{\centering\footnotesize #2}}}
\newcommand{\tableentrybaseee}[2]{\colorbox{#1}{\parbox[c][2.9em][c]{40mm}{\centering\footnotesize #2}}}
\newcommand{\tableentrybaseeee}[2]{\colorbox{#1}{\parbox[c][6em][c]{38mm}{\centering\footnotesize #2}}}
\footnotesize
\begin{tabular}{c|c|c|c}
& \multicolumn{3}{c}{Number of colors}\\ 
Graphs & $k=2$  & $k=3$ & $k\geq 4$ \\
\cmidrule(lr){1-1}\cmidrule(lr){2-2}\cmidrule(lr){3-3}\cmidrule(lr){4-4}\\[-1.6em]

\multirow[c]{2}{*}{\tableentrybase{Red!0}{Simple \linebreak graphs}} & 
\multicolumn{3}{c}{\tableentrybas{Red!20}{MAX-SNP-hard (Thm.~\ref{thm:2snp})}}\\             
& \tableentrybasee{Green!30}{$3/4$-approx.\ (Thm.~\ref{thm:34})} & 
\tableentrybasee{Green!30}{$5/8$-approx.\ (Thm.~\ref{thm:58})} & 
\tableentrybasee{Green!30}{$4/7$-approx.\ (Thm.~\ref{thm:simple})}\\ 
\cmidrule(lr){1-1}\cmidrule(lr){2-2}\cmidrule(lr){3-3}\cmidrule(lr){4-4}\\[-1.6em]

\multirow[c]{2}{*}{\tableentrybase{Red!0}{Multigraphs}} & 
\multicolumn{3}{c}{\tableentrybas{Red!20}{MAX-SNP-hard (Thm.~\ref{thm:2snp})}}\\

& \tableentrybasee{Green!30}{$3/5$-approx.\ (Thm.~\ref{thm:multi2})} & 
\tableentrybasee{Green!30}{4/7-approx.\ (Thm.~\ref{thm:simple})} & 
\tableentrybasee{Green!30}{$5/9$-approx.\ (Thm.~\ref{thm:main})}\\ 
\cmidrule(lr){1-1}\cmidrule(lr){2-2}\cmidrule(lr){3-3}\cmidrule(lr){4-4}\\[-1.6em]


\multirow[c]{2}{*}{\tableentrybase{Red!0}{Complete \\graphs}} & 
\multirow[c]{4}{*}[0.1em]{\tableentrybaseeee{Green!30}{P (Thm.~\ref{thm:ck2})}} & 
\multicolumn{2}{c}{\tableentrybass{Red!20}{MAX-SNP-hard (Thm.~\ref{thm:3snp})}}\\
& & \tableentrybasee{Green!30}{$5/8$-approx.\ (Thm.~\ref{thm:58})} & 
\tableentrybasee{Green!30}{$4/7$-approx.\ (Thm.~\ref{thm:simple})}\\
\cmidrule(lr){1-1}\cmidrule(lr){3-3}\cmidrule(lr){4-4}\\[-1.6em]

\multirow[c]{2}{*}{\tableentrybase{Red!0}{Complete \\multigraphs}} & 
& 
\multicolumn{2}{c}{\tableentrybass{Red!20}{MAX-SNP-hard (Thm.~\ref{thm:3snp})}}\\
& & \tableentrybasee{Green!30}{4/7-approx.\ (Thm.~\ref{thm:simple})} & 
\tableentrybasee{Green!30}{$5/9$-approx.\ (Thm.~\ref{thm:main})}
\end{tabular}
\caption{Complexity landscape of \maxpf. \label{table:results}}
\end{table}

We also consider \maxpt, that is, when a properly colored tree (not necessarily spanning) of maximum size is to be found. We give a strong inapproximability result in general, together with an approximation algorithm for complete multigraphs. The results are summarized in Table~\ref{table:MaxPTresults}.

\begin{table}[h!]
\centering
\renewcommand*{\arraystretch}{1.2}
\setlength{\tabcolsep}{2pt}
\newcommand{\tableentrybas}[2]{\colorbox{#1}{\parbox[c][0.9em][c]{122mm}{\centering\footnotesize #2}}}
\newcommand{\tableentrybass}[2]{\colorbox{#1}{\parbox[c][0.9em][c]{80mm}{\centering\footnotesize #2}}}
\newcommand{\tableentrybase}[2]{\colorbox{#1}{\parbox[c][2.1em][c]{18mm}{\centering\footnotesize #2}}}
\newcommand{\tableentrybasee}[2]{\colorbox{#1}{\parbox[c][0.9em][c]{38mm}{\centering\footnotesize #2}}}
\newcommand{\tableentrybaseeb}[2]{\colorbox{#1}{\parbox[c][2.6em][c]{38mm}{\centering\footnotesize #2}}}
\newcommand{\tableentrybaseee}[2]{\colorbox{#1}{\parbox[c][2.9em][c]{40mm}{\centering\footnotesize #2}}}
\newcommand{\tableentrybaseeee}[2]{\colorbox{#1}{\parbox[c][6em][c]{38mm}{\centering\footnotesize #2}}}
\footnotesize
\begin{tabular}{c|c|c}
& \multicolumn{2}{c}{Number of colors}\\ 
Graphs & $k=2$ & $k\geq 3$ \\
\cmidrule(lr){1-1}\cmidrule(lr){2-2}\cmidrule(lr){3-3}\\[-1.6em]

{Simple graphs} & 
\multicolumn{2}{c}{\tableentrybas{Red!20}{$1/n^{1-\varepsilon} $-inapprox.\ for $\varepsilon >0$ (Thm.~\ref{thm:maxpt-2})}}\\ 
\cmidrule(lr){1-1}\cmidrule(lr){2-2}\cmidrule(lr){3-3}\\[-1.7em]

{Multigraphs} & 
\multicolumn{2}{c}{\tableentrybas{Red!20}{$1/n^{1-\varepsilon}$-inapprox.\ for $\varepsilon >0$ (Thm.~\ref{thm:maxpt-2})}}\\ 
\cmidrule(lr){1-1}\cmidrule(lr){2-2}\cmidrule(lr){3-3}\\[-1.6em]

\multirow[c]{2}{*}{\tableentrybase{Red!0}{Complete \\graphs}} & 
\multirow[c]{2}{*}[0.1em]{\tableentrybaseeb{Green!30}{P \cite{bang1997alternating}}}&
\tableentrybass{Red!20}{MAX-SNP-hard (Thm.~\ref{thm:maxpt-3})}\\
& & \tableentrybass{Green!30}{$1/\sqrt{(2+\varepsilon )n}$-approx.\ for any $\varepsilon >0$  (Thm.~\ref{thm:maxptalg})}\\
\cmidrule(lr){1-1}\cmidrule(lr){2-2}\cmidrule(lr){3-3}\\[-1.6em]

\multirow[c]{2}{*}{\tableentrybase{Red!0}{Complete \\multigraphs}} & 
\multirow[c]{2}{*}[0.1em]{\tableentrybaseeb{Green!30}{P \cite{bang1997alternating}}}&
\tableentrybass{Red!20}{MAX-SNP-hard (Thm.~\ref{thm:maxpt-3})}\\
& & \tableentrybass{Green!30}{$1/\sqrt{(2+\varepsilon )n}$-approx.\ for any $\varepsilon >0$ (Thm.~\ref{thm:maxptalg})}
\end{tabular}
\caption{Complexity landscape of \maxpt.   \label{table:MaxPTresults}}
\end{table}

\vspace{-0.5cm}

\paragraph{Paper Organization}

The paper is organized as follows. In Section~\ref{sec:prelim}, we introduce basic definitions and notation, and overview some results of matroid theory that we will use in our proofs. In Section~\ref{sec:hardness}, we discuss the complexity of the \maxpf and \maxpt problems. The rest of the paper is devoted to presenting approximation algorithms mainly for \maxpf in various settings. In Section~\ref{sec:prep}, we show that the vertex set of the graph can be assumed to be coverable by monochromatic matchings of the graph, and that such a reduction can be found efficiently using techniques from matroid theory. We then give a polynomial algorithm for $2$-edge-colored complete multigraphs in Section~\ref{sec:complete}. Our main result is an $5/9$-approximation algorithm for the problem in $k$-edge-colored multigraphs, presented in Section~\ref{sec:general}. In Section~\ref{sec:simple}, we explain how the approximation factor can be improved if the graph is simple or the number of colors is at most three. We further improve the approximation factor for 2- and 3-edge-colored simple graphs in Section~\ref{sec:small}. Finally, an approximation algorithm is given for \maxpt in Section~\ref{sec:maxptapx}.

\section{Preliminaries}
\label{sec:prelim}

\paragraph{Basic notation}

We denote the set of \emph{nonnegative integers} by $\bZ_+$. For a positive integer $k$, we use $[k]\coloneqq\{1,\dots,k\}$. Given a ground set $S$, the \emph{difference} of $X,Y\subseteq S$ is denoted by $X\setminus Y$. If $Y$ consists of a single element $y$, then $X\setminus \{y\}$ and $X\cup \{y\}$ are abbreviated as $X-y$ and $X+y$, respectively.

We consider loopless undirected graphs possibly containing parallel edges. A graph is \emph{simple} if it has no parallel edges, and it is called a \emph{multigraph} if parallel edges might be present. A simple graph is \emph{complete} if it contains exactly one edge between any pair of vertices. By a \emph{complete multigraph}, we mean a multigraph containing at least one edge between any pair of vertices. A graph is \emph{linear} if each of its vertices has degree at most $2$ in it.

Let $G=(V,E)$ be a graph, $F\subseteq E$ be a subset of edges, and $X\subseteq V$ be a subset of vertices. The \emph{subgraph of $G$} and \emph{set of edges induced by $X$} are denoted by $G[X]$ and $E[X]$, respectively. The \emph{graph obtained by deleting $F$ and $X$} is denoted by $G-F-X$. We denote the \emph{vertices of the edges in $F$} by $V(F)$, and the \emph{vertex sets of the connected components of the subgraph $(V(F),F)$} by $\comp(F)\subseteq 2^{V(F)}$. We denote the \emph{set of edges in $F$ having exactly one endpoint in $X$} by $\delta_F(X)$ and define the \emph{degree of $X$ in $F$} as $d_F(X)\coloneqq |\delta_F(X)|$. We dismiss the subscript if $F=E$. A \emph{matching} is a subset of edges $M\subseteq E$ satisfying $d_M(v)\leq 1$ for every $v\in V$. We say that $F$ \emph{covers} $X$ if $d_F(v)\geq 1$ for every $v\in X$, or in other words, if $X\subseteq V(F)$.

Let $c\colon E\to[k]$ be an edge-coloring of $G$ using $k$ colors. The function $c$ is extended to subsets of edges where, for a subset $F\subseteq E$ of edges, $c(F)$ denotes the set of colors appearing on the edges of $F$. For an edge-colored graph $G=(V,E)$, we use $E_i=\{ e\in E\mid c(e)=i\}$ to denote the edges of color $i$. Without loss of generality, we assume throughout that $E_i$ contains no parallel edges. We call a subset of vertices $U\subseteq V$ \emph{matching-coverable} if there exist matchings $M_i\subseteq E_i$ for $i\in[k]$ such that $\bigcup_{i=1}^k M_i$ covers $U$. A \emph{properly colored 1-path-cycle factor} of a graph $G$ is a spanning subgraph consisting of a properly colored path $C_0$ and a (possibly empty) collection of properly colored cycles $C_1,\dots,C_q$ such that $V(C_i)\cap V(C_j)=\emptyset$ for $0\leq i <j\leq q$. We will use the following result of Bang-Jensen and Gutin~\cite{bang1997alternating}, extended by Feng, Giesen, Guo, Gutin, Jensen, and Rafiey~\cite{feng2006characterization}.

\begin{thm}[Bang-Jensen and Gutin~\cite{bang1997alternating}]\label{thm:bang}
A 2-edge-colored complete graph $G$ has a properly colored Hamiltonian path if and only if $G$ contains a properly colored 1-path-cycle factor. Furthermore, any properly colored 1-path-cycle factor of $G$ can be transformed into a properly colored Hamiltonian path in polynomial time. 
\end{thm}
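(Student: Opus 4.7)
The ``only if'' direction is immediate, as a properly colored Hamiltonian path is itself a properly colored $1$-path-cycle factor with no cycles. For the converse together with the algorithmic claim, my plan is to establish the following reduction lemma and then iterate it: given a properly colored $1$-path-cycle factor $F = P \cup C_1 \cup \cdots \cup C_q$ of $G$ with $q\ge 1$ cycles, one can compute in polynomial time a properly colored $1$-path-cycle factor $F'$ of $G$ with strictly fewer than $q$ cycles. Applying the lemma at most $q\le n/2$ times yields a properly colored Hamiltonian path.

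The key structural observation I will exploit is that since $G$ uses only $2$ colors, every properly colored cycle has even length and at each of its vertices one incident cycle-edge has color $1$ while the other has color $2$. Consequently, upon entering a cycle $C$ at any vertex $x$ via an edge $e$, one can always continue around $C$ in the direction whose first cycle-edge has color different from that of $e$, keeping the traversal properly colored. The same ``direction-flexibility'' is available at any splicing point chosen in the interior of the path.

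To merge $P=p_1\dots p_\ell$ with a cycle $C$, I would distinguish cases according to the colors of the $G$-edges between $V(P)$ and $V(C)$, all of which exist by completeness. The easiest case is when there is some $x\in V(C)$ for which the edge $p_\ell x$ has color different from the last edge of $P$: then extending $P$ by $p_\ell x$ and traversing $C$ from $x$ in the appropriate direction produces a strictly longer properly colored path and removes one cycle. If no such $x$ exists, every edge from $p_\ell$ to $V(C)$ matches the color of $p_{\ell-1}p_\ell$; I would then try the symmetric attempt at $p_1$, or insert $C$ into the interior of $P$ by deleting some edge $p_ip_{i+1}$ and rerouting through $C$ (choosing entry and exit vertices of $C$ so that the two new edges match the parity prescribed by the neighbors of $p_i,p_{i+1}$ on $P$), or, as a last resort, fuse two cycles $C_i,C_j$ of the factor into a single cycle via an analogous splicing of a $G$-edge between $V(C_i)$ and $V(C_j)$.

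The main obstacle will be to verify that, whenever $q\ge 1$, at least one of the local operations above applies. This calls for a careful case analysis combining (i) completeness of $G$, ensuring every pair of vertices is connected by an edge of some color, (ii) the ``both colors appear at every vertex'' property of properly colored cycles, and (iii) a parity/counting argument ruling out pathological color patterns in which all bipartite edge sets between certain parts of $F$ happen to be monochromatic. Once this case analysis is in place, each local operation can be carried out in $O(n^2)$ time by scanning the $O(n)$ candidate splicing positions, so the whole transformation is polynomial.
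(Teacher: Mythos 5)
The paper does not prove this statement at all: it is quoted verbatim as a known result of Bang-Jensen and Gutin~\cite{bang1997alternating}, so there is no in-paper argument to compare yours against. Judged on its own terms, your proposal is a reasonable high-level plan (the ``only if'' direction is indeed trivial, and absorbing cycles one at a time into the path is the natural strategy), but it is not a proof. You explicitly defer the crux --- ``the main obstacle will be to verify that, whenever $q\ge 1$, at least one of the local operations above applies'' --- and everything after that phrase is conditional on a case analysis you never carry out. That case analysis \emph{is} the theorem; without it the argument has no content beyond the easy direction.

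Two concrete points where the gap bites. First, your interior-splice operation is more constrained than you acknowledge: to absorb all of $V(C)$ you must traverse $C$ along a Hamiltonian path of $C$, so the exit vertex is forced to be a $C$-neighbor of the entry vertex, and (since a properly colored cycle on two colors alternates and has even length) the color of the last traversed edge is forced by the choice of entry vertex and direction. This leaves very few degrees of freedom, and you give no argument that the adversary cannot color the $V(P)$--$V(C)$ edges so that every endpoint extension and every interior splice is blocked simultaneously; ruling this out is exactly the ``parity/counting argument'' you wave at in item (iii). Second, your ``last resort'' of fusing two cycles $C_i,C_j$ into a single properly colored \emph{cycle} is itself a nontrivial claim: closing up imposes color constraints at four vertices at once, and it is not obvious (and not argued) that a complete 2-edge-colored graph always admits such a fusion, nor that the three operations together cover all cases. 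Until these steps are actually verified, the polynomial-time claim in the last sentence is also unsupported, since the algorithm's correctness rests on the guarantee that some improving operation is always found.
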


For our approximation algorithm for \maxpt in complete graphs, we will rely on the following result of Borozan et al.~\cite{borozan2019maximum}.

\begin{thm}[Borozan et al.~\cite{borozan2019maximum}]\label{thm:maxpt-partition}
Let $G=(V,E)$ be an edge-colored complete multigraph. Then, there exists an efficiently computable partition $V_1\cup V_2$ of $V$ such that \maxpt can be solved in polynomial-time in both $G[V_1]$ and $G[V_2]$. Furthermore, the optimal solution $F_1$ in $G[V_1]$ is a properly colored spanning tree of $G[V_1]$.
\end{thm}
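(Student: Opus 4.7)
The plan is to identify $V_1$ as the set of vertices that can serve as internal nodes in some rich properly colored tree, and $V_2$ as the residual set of ``color-restricted'' vertices whose local structure is simple enough that \maxpt on $G[V_2]$ reduces to a tractable case.

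First, I would define the partition based on local color constraints: a natural choice is to put $v \in V_2$ if the number of distinct colors appearing on edges incident to $v$ in $G$ is small, say at most two, and set $V_1 = V \setminus V_2$. Such a partition is computable in polynomial time by a single pass over the edge list. Vertices in $V_1$ then have color degree at least three, which is the color-flexibility one needs to extend a properly colored tree through them, while vertices in $V_2$ have such a restricted color palette that their incident edges essentially form a $2$-edge-colored structure.

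Second, for $G[V_1]$ I would construct a properly colored spanning tree by a greedy iterative extension. Since $G$ is a complete multigraph, every pair of vertices of $V_1$ is joined by at least one edge of $G[V_1]$. Starting from any properly colored edge and maintaining a properly colored tree $T$, at each step I would pick a vertex $u \in V_1 \setminus V(T)$ together with a leaf $w$ of $T$ and attach $u$ to $w$ by some edge whose color differs from the unique color currently used at $w$. The color-degree-at-least-three property of $u$, combined with the completeness of $G[V_1]$, guarantees that such a valid edge exists, possibly after a short exchange step that re-routes one or two tree edges whose colors block the desired attachment. The result is a properly colored spanning tree $F_1$ of $G[V_1]$, which is necessarily an optimal solution to \maxpt on $G[V_1]$ since it attains the trivial upper bound $|V_1|-1$.

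Third, for $G[V_2]$, the restricted local palette means that $G[V_2]$ is close to a $2$-edge-colored complete multigraph, so \maxpt reduces to the problem of finding a maximum properly colored tree in a $2$-edge-colored complete multigraph; via Theorem~\ref{thm:bang} and related path/cycle-factor techniques, this can be solved in polynomial time. The main obstacle will be the $G[V_1]$ step: showing that the greedy extension never gets irrevocably stuck requires either a careful ordering of vertex insertions or an explicit exchange lemma that re-routes conflicting tree edges to free up the correct colors at the chosen attachment point. Calibrating the threshold in the definition of $V_2$ so that both the spanning-tree construction in $V_1$ and the tractability of \maxpt in $V_2$ go through simultaneously is the central technical difficulty of the argument.
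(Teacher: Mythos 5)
This theorem is imported from Borozan et al.~\cite{borozan2019maximum} without proof, so your argument has to stand on its own, and it does not: the partition you propose is the wrong one. Putting into $V_1$ every vertex whose color degree in $G$ is at least three does not guarantee a properly colored spanning tree of $G[V_1]$. The only sufficient condition of this type is $\delta^c(G)\ge |V|/2$ (Cheng, Kano and Wang~\cite{cheng2020properly}, recalled in Section~\ref{sec:related}), and a constant lower bound on the color degree is genuinely insufficient. Concretely, take the complete simple graph on $V=A\cup B$ with $A=\{a_1,a_2,a_3\}$ and $|B|=n-3$ large; color every edge inside $B$ with color $1$, every edge from $B$ to $a_i$ with color $i+1$, and the triangle on $A$ so that each $a_i$ attains color degree three (e.g.\ $c(a_1a_2)=1$, $c(a_1a_3)=3$, $c(a_2a_3)=2$). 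Every vertex then has color degree at least three, so your rule yields $V_1=V$; yet any properly colored forest contains at most $\lfloor (n-3)/2\rfloor$ edges inside $B$ (a color-$1$ matching), at most one edge from $B$ to each $a_i$, and at most two edges inside $A$ --- far fewer than the $n-1$ edges of a spanning tree once $n\ge 12$. So the spanning tree $F_1$ you need simply does not exist, and no exchange lemma can rescue the greedy insertion; the obstacle you flag at the end is not a technicality but a false statement. (Had a constant color-degree threshold sufficed, one could decide \pst in complete graphs after peeling off the few low-color-degree vertices, contradicting the NP-completeness result of~\cite{borozan2019maximum} cited in Section~\ref{sec:related}.)

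The $V_2$ side has an independent gap: vertices of color degree at most two need not see the \emph{same} two colors, so $G[V_2]$ is in general not a $2$-edge-colored graph and Theorem~\ref{thm:bang} does not apply to it. Moreover, since you measure color degree in $G$ rather than in the induced subgraphs, neither the claimed flexibility of $V_1$-vertices nor the claimed rigidity of $V_2$-vertices survives the restriction to $G[V_1]$ and $G[V_2]$. The decomposition that actually works is of a different nature: one peels off vertices that are \emph{monochromatic towards the remaining graph}, iterating until every surviving vertex sees at least two colors \emph{inside the surviving part}; that local (and hereditary) condition on a complete multigraph is what forces a properly colored spanning tree of $G[V_1]$, while the elimination ordering of the peeled vertices is what renders \maxpt tractable on $G[V_2]$. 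A static threshold on the global color degree achieves neither.
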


\paragraph{Matroids}

For basic definitions on matroids and on matroid optimization, we refer the reader to~\cite{oxley2011matroid,frank2011connections}. A \emph{matroid} $\cM=(E,\cI)$ is defined by its \emph{ground set} $E$ and its \emph{family of independent sets} $\cI\subseteq 2^E$ that satisfies the \emph{independence axioms}: (I1) $\emptyset\in\cI$, (I2) $X\subseteq Y,\ Y\in\cI\Rightarrow X\in\cI$, and (I3) $X,Y\in\cI,\ |X|<|Y|\Rightarrow\exists e\in Y\setminus X\ s.t.\ X+e\in\cI$. Members of $\cI$ are called \emph{independent}, while sets not in $\cI$ are called \emph{dependent}. The \emph{rank} $r_\cM(X)$ of a set $X$ is the maximum size of an independent set in $X$.

The \emph{union} or \emph{sum} of $k$ matroids $\cM_1=(E,\cI_1),\dots,\cM_k=(E,\cI_k)$ over the same ground set is the matroid $\cM_\Sigma=(E,\cI_\Sigma)$ where $\cI_\Sigma=\{I_1\cup\dots \cup I_k\mid I_i\in\cI_i\ \text{for each $i\in[k]$}\}$. Edmonds and Fulkerson~\cite{edmonds1965transversals} showed that the rank function of the sum is $r_{\cM_\Sigma}(Z)=\min\{\sum_{i=1}^k r_i(X)+|Z-X|\mid X\subseteq Z\}$, and provided an algorithm for finding a maximum sized independent set of $\cM_\Sigma$, together with its partitioning into independent sets of the matroids appearing in the sum, assuming an oracle access\footnote{In matroid algorithms, it is usually assumed that the matroid is given by a \emph{rank oracle} and the running time is measured by the number of oracle calls and other conventional elementary steps. For a matroid $M=(E,\cI)$ and set $X\subseteq E$ as an input, a rank oracle returns $r_M(X)$.} to the matroids $M_i$.

For an undirected graph $G=(V,E)$, the \emph{matching matroid} of $G$ is defined on the set of vertices $V$ with a set $X\subseteq V$ being independent if there exists a matching $M$ of $G$ such that $X\subseteq V(M)$, that is, $M$ covers all the vertices in $X$. Determining the rank function of the matching matroid is non-obvious since it requires the knowledge of the Berge-Tutte formula on the maximum cardinality of a matching in a graph. Nevertheless, the rank of a set can still be computed in polynomial time, see~\cite{edmonds1965transversals} for further details.

\paragraph{MAX-SNP-hardness}

While studying APX problems that are not in PTAS, Papadimitriou and Yannakakis~\cite{papadimitriou1991optimization} showed that a large subset of APX problems are in fact equivalent in this regard, meaning that either all of them belong to PTAS, or none of them do. By relying on the fundamental result of Fagin~\cite{fagin1974generalized} stating that existential second-order logic captures NP, they introduced the complexity class MAX-SNP that is contained within APX, together with a notion of approximation-preserving reductions, called L-reductions.  Given two optimization problems $A$ and $B$ with cost functions $c_A$ and $c_B$, respectively, a pair $f,g$ of polynomially computable functions is called an \emph{L-reduction} if there exists $\alpha,\beta>0$ such that (1) if $x$ is an instance of problem $A$ then $f(x)$ is an instance of problem $B$ and $\opt_B(f(x))\leq\alpha\cdot\opt_A(x)$, (2) if $y$ is a solution to $f(x)$ then $g(y)$ is a solution to $x$ and $|\opt_A(x)-c_A(g(y))|\leq\beta\cdot|\opt_B(f(x))-c_B(y)|$. This idea led to the definitions of MAX-SNP-complete and MAX-SNP-hard problems. In a seminal paper, Arora, Lund, Motwani, Sudan, and Szegedy~\cite{arora1998proof} proved that MAX-SNP-hard problems do not admit PTAS unless P$=$NP, hence one can think of MAX-SNP-complete problems as the class of problems having constant-factor approximation algorithms, but no approximation schemes unless P$=$NP. For example, Metric TSP, MAX-SAT, and Maximum Independent Set in Degree Bounded Graphs are prime examples of MAX-SNP-hard problems.

An instance of \tsp consists of a complete graph on $n$ vertices with all edge lengths being either 1 or 2. The \emph{length-1-degree} of a vertex is its degree in the subgraph of edges of length 1.
The current best inapproximability result for (1,2)-TSP is due to Karpinski and Schmied~\cite{karpinski2012approximation}, giving a constant lower bound on the approximability of the problem in general.

\begin{thm}[Karpinski and Schmied~\cite{karpinski2012approximation}]\label{thm:3534}
\tsp is NP-hard to approximate within a factor strictly smaller than $535/534$. 
\end{thm}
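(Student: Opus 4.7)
The natural route is an L-reduction from a constraint satisfaction problem with strong, explicit inapproximability---most plausibly a bounded-occurrence variant of H\aa stad's MAX-E3-LIN-2, or a hybrid MAX-3-SAT instance of bounded degree, whose tight satisfiability gap is by now standard. The \tsp instance produced by the reduction is a complete graph on $n$ vertices with every edge length in $\{1,2\}$, and the length of any Hamiltonian cycle equals $n + (\text{number of length-$2$ edges on the tour})$. Hence the goal is to design gadgets so that unsatisfied source constraints translate linearly into extra length-$2$ edges on every tour.

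I would build three types of gadgets, all glued along length-$1$ edges. First, \emph{variable gadgets}: small cycles or path structures consisting entirely of length-$1$ edges that admit exactly two canonical Hamiltonian traversals, corresponding to the two truth assignments of the variable. Second, \emph{constraint gadgets}: subgraphs with designated connector vertices such that the gadget can be traversed without using any length-$2$ edge precisely when the associated constraint is satisfied by the traversal pattern of the incident variable gadgets, and otherwise is forced to use at least a fixed number of length-$2$ edges. Third, \emph{glue edges} of length $1$ attach each variable gadget to every constraint gadget in which the variable appears, consistent with the bounded-occurrence assumption. The L-reduction parameters $\alpha$ and $\beta$ are then extracted by counting: each unsatisfied source constraint contributes a fixed positive number of extra length-$2$ edges to any tour, while the base tour cost scales linearly with the instance size, so any $(1-\varepsilon,\, s+\varepsilon)$ gap for the source CSP transports into a multiplicative hardness gap for the optimum tour length.

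The main obstacle is the precise numerical optimization of the gadgets needed to pin the constant down to $535/534$. Looser ratios in earlier work (such as $220/219$ or $741/740$) arise from the same template but with less efficient gadgets or weaker source hardness. Reaching $535/534$ requires pairing the tightest available hardness for the source CSP with gadgets whose overhead---vertices and length-$1$ edges per constraint---is minimized, while the forced penalty of length-$2$ edges per unsatisfied constraint, measured relative to that overhead, is maximized. Verifying that the local gadget substitutions compose into a globally consistent optimal tour, and that no ``cheating'' tour can beat the canonical ones without paying at least the expected length-$2$ count, is where essentially all the combinatorial work lies: even a small saving in any single gadget immediately weakens the final ratio, so the bookkeeping must be tight throughout.
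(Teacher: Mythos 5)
This statement is not proved in the paper at all: it is quoted verbatim from Karpinski and Schmied~\cite{karpinski2012approximation}, so there is no internal argument to compare yours against. What can be assessed is whether your proposal constitutes a proof of the cited result, and it does not. You have correctly identified the general template by which such bounds are obtained (an approximation-preserving reduction from a bounded-occurrence CSP with an explicit satisfiability gap, with variable gadgets, constraint gadgets, and a linear accounting of length-$2$ edges against unsatisfied constraints), but every piece of mathematical content that would make this a proof is absent: you do not fix the source problem or its precise inapproximability constants, you do not construct any gadget, you do not prove that every near-optimal tour is ``canonical'' or pays the claimed penalty, and you do not carry out the arithmetic that produces $535/534$ rather than some other constant. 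You acknowledge this yourself when you write that the gadget optimization and the consistency/no-cheating analysis is ``where essentially all the combinatorial work lies'' --- but that work \emph{is} the proof.

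Concretely, the gap is that the number $535/534$ is not derivable from the template alone; it depends on the exact overhead and forced-penalty parameters of specific gadgets paired with a specific hardness constant for the source problem (Karpinski and Schmied work through a chain of explicit reductions from bounded-occurrence linear equations to get there). Without exhibiting those gadgets and verifying the two L-reduction inequalities with explicit $\alpha$ and $\beta$, the argument establishes only that \emph{some} constant inapproximability bound should be reachable this way, which was already known from Papadimitriou and Yannakakis. If your goal were to use this theorem the way the paper does, the correct move is simply to cite it; if your goal is to prove it, you would need to supply the entire gadget-level construction and the resulting calculation.
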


Theorem~\ref{thm:3534}, together with the result of Csaba, Karpinski and Krysta~\cite[Lemma 6.1]{csaba2002approximability} implies the following, stronger inapproximability bound. 

\begin{thm}[Csaba, Karpinski and Krysta\cite{csaba2002approximability}]\label{thm:Csaba}

For any $\varepsilon < 1/534$, there exists $0<d_0<1/2$ such that \tsp is NP-hard to approximate within a factor of $1+\varepsilon$ even for instances where the optimum is $n$ and the minimum length-1-degree is at least $d_0\cdot n$. 
\end{thm}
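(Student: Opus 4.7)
The plan is to derive Theorem~\ref{thm:Csaba} by combining Theorem~\ref{thm:3534} with Lemma 6.1 of Csaba, Karpinski and Krysta~\cite{csaba2002approximability} used as a black-box reduction. That lemma provides a polynomial-time, gap-preserving transformation which takes an arbitrary \tsp instance and produces a new \tsp instance on $n'=\Theta(n)$ vertices with two additional structural guarantees: the optimum tour length equals $n'$ (i.e.\ some Hamiltonian cycle consists entirely of length-$1$ edges), and every vertex has length-$1$-degree at least $d_0\cdot n'$ for some absolute constant $d_0\in(0,1/2)$. Because the reduction is an L-reduction with linear vertex blow-up and the multiplicative gap is preserved, an approximation algorithm achieving ratio $1+\varepsilon$ on the padded family would immediately yield a $(1+\varepsilon)$-approximation on arbitrary \tsp instances.

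Concretely, I would fix $\varepsilon<1/534$ and apply Theorem~\ref{thm:3534} to obtain a family $\{I_m\}$ of \tsp instances on which achieving approximation ratio smaller than $535/534$ is NP-hard. Applying the transformation from Lemma 6.1 to every $I_m$ produces a family $\{I'_m\}$ with $\mathrm{opt}(I'_m)=n'_m$ and minimum length-$1$-degree at least $d_0\cdot n'_m$. It then remains to verify that an efficient $(1+\varepsilon)$-approximation on $\{I'_m\}$ would, via the reverse map of the L-reduction, produce a $(1+\varepsilon)$-approximation on $\{I_m\}$, contradicting Theorem~\ref{thm:3534} for the chosen $\varepsilon$. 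This is exactly the content of the L-reduction property combined with the fact that both instances have linearly related optima.

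The only nontrivial step — and the one I would expect to be the main obstacle if the argument were to be carried out from scratch — is hidden inside Lemma 6.1 itself: the padding gadget (typically a linear-sized auxiliary set joined by length-$1$ edges to the original vertices and internally forming a dense length-$1$ subgraph) must simultaneously force the optimum down to $n'$, boost the minimum length-$1$-degree to a linear fraction of $n'$, and still faithfully encode the hardness of the original instance on the relevant vertex subset. Calibrating the size of this gadget so that $d_0$ stays strictly below $1/2$ and the approximation gap is not diluted is the delicate part. Since we are invoking Lemma 6.1 as a known tool, our proof reduces to the two routine steps above: instantiating the hard family from Theorem~\ref{thm:3534} and chaining it through the L-reduction.
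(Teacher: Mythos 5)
Your proposal matches the paper's own treatment: the paper does not prove Theorem~\ref{thm:Csaba} in detail either, but states it as the direct consequence of combining Theorem~\ref{thm:3534} with Lemma~6.1 of Csaba, Karpinski and Krysta~\cite{csaba2002approximability}, exactly as you do. The chaining through the gap-preserving reduction and the black-box use of Lemma~6.1 is the intended argument, so your proof is correct and takes essentially the same route.
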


De la Vega and Karpinski~\cite{de1999approximation} proved MAX-SNP-hardness of the problem under similar assumptions.

\begin{thm}[De la Vega and Karpinski~\cite{de1999approximation}]\label{thm:MAX-SNP}
    For any $0<d_0<1/2$, \tsp is MAX-SNP-hard even for instances where the minimum length-1-degree is at least $d_0\cdot n$.
\end{thm}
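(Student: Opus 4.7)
I would prove the theorem by giving an L-reduction from the standard MAX-SNP-hard version of \tsp to its restriction to instances with minimum length-1-degree at least $d_0 n$. The construction is a two-stage transformation, because a naive padding alone is insufficient: if one simply adjoins $m$ auxiliary vertices universally length-$1$-connected to everything else, an elementary analysis shows that the new instance $I'$ satisfies $\opt(I')=\max\{n_0+m,\opt(I)\}$, which collapses the optimum to the padding threshold whenever $\opt(I)<n_0+m$.

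To prevent this collapse, I first inflate the optimum of the base instance. Given a \tsp instance $I$ on $n_0$ vertices, I introduce $k=\Theta(n_0)$ ``weight'' vertices $W$ whose only length-$1$ neighbours will be the auxiliaries added in the second stage; every $W$--$V(I)$ edge and every $W$--$W$ edge has length $2$. Each weight vertex must still be visited, forcing a linear amount of length-$2$ cost, so the augmented instance $\tilde I$ on $n_0+k$ vertices satisfies $\opt(\tilde I)=\opt(I)+g(k)$ for a computable $g(k)=\Theta(k)$. Tuning $k$ raises the ratio $\opt(\tilde I)/(n_0+k)$ above $1/(1-d_0)$, putting $\opt(\tilde I)$ above the coming padding threshold.

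In the second stage I adjoin $m=\lceil d_0(n_0+k)/(1-d_0)\rceil$ auxiliary vertices $A$ forming an internal length-$1$ clique and joined to every vertex of $V(I)\cup W$ by length-$1$ edges, leaving all remaining distances equal to $2$. In the resulting instance $I'$ on $n=n_0+k+m$ vertices every auxiliary has length-$1$-degree $n-1$, every weight vertex has length-$1$-degree $m$, and every original vertex has length-$1$-degree at least $m$. The choice of $m$ gives $m\ge d_0 n$, and the bound $d_0<\tfrac{1}{2}$ is exactly what makes the degree inequality $m\ge d_0 n$ and the threshold inequality $\opt(\tilde I)\ge n_0+k+m$ simultaneously satisfiable. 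Contracting every maximal run of auxiliary vertices in a Hamiltonian cycle of $I'$ and analysing the resulting structure on $V(\tilde I)$ shows that $\opt(I')=\opt(\tilde I)=\opt(I)+g(k)$ in this regime. A feasible tour of $I'$ is decoded into a feasible tour of $I$ by deleting $W\cup A$ and closing each resulting gap with the cheapest available $I$-edge, which preserves the deviation from optimum up to a constant factor, yielding an L-reduction whose constants depend only on $d_0$.

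The main obstacle is to carry the MAX-SNP-gap through the first stage: the weight vertices $W$ must be configured so that every near-optimal tour of $\tilde I$ visits them in essentially the same manner, contributing an additive cost independent of how well $I$ is solved. Otherwise the inflation step would dilute the original hardness gap and the composed map would fail to be an L-reduction. A careful gadget design for $W$—for example, arranging the weights so that any optimal tour of $\tilde I$ must traverse them as one contiguous block—suffices to preserve the gap, and composing the two stages then delivers MAX-SNP-hardness of \tsp under the minimum length-1-degree condition for every $d_0<\tfrac{1}{2}$.
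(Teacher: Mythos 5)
The paper does not actually prove this statement: Theorem~\ref{thm:MAX-SNP} is imported as a black box from De la Vega and Karpinski~\cite{de1999approximation}, so your argument has to stand on its own as a reduction from unrestricted \tsp. Your construction and tuning are essentially sound: weight vertices $W$ whose only length-1 neighbors are the auxiliaries $A$, an auxiliary set joined by length-1 edges to everything, $m\approx\frac{d_0}{1-d_0}(n_0+k)$, and $k$ large enough that $k\ge m$, which is possible exactly because $d_0<1/2$. The genuine gap is that the one claim carrying all the mathematical weight --- that $\opt(I')=\opt(I)+g(k)$ and that near-optimal tours of $I'$ decode to near-optimal tours of $I$ --- is only asserted, and you yourself defer it to an unspecified ``careful gadget design'' for $W$. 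As written, the proof is incomplete precisely at its center, and your closing suggestion (forcing every near-optimal tour to traverse $W$ as one contiguous block) is neither achievable nor necessary.

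The hole can be closed with a counting argument and no gadget. Write $c(T)=|V(I')|+L(T)$, where $L(T)$ is the number of length-2 edges of a tour $T$ of $I'$. A maximal run of $W\cup A$-vertices in $T$ containing $b\ge 1$ weights and $a$ auxiliaries forces at least $b-a+1$ length-2 edges among its internal and boundary edges, since each weight needs two auxiliary neighbors, each auxiliary serves at most two weights, and the run's two attachment points lie in $V(I)$. Summing over runs and adding the length-2 $I$-edges used yields $L(T)\ge(k-m)+s+t_1$, where $s$ is the number of runs and $t_1$ the number of length-2 $I$-edges of $T$; hence deleting $W\cup A$ and closing the $s$ gaps gives a tour of $I$ of cost at most $n_0+t_1+s\le c(T)-2k$, which is the decoding bound with $\beta=1$. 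The matching upper bound (splice $W\cup A$ as a single path $a_1w_1a_2\cdots a_mw_m w_{m+1}\cdots w_k$ into a length-2 gap of an optimal tour of $I$) gives $\opt(I')=\opt(I)+2k$, up to an additive $1$ when $\opt(I)=n_0$; this additive slack violates the literal L-reduction inequality and has to be absorbed via $\opt(I)\ge n_0\to\infty$. The role of $k\ge m$ is that every auxiliary slot is consumed by weights, so auxiliaries cannot act as free bridges across length-2 edges of $I$ and the collapse $\opt(I')=n_0+k+m$ that you rightly fear cannot occur; without supplying this counting step, the reduction is not established.
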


In the \emph{Longest Path} problem (\longP), we are given a  directed graph $D=(V,A)$ on $n$ vertices and the goal is to find a directed path of maximum length in $D$. Björklund, Husfeldt and Khanna \cite{bjorklund2004approximating} showed the following. 

\begin{thm}[Björklund, Husfeldt and Khanna~\cite{bjorklund2004approximating}]\label{thm:longP}
\longP is NP-hard to approximate within a factor of $1/n^{1-\varepsilon}$ for any $\varepsilon >0$ even for instances containing a directed Hamiltonian path.
\end{thm}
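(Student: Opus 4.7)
My plan is to prove the theorem via a gap-producing reduction from an NP-hard problem to \longP combined with an amplification scheme. The goal is a polynomial-time reduction from a hard decision problem to directed graphs on $N$ vertices such that \textsc{yes} instances produce graphs containing a directed Hamiltonian path (longest path on $N$ vertices), while \textsc{no} instances produce graphs whose longest path has fewer than $N^\varepsilon$ vertices. A hypothetical $1/N^{1-\varepsilon}$-approximation algorithm for \longP, applied to the image of this reduction, would output a path of length at least $N^\varepsilon$ on \textsc{yes} instances (which contain a Hamiltonian path, matching the theorem's promise) but could not do so on \textsc{no} instances, yielding a polynomial-time distinguisher and contradicting $\mathrm{P}\neq\mathrm{NP}$.

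The first step is to secure an initial gapped hardness. I would start from a standard reduction from $3$-SAT (or another NP-hard problem) to directed Hamiltonian Path and refine it so that \textsc{no} instances yield graphs whose longest path uses at most $\gamma n$ vertices for a suitable parameter $\gamma<1$. The naive reduction only gives the trivial gap $(n,n-1)$, so boosting this requires composing with gap-amplifying gadgets, or exploiting stronger starting hardness for Hamiltonian Path in restricted graph classes. This step, while technical, is largely an adaptation of known constructions.

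The second step is amplification via a graph product. I would define an operation $G\otimes H$ on directed graphs satisfying (i) $|V(G\otimes H)|=|V(G)|\cdot|V(H)|$; (ii) whenever both $G$ and $H$ admit a directed Hamiltonian path, so does $G\otimes H$ (this is what preserves the ``instances contain a Hamiltonian path'' promise through the reduction); and (iii) the longest path in $G\otimes H$ is controlled by an appropriate function of the longest paths in the factors, so that iterating the product amplifies the gap from a weak initial form to the target $N^{1-\varepsilon}$. A natural candidate is a lexicographic-style product in which each vertex of $G$ is replaced by a copy of $H$ and inter-copy edges are arranged along the arcs of $G$; with a careful choice of iteration count, the initial gap compounds through repeated application.

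The main obstacle lies in designing this product so that all three properties genuinely hold at the required quantitative level. Preserving Hamiltonicity is easy for some products and bounding the longest path multiplicatively is easy for others, but simultaneously achieving both is delicate: standard constructions (Cartesian, tensor, strong product) typically fail one of the two requirements, or admit ``zigzag'' paths that traverse far more vertices than the factor bounds suggest. The technical heart of the argument is to show that every path in $G\otimes H$ decomposes into sub-paths each confined to a single copy of $H$, connected by transitions that correspond to a path in $G$, so that the longest-path length in the product is governed precisely by the longest-path lengths in the factors. Assembling such a product, and calibrating the number of iterations against the strength of the initial gap, is where the core combinatorial work of the proof resides.
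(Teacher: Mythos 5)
This statement is quoted from Björklund, Husfeldt and Khanna and the paper itself gives no proof, so your sketch has to be judged on its own terms. You correctly identify what must be proved --- a polynomial-time reduction producing digraphs on $N$ vertices that are Hamiltonian in the yes case and have longest path below $N^{\varepsilon}$ in the no case --- and graph products / self-improvement is indeed the right neighborhood of techniques. The problem is quantitative, and it is fatal to the plan as written. Your product satisfies $|V(G\otimes H)|=|V(G)|\cdot|V(H)|$, and the decomposition you describe (sub-paths confined to single copies of $H$, linked by transitions tracing a path in $G$) bounds the longest path of the product by the \emph{product} of the longest paths of the factors. Writing the no-case value of an instance on $n$ vertices as $n^{1-c}$, a $k$-fold application of such a product sends $(n,\,n^{1-c})$ to $(n^{k},\,n^{k(1-c)})=(N,\,N^{1-c})$: the gap exponent $c$ is \emph{invariant} under iteration. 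Starting from the trivial gap ($c\approx 1/(n\log n)$) or from the constant-factor gap of your first step ($n$ versus $\gamma n$, i.e.\ $c=\ln(1/\gamma)/\ln n=o(1)$), no number of iterations reaches $c=1-\varepsilon$; to get that out you would have to put a gap of the form $n$ versus $n^{\varepsilon}$ in, which is the theorem itself.

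There is a second, related obstruction: in any such substitution product, the number of inner copies a path can cross is governed by whether each intermediate copy admits \emph{some} path between its entry and exit terminals, and $s$--$t$ reachability is polynomial-time decidable, so it cannot carry NP-hardness; mere non-Hamiltonicity of a copy costs only one vertex of its contribution, which is where the useless $n^{2}$ versus $n^{2}-n+O(1)$ type bounds come from. A correct proof must arrange that the blow-up from base size $m$ to $N=m^{\Theta(1/\varepsilon)}$ keeps the no-case longest path bounded by $\mathrm{poly}(m)=N^{O(\varepsilon)}$ --- equivalently, that from any moderately long path in the big graph one can \emph{decode} a Hamiltonian path of the base instance. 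This sub-multiplicative behaviour, not the preservation of Hamiltonicity, is the crux, and your sketch neither supplies a mechanism for it nor identifies it as the requirement; the "calibrating the number of iterations" step you defer to cannot succeed for any calibration.
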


For the undirected counterpart of the problem, called \emph{Undirected Longest Path} (\longUP), de la Vega and Karpinski~\cite{de1999approximation} proved the following result.

\begin{thm}[De la Vega and Karpinski~\cite{de1999approximation}]\label{longpsnp}
For any $0<d_0<\frac{1}{2}$, {\sc undirected}-\longP is MAX-SNP-hard even for instances where the minimum degree is at least $d_0\cdot n$.
\end{thm}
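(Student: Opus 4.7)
The plan is to establish the result via an L-reduction from a known MAX-SNP-hard dense problem to \longUP on dense instances. A natural starting point is Theorem~\ref{thm:MAX-SNP}, which gives MAX-SNP-hardness of \tsp on instances whose minimum length-1-degree is at least $d'_0\cdot n$ for any $0<d'_0<1/2$. Given such a \tsp instance $I$ on $n$ vertices, I would take $G$ to be the subgraph of length-1 edges, so that $\delta(G)\geq d'_0 n$. Using the linear-forest reformulation recalled in Section~\ref{sec:related}, the optimum tour cost equals $2n-L^\star(G)$, where $L^\star(G)$ denotes the maximum number of edges in a linear forest of $G$. Thus it suffices to L-reduce the problem of computing $L^\star(G)$ on dense $G$ to \longUP on a dense graph $G'$.

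The core step is to build $G'$ on $n'=\Theta(n)$ vertices such that (i) the length of a longest path in $G'$ equals $L^\star(G)$ plus a computable constant, up to an $O(1)$ additive error, and (ii) $\delta(G')\geq d_0\cdot n'$. The idea I would try first is to attach to $G$ a controlled ``linker'' gadget: introduce a small number of auxiliary vertices whose role is to glue the components of any linear forest of $G$ into a single path, together with a dense but rigid ``padding'' whose only purpose is to inflate minimum degree without creating new long paths of its own. A concrete candidate is to pick a parameter $t=\Theta(n)$, add $t$ hub vertices each adjacent to all of $V(G)$ (but pairwise non-adjacent) to serve as forced connectors, and additionally glue on a suitably tuned structure that lifts the minimum degree to $d_0 n'$ while contributing only an $O(1)$ additive term to the length of the longest path. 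The non-adjacency of the hubs ensures that a long path must repeatedly dip into $G$, so that its $G$-edges form a linear forest of $G$ of size at least $|P|-O(t)$.

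The L-reduction conditions would then be checked as follows. For the first condition, both $\opt_{\tsp}(I)$ and $\opt_{\longUP}(G')$ are $\Theta(n)$, so some constant $\alpha$ works. For the second, from any path $P$ in $G'$ one recovers a linear forest $F\subseteq E(G)$ with $|F|\geq |P|-O(1)$, and hence a tour in $I$ of cost $2n-|F|$, giving an additive loss bounded by a constant times $|\opt_{\longUP}(G')-|P||$. Combining with Theorem~\ref{thm:MAX-SNP} and the Arora--Lund--Motwani--Sudan--Szegedy PCP theorem (referenced in Section~\ref{sec:prelim}), this yields MAX-SNP-hardness of \longUP on graphs with $\delta\geq d_0\cdot n$.

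The main obstacle is designing the linker/padding gadget so that the combinatorial correspondence between paths in $G'$ and linear forests in $G$ is tight up to $O(1)$ error, while \emph{simultaneously} allowing $d_0$ to be chosen arbitrarily close to $1/2$. Naive universal-vertex gadgets create shortcuts that let a Hamiltonian-like path ignore $L^\star(G)$ entirely (every extra hub of full degree boosts path length by a constant independent of $G$), whereas naïve sparse gadgets fail to produce a graph of the required density. Balancing the size of the hub set against the size of the density-boosting component, and verifying that the longest path has the claimed decomposition, is the delicate combinatorial heart of the argument.
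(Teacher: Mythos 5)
This statement is an imported result: the paper does not prove it, but cites it directly from de la Vega and Karpinski~\cite{de1999approximation}, so there is no internal proof to compare against. Judged on its own terms, your proposal is a plan rather than a proof, and the one step that actually carries the argument --- the linker/padding gadget --- is left unconstructed. You acknowledge this yourself in the final paragraph, but the difficulty is not a matter of ``balancing parameters''; the quantitative obstruction is real. A maximum linear forest of a dense graph $G$ can have $\Theta(n)$ components (a perfect matching already does), so any gadget that stitches an arbitrary linear forest into a single path needs $t=\Theta(n)$ connector vertices. With $t=\Theta(n)$ hubs each adjacent to all of $V(G)$, a path may alternate hub--vertex--hub--vertex and reach length $\approx 2t=\Theta(n)$ while using no edge of $G$ at all, so the longest path in $G'$ no longer encodes $L^\star(G)$. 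Moreover, even for paths that do use $G$-edges, your own recovery bound is $|F|\ge |P|-O(t)=|P|-\Theta(n)$, which violates the second L-reduction condition: an additive loss of $\Theta(n)$ cannot be charged against $|\opt_{\text{\longUP}}(G')-|P||$, which may be $O(1)$. So both properties (i) and (ii) that you require of $G'$ fail for the construction as described, and no alternative is supplied.

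A secondary issue: your first condition is checked only by observing that both optima are $\Theta(n)$, but an L-reduction also needs the solution-recovery map $g$ to be loss-preserving in the precise sense $|\opt_A(x)-c_A(g(y))|\le\beta\,|\opt_B(f(x))-c_B(y)|$ for \emph{every} feasible $y$, not just for near-optimal ones; with an unspecified gadget this cannot be verified. If you want a self-contained argument in the spirit of this paper, a more promising route is to restrict to source instances where the length-1 subgraph is Hamiltonian and dense (as in Theorem~\ref{thm:Csaba} and Corollary~\ref{cor:12}), so that the longest path optimum is exactly $n-1$ and a single path --- rather than an arbitrary linear forest --- already witnesses the optimum; then no stitching gadget is needed at all, and one only has to supply the MAX-SNP-hardness (not merely the NP-hardness of approximation) of the Hamiltonian dense source problem, which is precisely the content one would otherwise import from~\cite{de1999approximation}.
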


It is not difficult to see that this implies MAX-SNP-hardness of \longP\ too, even for instances where both the minimum in- and out-degree are at least $d_0\cdot n$.

\section{Hardness results}
\label{sec:hardness}

The aim of this section is to provide upper bounds on the approximability of \maxpf and \maxpt. We prove that \maxpf is MAX-SNP-hard for 2-edge-colored simple graphs as well as for 3-edge-colored simple complete graphs. Note that these imply analogous results for multigraphs and complete multigraphs, respectively. In the \emph{Maximum Linear Forest} problem (\maxlf), we are given an undirected graph $G=(V,E)$ and the goal is to find a linear forest of maximum size. In our proofs, we will rely on the following corollary of Theorem~\ref{thm:Csaba} and Theorem~\ref{thm:MAX-SNP}.

\begin{cor}\label{cor:12}
Let $0<\varepsilon < 1/534$ be an arbitrary constant. For any $0<d_0<\frac{1}{2}$, \maxlf\ is MAX-SNP-hard even for instances where the minimum degree is at least $d_0\cdot n$. Furthermore, there exists $0<d_0<1/2$ such that \maxlf is NP-hard to approximate within a factor of $1-\varepsilon$ even for simple Hamiltonian graphs with minimum degree at least $d_0\cdot n$.
\end{cor}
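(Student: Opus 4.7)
The plan is to give an L-reduction from \tsp to \maxlf based on the reformulation sketched in Section~\ref{sec:related}, and then derive the two claims by applying Theorems~\ref{thm:MAX-SNP} and~\ref{thm:Csaba}, respectively. Given a \tsp instance on a vertex set $V$ with $|V|=n$, I would take $G=(V,E)$ to be the simple spanning subgraph consisting of the length-$1$ edges and set $f(x)\coloneqq G$ as the \maxlf instance; the minimum vertex degree of $G$ then equals the minimum length-$1$-degree of the \tsp instance, so the degree-bound condition is preserved. Given a linear forest $F\subseteq E$ of size $x$, let $g(F)$ be any Hamiltonian cycle of the complete graph on $V$ obtained by extending $F$ with length-$2$ edges (closing each path of $F$ and threading any isolated vertices), so that $c_{\tsp}(g(F))=x+2(n-x)=2n-x$. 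Conversely, the length-$1$ edges of a Hamiltonian tour of length $L$ form a subgraph of $G$ with $2n-L$ edges and all vertex degrees at most $2$, so deleting at most one edge yields a linear forest in $G$ of size at least $2n-L-1$. Consequently
\[
2n-\opt_{\tsp}-1 \;\le\; \opt_{\maxlf} \;\le\; 2n-\opt_{\tsp},
\]
with equality on the right whenever $\opt_{\tsp}>n$.

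\textbf{MAX-SNP-hardness (first claim).} For this part I would fix $0<d_0<1/2$ and invoke Theorem~\ref{thm:MAX-SNP} to obtain MAX-SNP-hardness of \tsp on instances with minimum length-$1$-degree at least $d_0\cdot n$. On the gap instances witnessing this hardness one has $\opt_{\tsp}>n$, so the value correspondence tightens to $\opt_{\maxlf}=2n-\opt_{\tsp}$ and $(f,g)$ becomes an exact L-reduction with $\alpha=\beta=1$: indeed, for any linear forest $F$ of size $x$,
\[
|c_{\tsp}(g(F))-\opt_{\tsp}|=|(2n-x)-\opt_{\tsp}|=|x-\opt_{\maxlf}|=|\opt_{\maxlf}-c_{\maxlf}(F)|,
\]
and $\opt_{\maxlf}\le n-1\le\opt_{\tsp}$. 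Because the minimum vertex degree of $G$ coincides with the minimum length-$1$-degree of the \tsp instance, MAX-SNP-hardness transfers to \maxlf on instances with minimum degree at least $d_0\cdot n$.

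\textbf{Hamiltonian case (second claim).} Here I would apply Theorem~\ref{thm:Csaba} with the fixed $\varepsilon<1/534$ to obtain $d_0\in(0,1/2)$ and \tsp instances with $\opt_{\tsp}=n$ and minimum length-$1$-degree at least $d_0\cdot n$. The corresponding $G$ is simple, Hamiltonian, has minimum degree at least $d_0\cdot n$, and $\opt_{\maxlf}=n-1$. A purported $(1-\varepsilon)$-approximation for \maxlf would return a linear forest of size at least $(1-\varepsilon)(n-1)$, whose extension via $g$ is a Hamiltonian tour of length at most
\[
n+1+\varepsilon(n-1) \;\le\; (1+\varepsilon+1/n)\,n,
\]
and a standard padding argument forces $n$ to be large enough that $1+\varepsilon+1/n<1+\varepsilon'$ for some $\varepsilon'<1/534$, contradicting Theorem~\ref{thm:Csaba}.

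\textbf{Main obstacle.} The principal delicate point will be the unit gap between $\opt_{\maxlf}$ and $2n-\opt_{\tsp}$ when the \tsp instance admits a length-$1$ Hamiltonian cycle. For the inapproximability claim this is harmlessly $O(1/n)$ and is absorbed into the approximation factor; for MAX-SNP-hardness, however, it must be handled in the formal verification of the $\beta$-constant, either by observing that the gap instances underlying Theorem~\ref{thm:MAX-SNP} satisfy $\opt_{\tsp}>n$, or by refining $g$ to first attempt closing a Hamiltonian path with a length-$1$ edge of $G$ (which succeeds whenever $G$ contains a length-$1$ Hamiltonian cycle), so that $g$ maps optima to optima and the exact L-reduction above goes through.
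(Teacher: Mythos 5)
Your proposal is correct and follows essentially the same route as the paper: take the length-$1$ subgraph as the \maxlf instance, use the correspondence between linear forests of size $x$ and tours of length $2n-x$, and invoke Theorem~\ref{thm:Csaba} for the inapproximability claim and Theorem~\ref{thm:MAX-SNP} for MAX-SNP-hardness, with the same constants up to bookkeeping. The one place you go beyond the paper is the unit gap between $\opt_{\maxlf}$ and $2n-\opt_{\tsp}$ when the length-$1$ subgraph is Hamiltonian: the paper simply asserts $\opt=2n-\opt'$ and ignores this, so your flagging it is a genuine improvement, though your first patch (that the instances of Theorem~\ref{thm:MAX-SNP} satisfy $\opt_{\tsp}>n$) is not guaranteed by that theorem and your second (closing a Hamiltonian path with a length-$1$ edge) need not succeed for an arbitrary optimal path; a safe fix is to note that $|(2n-x)-\opt_{\tsp}|\le|\opt_{\maxlf}-x|+1\le 2\,|\opt_{\maxlf}-x|$ except when $x=\opt_{\maxlf}$, where $g(y)$ can be replaced by any tour of length $\opt_{\tsp}$ recovered from $y$, or simply take $\beta=2$ after the trivial case is handled.
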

\begin{proof}
By Theorem \ref{thm:Csaba}, for any $0<\varepsilon<1/534$ there exists $0<d_0<1/2$ such that \tsp is NP-hard to approximate within a factor of $1+\varepsilon$ even for instances where the optimum is $n$, i.e., when the subgraph of length-1 edges is Hamiltonian, and the minimum length-1-degree is at least $d_0\cdot n$. Let $G$ be such an instance of \tsp. We construct an instance $G'$ of \maxlf by taking the subgraph of $G$ consisting of length-1 edges. Note that the minimum degree of $G'$ is exactly the minimum length-1-degree of $G$. Then, any linear forest containing at least $(1-\varepsilon)n$ edges for some $0<\varepsilon<1$ can be extended to a Hamiltonian cycle of length at most $(1-\varepsilon) n + 2\varepsilon n = (1+\varepsilon n)$ by adding length-2 edges connecting the endpoints of the components. Furthermore, any Hamiltonian cycle of length at most $(1+\varepsilon)n$ must contain at least $(1-\varepsilon)n$ length-1 edges, forming a linear forest in $G'$. Hence, for any $\varepsilon<1/534$, it is NP-hard to find a linear forest with at least $(1-\varepsilon)n$ edges, which shows the second statement. 

Let \opt denote the minimum length of tour in $G$ and $\opt'$ denote the maximum size of a linear forest in $G'$. By the above argument, from a linear forest $F'$ of size $x$ in $G'$ we can create in polynomial time a tour $F$ of length $x+2 (n-x)= 2n-x$ in $G$, which defines the function $g$. Vice versa, a tour $F$ of length $2n-x$ in $G$ implies a linear forest $F'$ of size $x$ in $G'$. Therefore, $\opt = 2n-\opt'$, so $\opt' \le 2n \le 2\cdot \opt$, since $\opt \ge n$. 
Finally we have $|\opt - (2n-x)| = |-\opt'+x| = |\opt '-x|$. Hence, we have an L-reduction with polynomially computable functions $f,g$ (where $f$ is the deletion of the length-2 edges from $G)$ and $\alpha =2$, $\beta = 1$.
This shows 
the MAX-SNP-hardness of the problem. 
\end{proof}

Using Theorem \ref{thm:3534}, an analogous argument gives the following.

\begin{cor}\label{cor:betterinappr}
\maxlf is NP-hard to approximate within a factor strictly smaller than $533/534$.
\end{cor}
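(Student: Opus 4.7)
The plan is to adapt the L-reduction from the proof of Corollary~\ref{cor:12}, but now feed in the sharper inapproximability bound of Theorem~\ref{thm:3534} instead of Theorem~\ref{thm:Csaba}. As before, given a \tsp instance on $n$ vertices with length function taking values in $\{1,2\}$, let $G'$ be the (simple) graph obtained by keeping only the length-$1$ edges. Every linear forest $F'\subseteq G'$ of size $x$ extends to a Hamiltonian cycle of $G$ of length $x+2(n-x)=2n-x$ by using length-$2$ edges to join the endpoints of the components, and conversely every Hamiltonian cycle of $G$ contributes its length-$1$ edges as a linear forest. Denoting the optima by $\opt$ and $\opt'$ respectively, this gives the identity $\opt=2n-\opt'$.

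Now I would argue contrapositively. Suppose, for contradiction, that there is a polynomial-time algorithm producing a linear forest of size at least $\alpha\cdot\opt'$ for some constant $\alpha>533/534$. Applying the transformation above yields a tour of length
\[
2n-\alpha\cdot\opt'=2n-\alpha(2n-\opt)=(1-\alpha)\cdot 2n+\alpha\cdot\opt.
\]
Since any tour uses exactly $n$ edges each of length at least $1$, we have $\opt\ge n$, hence $2n\le 2\opt$, and therefore the produced tour has length at most $(2-\alpha)\opt$. By the choice of $\alpha$, $2-\alpha<2-533/534=535/534$, contradicting Theorem~\ref{thm:3534}.

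Formally, this is an L-reduction with $f$ being the deletion of length-$2$ edges and $g$ the extension of a linear forest to a Hamiltonian cycle, using $\alpha=2$ and $\beta=1$ as in Corollary~\ref{cor:12}; the only new ingredient is the arithmetic translating the $535/534$ lower bound for \tsp into the claimed $533/534$ lower bound for \maxlf. No step appears to pose an obstacle: everything is a direct consequence of the bijective correspondence $\opt=2n-\opt'$ together with the trivial inequality $\opt\ge n$.
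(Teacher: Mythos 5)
Your argument is correct and is exactly the route the paper intends: it re-runs the reduction of Corollary~\ref{cor:12} (keep the length-$1$ edges, use $\opt=2n-\opt'$ and $\opt\ge n$) with Theorem~\ref{thm:3534} supplying the $535/534$ bound, and the arithmetic $2-\alpha<535/534$ is right. The only (harmless, and also glossed over by the paper) caveat is that when the length-$1$ subgraph is itself Hamiltonian one has $\opt'=2n-\opt-1$ rather than exact equality, which shifts the bound by an additive constant and does not affect the asymptotic hardness conclusion.
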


\subsection{Inapproximability of \maxpf}
\label{sec:hardnesspf}

First we prove hardness of \maxpf in 2-edge-colored simple graphs.

\begin{thm}\label{thm:2snp}
For $2$-edge-colored simple graphs, \maxpf is MAX-SNP-hard. Furthermore, it is NP-hard to approximate within a factor strictly larger than $1601/1602$ even for instances containing a properly colored spanning tree. 
\end{thm}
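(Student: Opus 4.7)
The plan is to prove both claims via an L-reduction from \maxlf\ restricted to graphs of minimum degree at least $d_0\cdot n$ containing a Hamiltonian path. This variant is MAX-SNP-hard by the first part of Corollary~\ref{cor:12}, and NP-hard to approximate within any factor strictly smaller than $533/534$ by the second part of Corollary~\ref{cor:12} (equivalently, by Corollary~\ref{cor:betterinappr}). The L-reduction targeted will have parameters $\alpha=3$ and $\beta=1$, so the $(1-1/534)$-inapproximability of \maxlf\ transfers to the ratio $1-1/(3\cdot 534)=1601/1602$ for \maxpf.

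Given such a $G=(V,E)$, I would construct a $2$-edge-colored simple graph $G'=(V',E')$ by replacing each edge $e=uv$ of $G$ with a small gadget on a few new vertices, $2$-colored according to a chosen orientation of $G$. The critical design goals for the gadget are: (i)~a linear forest $L$ in $G$ of size $s$ lifts to a properly colored forest in $G'$ of size exactly $3s+c(n)$ for some function $c$ independent of $L$; (ii)~conversely, any properly colored forest in $G'$ of size $x$ projects to a linear forest in $G$ of size at least $(x-c(n))/3$; and (iii)~$G'$ contains a properly colored spanning tree, constructed by lifting the Hamiltonian path of $G$ through the gadgets and attaching the remaining gadget vertices appropriately. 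Since a properly colored forest in a $2$-edge-colored graph is automatically a linear forest---at most one red and one blue edge at every vertex---the analysis reduces to bookkeeping over how the two color slots at each original vertex of $G'$ are consumed by boundary edges of fully versus partially realized gadgets.

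The main obstacle is designing a gadget that simultaneously achieves the exact $\alpha=3$ blow-up and the $\beta=1$ correspondence. Simple constructions fail in either direction: a single-subdivision ($2$-edge) gadget yields only $\alpha=2$, since every gadget edge is incident to $V$ and consumes one color slot, bounding the optimum by $2|V|$; on the other hand, a naive $3$-edge subdivision introduces ``middle'' edges not incident to the original vertex set, which are globally unconstrained and inflate $\opt_{\maxpf}$ by $\Theta(|E|)=\Theta(n^{2})$, destroying the L-reduction. Overcoming this requires either sharing auxiliary vertices across multiple gadgets so that the seemingly free middle edges become globally constrained by proper-coloring conditions at the shared vertices, or a carefully balanced $3$-edge gadget in which the middle edge is effective only as part of a full realization. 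Once such a gadget is in place, MAX-SNP-hardness of \maxpf\ follows from the first statement of Corollary~\ref{cor:12} via the L-reduction, and the $1601/1602$-inapproximability follows from the second statement of Corollary~\ref{cor:12} (equivalently, Corollary~\ref{cor:betterinappr}) together with the presence of a properly colored spanning tree in $G'$.
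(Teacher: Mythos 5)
Your write-up is a proof plan rather than a proof: the entire argument hinges on a per-edge gadget with properties (i)--(iii), and you never exhibit one. Worse, your own discussion of the ``main obstacle'' shows that the two natural candidates fail (the $2$-edge subdivision caps the blow-up at $\alpha=2$; the $3$-edge subdivision creates $\Theta(n^2)$ middle edges that a properly colored forest can harvest for free, since the instances have minimum degree $d_0 n$ and hence $\Theta(n^2)$ edges, which destroys the bound $\opt_{\maxpf}\le\alpha\cdot\opt_{\maxlf}=O(n)$). The suggested fixes---``sharing auxiliary vertices across multiple gadgets'' or ``a carefully balanced $3$-edge gadget''---are not constructions, and it is far from clear that either can be realized in a \emph{simple} $2$-edge-colored graph while preserving both directions of the correspondence with $\beta=1$. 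Until a concrete gadget is given and both the lifting and projection bounds are verified for it, neither the MAX-SNP-hardness nor the $1601/1602$ bound is established.

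For comparison, the paper's reduction sidesteps the per-edge gadget problem entirely by working per vertex: it takes an all-red copy $G'$ and an all-blue copy $G''$ of $G$, and for each $i$ adds one new vertex $u_i$ joined to $v_i'$ by a blue edge and to $v_i''$ by a red edge. A linear forest $F$ is a union of paths, so its edges can be properly $2$-colored; routing the red edges into $G'$ and the blue ones into $G''$ and adding all $2n$ connector edges yields a properly colored forest of size $|F|+2n$, and conversely one shows any properly colored forest can be modified, without losing edges, to contain all connectors, after which contracting them recovers a linear forest. Thus $\opt'=\opt+2n$ exactly; each original edge contributes one edge (not three), there are no ``free'' middle edges, and on Hamiltonian instances $\opt'=3n-1\approx 3\opt$, which is precisely where the factor $3$ and hence $1601/1602$ come from (the L-reduction for MAX-SNP-hardness uses $\alpha=7$ via $\opt\ge n/3$). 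If you want to salvage your approach, I would recommend abandoning the edge-gadget idea and looking for a construction of this two-copies-plus-vertex-connector type.
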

\begin{proof}
We prove the statements by reduction from \maxlf. Consider an instance $G=(V,E)$ of \maxlf on $n$ vertices $\{v_1,\dots,v_n\}$. We construct an instance of \maxpf as follows. Let $G'$ and $G''$ be two copies of $G$, the edges of $G'$ being colored red and the edges of $G''$ being colored blue. For each vertex $v_i$ of $G$, let $v'_i$ be the copy of $v_i$ in $G'$ and $v''_i$ be the copy of $v_i$ in $G''$. For each $i\in[n]$, we add a vertex $u_i$ together with two new edges $v'_iu_i$ and $u_iv''_i$ having colors blue and red, respectively; see Figure~\ref{fig:reduc_from_tsp} for an example. The construction is polynomial and gives the function $f$.

We denote by $\hat{G}$ the graph thus obtained. Let \opt denote the maximum size of a linear forest in $G$ and $\opt'$ denote the maximum size of a properly colored forest in $\hat{G}$. We claim that $\opt'=\opt+2n$. Let $F$ be a linear forest in $G$. We create a properly colored forest $\hat{F}$ in $\hat{G}$ of size $|F|+2n$ as follows. First, we take a proper coloring of the edges of $F$ using colors red and blue. Note that such a coloring exists as $F$ is linear. Then, for each red edge $v_iv_j$ we add $v_i'v_j'$ to $\hat{F}$, and for each blue edge $v_iv_j$ we add $v_i''v_j''$ to $\hat{F}$. Finally, we add all the edges in $\{ u_iv_i',u_iv_i''\mid i\in [n]\}$ to $\hat{F}$. By the construction, we have $|\hat{F}|=|F|+2n$. Since each vertex had at most one red and one blue edge incident to it after coloring the edges of $F$, $\hat{F}$ is properly colored. Finally, $\hat{F}$ is a forest, as otherwise contracting the edges of the form $u_iv_i',u_iv_i''$ of a cycle $\hat{C}$ in $\hat{F}$ would result in a cycle $C$ in $F$, a contradiction. This implies $\opt'\geq \opt+2n$.

For the other direction, let $\hat{F}$ be a properly colored forest of size in $G'$. First, we create a properly colored forest $\hat{F}'$ such that $|\hat{F}'|\ge |\hat{F}|$ and $E_u=\{ u_iv_i',u_iv_i''\mid i\in [n]\} \subseteq \hat{F}'$. This is achieved by adding the edges of $E_u$ one by one. By the construction, whenever an edge $u_iv_i'$ is added to any properly colored forest in $\hat{G}$ then the forest does not contain any adjacent edges having the same color. Therefore, in order to maintain a properly colored forest, it suffices to delete at most one edge from a cycle that $u_iv_i'$ possibly creates, and the size of the forest does not decrease. Furthermore, if $u_iv_i'$ creates a cycle, then there must be another edge in the cycle incident to $v_i'$ which can be deleted, hence we never have to delete an edge in $E_u$ throughout. By similar arguments, edges of the type $u_iv_i''$ can also be added to the solution. Clearly, this transformation can be performed in polynomial time for any properly colored forest of $\hat{G}$. Therefore, assume that $\hat{F}$ is a properly colored forest such that $E_u\subseteq \hat{F}$. Then, contracting the edges in $E_u$ results in a forest $F$. Furthermore, $F$ is linear since each $v_i'$ and $v_i''$ had at most one incident not in $E_u$. That is, $F$ is a linear forest in $G$ of size $|F|=|\hat{F}|-2n$. This implies $\opt\geq\opt'-2n$.

By Corollary \ref{cor:12}, \maxlf is MAX-SNP-hard even if the minimum degree is at least $\frac{n}{3}$, hence we may assume that $\opt\ge 1/3\cdot n$.
We conclude that $\opt'=\opt+2n \le 7\cdot \opt$.  Furthermore, by the above argument, if we can find a properly colored forest $\hat{F}$ of size $x+2n$ in $\hat{G}$, then we can create a linear forest $F$ of size $x$ in $G$ in polynomial time, which defines the function $g$. Finally, we have that $| |F| -\opt | = | |\hat{F}|-2n-\opt|=||\hat{F}|-\opt'|$. Hence, we have constructed an L-reduction with $\alpha = 7$, $\beta = 1$, proving MAX-SNP-hardness.

For the second half, assume that $\opt =n-1$ and hence $\opt'=3n-1$, that is, $\hat{G}$ contains a properly colored spanning tree. By Corollary~\ref{cor:12}, \maxlf is NP-hard to approximate in such instances. Therefore, if there exists a $(1-\varepsilon )$-approximation algorithm for \maxpf for 2-edge-colored simple graphs containing a properly colored spanning tree, then it gives a properly colored forest of size at least $(1-\varepsilon )(3n-1)$ in $\hat{G}$. Using the argument above, this implies a linear forest in $G$ of size at least $(1-3\varepsilon )(n-1) -2\varepsilon$, and thus gives a $(1-3\varepsilon')$-approximation algorithm for \maxlf for any $\varepsilon'>\varepsilon$. By Corollary \ref{cor:12}, for any $\varepsilon <1/1602$, it is NP-hard to approximate \maxpf withing a factor of $(1-\varepsilon)$ even in 2-edge-colored simple graphs containing a properly colored spanning tree.
\end{proof}

\begin{figure}[t!]
\centering
\begin{subfigure}[t]{0.48\textwidth}
    \centering
    \includegraphics[width=0.5\textwidth]{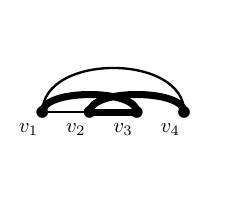}
    \caption{An instance of \maxlf. Thick edged form a maximum linear forest $F$.}
    \label{fig:33a}
\end{subfigure}\hfill
\begin{subfigure}[t]{0.48\textwidth}
    \centering
    \includegraphics[width=0.5\textwidth]{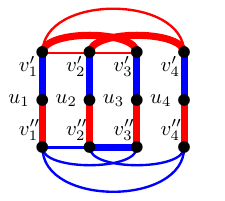}
    \caption{The corresponding instance of \maxpf in a 2-edge-colored simple graph. Thick edges form a maximum properly colored forest $\hat{F}$.}
    \label{fig:33b}
\end{subfigure}\hfill
\caption{Illustration of the proof of Theorem~\ref{thm:2snp}.}
\label{fig:reduc_from_tsp}
\end{figure}

For 3-edge-colored complete simple graphs, we get a slightly worse upper bound on the approximability of the problem.

\begin{thm}\label{thm:3snp}
For $3$-edge-colored complete, simple graphs, \maxpf is MAX-SNP-hard. Furthermore, it is NP-hard to approximate within a factor strictly larger than $3203/3204$ even for instances containing a properly colored spanning tree. 
\end{thm}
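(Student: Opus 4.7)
The plan is to reduce from \maxlf, mirroring the strategy of Theorem~\ref{thm:2snp} but designing a more intricate gadget so that the output is a \emph{complete} simple graph on three colors. The naïve attempt of taking the graph $\hat G$ from Theorem~\ref{thm:2snp} and coloring every missing edge with the third color fails, because those edges can then be used to bridge the components of any forest into a spanning tree, decoupling the optimum from $\opt$. The construction must therefore ensure that the third color, while enabling completeness, is structurally limited in its contribution to the optimal properly colored forest.

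Concretely, I would replace every vertex $v_i\in V(G)$ by a small gadget $\Gamma_i$ on a constant number of vertices carrying a \emph{canonical} set of forced edges that use all three colors and saturate most of the color slots at each gadget vertex. For each edge $v_iv_j\in E(G)$ a single cross-gadget edge is placed between specific vertices of $\Gamma_i$ and $\Gamma_j$ in the unique free color available at those endpoints, playing the role of the $G'$ and $G''$ copies in Theorem~\ref{thm:2snp}. All remaining edges of the would-be complete graph are then added in whichever of the three colors conflicts with a canonical edge at one of their endpoints, ensuring that no properly colored forest containing the canonical edges can also contain any such edge. The canonical lifting of a properly $2$-colored linear forest $F$ then yields a properly colored forest in the resulting graph $\bar G$ of size $|F|+cn$, where $c$ is the number of canonical edges per gadget.

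The analysis proceeds in two directions, following the template of Theorem~\ref{thm:2snp}. In the easy direction, a linear forest $F$ in $G$ lifts to a properly colored forest of size $|F|+cn$ in $\bar G$. In the reverse direction, any properly colored forest in $\bar G$ can be modified in polynomial time, without decreasing its size, into one that contains every canonical edge: whenever a canonical edge is missing, either adding it creates no cycle and no color conflict, or it closes a single cycle from which a non-canonical edge can be dropped (exactly as with the edges of $E_u$ in Theorem~\ref{thm:2snp}). Contracting the canonical edges then yields a linear forest in $G$. Combined with Corollary~\ref{cor:12} (which gives $\opt\ge n/3$), this produces an L-reduction with constants $\alpha = 3c+1$ and $\beta = 1$, proving MAX-SNP-hardness. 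For the second statement, when $G$ contains a Hamiltonian path (so $\opt = n-1$) the lifted forest is already a properly colored spanning tree of $\bar G$; tracking the loss through the contraction step as in Theorem~\ref{thm:2snp} and applying Corollary~\ref{cor:12}'s bound of $1 - 1/534$ yields the ratio $1 - 1/3204 = 3203/3204$ after calibrating the gadget so that the effective blow-up factor equals $6$.

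The main obstacle is engineering the gadget so that (i) the three colors jointly cover every missing edge of $\bar G$, (ii) non-canonical edges become genuinely useless once the canonical ones are fixed, and (iii) the constant $c$ and the spanning-tree size align to produce the sharper $3203/3204$ threshold rather than a weaker constant. This amounts to a small-scale optimization over possible vertex gadgets; the remainder of the argument is a straightforward adaptation of the exchange argument already used in the proof of Theorem~\ref{thm:2snp}.
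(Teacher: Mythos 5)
There is a genuine gap here, and it starts with your premise. You dismiss the ``naïve attempt'' of completing the graph with a third color on the grounds that the new edges could bridge components arbitrarily and decouple the optimum from $\opt$. But they cannot: all completion edges carry the \emph{same} color, so in any properly colored forest they form a matching and hence number at most half the vertex count. This is exactly the paper's construction: take the $2$-edge-colored \maxpf instance $G$ on $n$ vertices from Theorem~\ref{thm:2snp}, add $n$ fresh vertices, and complete to a simple graph on $2n$ vertices using a third color for every missing edge. Any properly colored forest contains at most $n$ third-color edges (a matching on $2n$ vertices), and one can always add exactly $n$ of them (match $v_i$ to the $i$-th new vertex), so $\opt'=\opt+n$ precisely; deleting the third-color edges recovers a feasible forest in $G$. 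With $\opt\ge n/3$ this is an L-reduction with $\alpha=4$, $\beta=1$, and composing the factor-$2$ loss ($2n-1$ versus $n-1$) with the $1/1602$ threshold of Theorem~\ref{thm:2snp} gives $3203/3204$.

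Your alternative — a constant-size vertex gadget with ``canonical'' edges — is never actually constructed, and the requirements you place on it are in tension. To make every non-canonical edge conflict at a given gadget vertex, the canonical edges must occupy all three color slots there; but then there is no ``unique free color'' left for the cross-gadget edge that is supposed to encode $v_iv_j\in E(G)$. If instead only two slots are saturated, the $\Theta(n)$ filler edges in the remaining color at that vertex are not blocked locally and you must control them at their other endpoints, which you do not address. Since the gadget is the entire content of the reduction (the constants $\alpha=3c+1$ and the ``effective blow-up factor $6$'' all hinge on it), the proof as proposed is incomplete. The exchange and contraction arguments you borrow from Theorem~\ref{thm:2snp} are fine in spirit, but they have nothing to act on until the construction exists.
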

\begin{proof}
By the proof of Theorem \ref{thm:2snp}, \maxpf is MAX-SNP-hard even for 2-edge-colored simple graphs admitting a solution of size at least $n/3$. Let $G$ be such an instance of \maxpf on $n$ vertices $\{ v_1,\dots,v_n\}$. We create another instance $G'$ of \maxpf as follows. First, we take a copy of $G$ and keep the color of the edges. Then, we add $n$ new vertices $\{ v_{n+1},\dots,v_{2n}\}$ to $G$. Finally, we make the resulting graph complete on $2n$ vertices by adding an edge $v_iv_j$ with a third color whenever $v_iv_j\notin E$ for $i,j \in [2n],i\neq j$; see Figure~\ref{fig:34} for an example. This defines the function $f$, which is clearly polynomial time computable.

Let \opt and $\opt'$ denote the maximum size of a properly colored spanning tree in $G$ and $G'$, respectively. We claim that $\opt'=\opt+n$. Let $F$ be a properly colored forest in $G$. Then adding the edges $v_iv_{n+i}$ for $i\in [n]$ results in a properly colored forest $F'$ of $G'$ with $|F'|=|F|+n$. For the other direction, take a properly colored forest $F'$ of $G'$. Note that $F'$ contains at most $n$ edges not in $E$ since those have the same color and hence necessarily form a matching. We define $g$ by deleting these edges from $F'$, which results in a properly colored forest $F$ in $G$ with $|F|\ge|F'|-n$. 

Then, we have $\opt ' \le 4 \cdot \opt$, since by the proof of Theorem~\ref{thm:2snp}, we may assume $\opt \ge \frac{n}{3}$. Finally, we have $| |F| -\opt |= | |F'|-n - (\opt ' -n)|=| |F'|-\opt'|$. Hence, we have an L-reduction with $\alpha=4$, $\beta =1$, proving MAX-SNP-hardness.

For the second half, assume further that the instance $G$ that we reduce from contains a properly colored spanning tree. Recall that, by Theorem \ref{thm:2snp}, \maxpf is NP-hard to approximate even for such instance. Then we have $\opt' =\opt +n =2n-1$. Suppose that we have an $(1-\varepsilon )$-approximation algorithm for \maxpf in 3-edge-colored complete simple graphs containing a properly colored spanning tree. Then we can find a properly colored forest in $G'$ of size at least $(1-\varepsilon )(2n-1)$. By deleting the edges of the forest not in $E$, we get a properly colored forest in $G$ of size at least $(1-2\varepsilon )(n-1)-\varepsilon$. Hence, for any $\varepsilon '>\varepsilon$, an $(1-\varepsilon )$-approximation algorithm for \maxpf in 3-edge-colored complete simple graphs containing a properly colored spanning tree implies an $(1-2\varepsilon')$-approximation algorithm for \maxpf in 2-edge-colored simple graphs containing a properly colored spanning tree. By Theorem~\ref{thm:2snp}, for any $\varepsilon<1/3204$, it is NP-hard to approximate \maxpf withing a factor of $(1-\varepsilon)$ even in 3-edge-colored complete simple graphs containing a properly colored spanning tree.
\end{proof}

We also show a constant upper bound for the approximability of \maxpf in 2-edge-colored multigraphs. 

\begin{thm}
For $2$-edge-colored multigraphs, \maxpf is NP-hard to approximate within a factor strictly larger than $533/534$.
\end{thm}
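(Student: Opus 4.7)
The plan is to give an approximation-preserving reduction from \maxlf to \maxpf on $2$-edge-colored multigraphs that incurs \emph{no} multiplicative loss, so that the $533/534$ hardness bound of Corollary~\ref{cor:betterinappr} transfers verbatim. The essential reason this is possible in the multigraph setting---unlike the simple-graph reduction of Theorem~\ref{thm:2snp}, whose gadget of $n$ extra vertices and $2n$ extra edges forced $\alpha=7$ in the L-reduction and hence degraded the hardness constant---is that parallel edges let us encode a linear forest directly by choosing one colored copy of each selected edge, with no auxiliary structure.

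Given an instance $G=(V,E)$ of \maxlf, I would construct $\hat G=(V,\hat E)$ on the same vertex set by replacing every edge $e\in E$ with two parallel copies $e_r,e_b$ colored red and blue respectively. The central claim to establish is that the maximum size of a linear forest in $G$ equals the maximum size of a properly colored forest in $\hat G$. For the forward direction, a linear forest $F\subseteq E$ has each component a path, so its edges admit a proper $2$-edge-coloring using red and blue; replacing each edge by the correspondingly colored copy yields $\hat F\subseteq\hat E$ of size $|F|$ that is properly colored by construction and remains a forest because at most one copy of each original edge is chosen and $F$ itself is acyclic. For the backward direction, given a properly colored forest $\hat F\subseteq\hat E$, note that $\hat F$ cannot contain both copies of any edge $e\in E$, since these would form a $2$-cycle; projecting $\hat F$ onto $E$ therefore produces a set $F$ with $|F|=|\hat F|$. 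Any cycle in $F$ would lift to a cycle in $\hat F$, so $F$ is acyclic; and the proper coloring forces at most one red and one blue edge incident to each vertex, so $d_F(v)\le 2$ and hence $F$ is linear.

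Since the two optima coincide and both directions of the correspondence are polynomially computable, an $\alpha$-approximation for \maxpf in $2$-edge-colored multigraphs immediately yields an $\alpha$-approximation for \maxlf with the same factor. Invoking Corollary~\ref{cor:betterinappr} gives the claimed NP-hardness of approximation to within any factor strictly larger than $533/534$. I expect no real obstacle: the only care needed is in verifying acyclicity and linearity after projection, and the key observation---that allowing parallel edges eliminates the additive $2n$ overhead of the simple-graph construction---is essentially immediate from the construction.
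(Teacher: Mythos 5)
Your proposal is correct and essentially matches the paper's proof: the paper also doubles each edge into a red and a blue parallel copy and uses the observation that a properly colored forest in a $2$-edge-colored multigraph is exactly a linear forest, the only difference being that you route the hardness through Corollary~\ref{cor:betterinappr} for \maxlf while the paper reduces directly from \tsp and redoes the same $533/534$ versus $535/534$ arithmetic inline. Since the hard \maxlf instances are precisely the length-$1$ subgraphs of hard \tsp instances, the two arguments produce the same instances and the same bound.
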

\begin{proof}
The proof is by reduction from \tsp. Consider an instance of \tsp, that is, a complete simple graph $G$ on $n$ vertices with all edge lengths being either 1 or 2. We construct an instance of \maxpf as follows. Take the subgraph of edges of length 1, and replace each of its edges by two parallel copies, one being colored red and the other being colored blue. Let $G'$ denote the 2-edge-colored multigraph thus obtained. For ease of discussion, we denote by $\opt$ the minimum length of a Hamiltonian cycle in $G$ and by $\opt'$ the maximum size of a properly colored forest in $G'$. Clearly, $\opt\geq\opt'$.

Assume for a contradiction that \maxpf has a strictly better than $533/534$-approximation algorithm for 2-edge-colored multigraphs, and let $F'$ the output of the algorithm when applied to $G'$. Since $G'$ is a 2-edge-colored graph, $F'$ is a linear forest. The original copies of the edges appearing in $F'$ form a linear forest of $G$ that can be extended to a Hamiltonian cycle $C$ of total length $2n - |F'|$ by adding $n -|F'|$ edges of length 2 to it. An analogous argument shows that $\opt \ge 2n - \opt'$. Therefore, the total length of $C$ can be bounded as $2n-|F'| < 2n-533/534\cdot\opt' = 2n -\opt' + 1/534\cdot\opt' \le \opt + 1/534\cdot \opt' \le 535/534\cdot\opt$, contradicting Theorem~\ref{thm:3534}.
\end{proof}

\begin{figure}[t!]
\centering
\begin{subfigure}[t]{0.48\textwidth}
    \centering
    \includegraphics[width=0.5\textwidth]{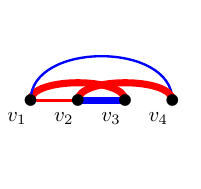}
    \caption{An instance of \maxpf in a 2-edge-colored simple graph. Thick edged form a maximum properly colored forest $F$.}
    \label{fig:34a}
\end{subfigure}\hfill
\begin{subfigure}[t]{0.48\textwidth}
    \centering
    \includegraphics[width=0.5\textwidth]{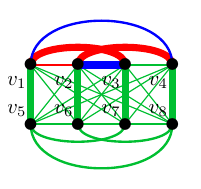}
    \caption{The corresponding instance of \maxpf in a 3-edge-colored complete simple graph. Thick edges form a maximum properly colored forest $F'$.}
    \label{fig:34b}
\end{subfigure}\hfill
\caption{Illustration of the proof of Theorem~\ref{thm:3snp}.}
\label{fig:34}
\end{figure}

\subsection{Inapproximability of \maxpt}

In general, \maxpt cannot be approximated within a polynomial factor even if the graph assumed to contain a properly colored spanning tree.

\begin{thm}\label{thm:maxpt-2} 
For 2-edge-colored simple graphs, \maxpt is MAX-SNP-hard. Furthermore, it is NP-hard to approximate within a factor of $1/n^{1-\varepsilon}$ for any $\varepsilon >0$ even for instances containing a properly colored spanning tree.
\end{thm}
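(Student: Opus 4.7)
My plan is a reduction from the directed \longP to \maxpt on $2$-edge-colored simple graphs. The starting observation is that in such a graph any properly colored tree has maximum degree at most $2$, as each vertex is incident to at most one red and one blue edge of the tree. Hence \maxpt restricted to $2$-edge-colored simple graphs is exactly the problem of finding the longest alternating path, and a properly colored spanning tree is the same thing as a properly colored Hamiltonian path.

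Given a directed graph $D = (V, A)$ on $n$ vertices, I would build a $2$-edge-colored simple graph $G$ on $n' = 2n$ vertices via the standard vertex-split gadget: for each $v \in V$ create two vertices $v^-$ and $v^+$ joined by a red edge, and for each arc $(u, v) \in A$ create a blue edge $u^+ v^-$. The graph $G$ is simple as long as $D$ has no loops.

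The core of the proof is the correspondence $\opt_G = 2 \opt_D + 1$, where $\opt_D$ counts arcs of a longest directed path in $D$ and $\opt_G$ counts edges of a longest properly colored path in $G$. One direction is immediate: a directed path $v_0, v_1, \ldots, v_k$ in $D$ lifts to the alternating path $v_0^- v_0^+ v_1^- v_1^+ \cdots v_k^- v_k^+$ of length $2k+1$ in $G$. Conversely, given a properly colored path $P$ in $G$, orient $P$ so that its first edge is red (dropping at most one endpoint if necessary). Alternation then forces every red edge of $P$ to be traversed from some $v^-$ to $v^+$ and every blue edge to realise a forward arc $(v, w) \in A$ as $v^+ w^-$; projecting via $v^\pm \mapsto v$ yields a directed walk in $D$, which is in fact a directed path because each pair $\{v^-, v^+\}$ is visited at most once. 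A short count then gives a directed path on at least $\lfloor |P|/2\rfloor$ arcs.

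Both conclusions of the theorem follow at once. If $D$ has a directed Hamiltonian path, its lift is an alternating path on all $n' = 2n$ vertices of $G$, so $G$ contains a properly colored spanning tree; a $1/(n')^{1-\varepsilon}$-approximation for \maxpt would then produce a directed path on $\Omega(n^{\varepsilon})$ arcs in $D$, contradicting Theorem~\ref{thm:longP} after adjusting $\varepsilon$. For MAX-SNP-hardness, I would start from the dense-instance MAX-SNP-hardness of directed \longP (the remark following Theorem~\ref{longpsnp}), where $\opt_D \geq d_0\, n$; then $\opt_G \leq 3\,\opt_D$ and the projection step gives $|\opt_D - |P_D|| \leq \tfrac{1}{2}|\opt_G - |P_G||$, which is an L-reduction with $\alpha = 3$ and $\beta = 1/2$. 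The main delicate point I expect to handle carefully is the converse direction of the correspondence: verifying that alternation really forces a consistent orientation of the red edges so that the projection is a genuine directed path and not merely a walk, and dealing with parity boundary cases when $|P|$ is even.
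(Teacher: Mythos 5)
Your proposal is correct and follows essentially the same route as the paper: the identical vertex-split gadget ($v^{in}/v^{out}$ joined by a red edge, arcs as blue edges), the correspondence $\opt'=2\opt+1$, and an L-reduction with $\alpha=3$, $\beta=1/2$ from the dense-instance hardness of \longP. The one delicate point you flag is real but harmless: if both $v^-$ and $v^+$ occur as the two endpoints of the alternating path, the projection is a directed cycle rather than a path, and the paper resolves this exactly as you would expect, by deleting one arc, which still yields the bound $\lfloor |P|/2\rfloor$.
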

\begin{proof}
We prove by reduction from \longP, which does not admit a $1/n^{1-\varepsilon}$-approximation algorithm for any $\varepsilon >0$, unless $P=NP$; see Theorem \ref{thm:longP}. Let $D=(V,A)$ be an instance of \longP on $n$ vertices $\{v_1,\dots,v_n\}$. We create an instance $G'=(V',E')$ of \maxpt with colors red and blue as follows. For each vertex $v_i\in V$, we add $2$ vertices $\{v_i^{in},v_i^{out}\}$ to $V'$. For each $v_iv_j\in A$, we add a blue edge $v_i^{out}v_j^{in}$ to $E'$. Finally, for every vertex $v_i\in V$, we add a red edge $v_i^{in}v_i^{out}$ to $E'$. Let $n'=|V'|=2n$. This defines the function $f$, which is clearly polynomial. Let \opt denote the maximum length of a directed path in $D$, and let $\opt'$ denote the maximum size of a properly colored tree in $G$. Since $G'$ is colored using two colors, a properly colored tree is a path with alternating edge colors. 

Let $P$ be a longest directed path in $D$, and let $p-1$ denote the number of its edges. First we show that $G'$ admits a path of length $2p-1$ with alternating edge colors. Indeed, for the path $P=\{ v_{i_1}v_{i_2}\dots v_{i_p}\}$ in $D$, we can create an alternating path of length $2p-1$ by taking $P'=\{ v_{i_1}^{in}v_{i_1}^{out}v_{i_2}^{in}v_{i_2}^{out}\dots v_{i_p}^{in}v_{i_p}^{out}\}$. This implies $\opt'\geq 2\cdot \opt+1$.

For the other direction, given an alternating path $P'$ in $G'$ of length $p'-1$, then we can create a directed path $P$ in $D$ of length at least $\lceil \frac{p'-2}{2} \rceil$. To see this, take any alternating path $P'$ in $G$ and let $v_i^{in}v_i^{out}\in P'$. Then $v_i^{in}v_i^{out}$ is either followed by a copy $v_i^{out}v_j^{in}$ of an original edge $v_iv_j$, or it is the last edge of the path. Similarly, $v_i^{in}v_i^{out}$ is either preceded by an edge $v_j^{out}v_i^{in}$, or it is the first edge of the path. If at least one of the first and last edges of the path is of the form $v_i^{in}v_i^{out}$, then it follows that the two endpoints are copies of different vertices. Since there is at most one edge of the form $v_j^{out}v_i^{in}$ ($i\ne j$) incident to $v_i^{in}$ and at most one of the form $v_i^{out}v_j^{in}$ ($i\ne j$) incident to $v_i^{out}$, it follows that contracting the edges of the form $v_i^{in}v_i^{out}$ of $P'$ results either a directed path or a directed cycle in $D$. By the above, the resulting subgraph has at least $\frac{p'}{2}$ edges and is a directed path or a cycle, or it has at least $\frac{p'-2}{2}$ edges and it is a directed path. In both cases, we get a directed path $P$ having length at least $\lceil \frac{p'-2}{2}\rceil$. This defines the function $g$. Therefore, we get that $2\cdot \opt+1\ge \opt'$. 

We conclude that $\opt' =2\cdot \opt +1\le 3\cdot \opt$. Also, $ ||P|-\opt | \le |(|P'|-1)/2-\opt |= |(|P'|-1)/2-(\opt '-1)/2| = 1/2\cdot ||P'|-\opt'|$. Hence, we have an L-reduction with $\alpha =3$, $\beta = 1/2$, proving MAX-SNP-hardness by Theorem \ref{longpsnp}.

For the second half of the theorem, observe that, a $1/n'^{1-\varepsilon}$-approximation algorithm for \maxpf in $G'$ implies that we can find a directed path in $D$ of length at least $1/n'^{1-\varepsilon}\cdot (\opt'-1)/2= 1/(2n)^{1-\varepsilon}\cdot \opt$ for any $\varepsilon >0$, and so implies an $n^{1-\varepsilon '}$-approximation algorithm for \longP for some $0<\varepsilon '<\varepsilon$. By Theorem~\ref{thm:longP}, for any $\varepsilon>0$, it is NP-hard to approximate \maxpt withing a factor of $1/n^{1-\varepsilon}$ even for instances containing a properly colored spanning tree. 
\end{proof}

\begin{figure}[t!]
\centering
\begin{subfigure}[t]{0.48\textwidth}
    \centering
    \includegraphics[width=0.5\textwidth]{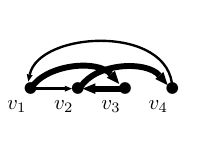}
    \caption{An instance of \longP. Thick edged form a longest directed path $P$.}
    \label{fig:36a}
\end{subfigure}\hfill
\begin{subfigure}[t]{0.48\textwidth}
    \centering
    \includegraphics[width=0.5\textwidth]{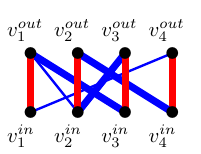}
    \caption{The corresponding instance of \maxpt in a 2-edge-colored simple graph. Thick edges form a maximum alternating path $P'$.}
    \label{fig:36b}
\end{subfigure}\hfill
\caption{Illustration of the proof of Theorem~\ref{thm:maxpt-2}.}
\label{fig:36}
\end{figure}

Finally, we prove an inapproximability result that is independent from the assumption P$\neq$NP.

\begin{thm}\label{thm:maxpt-3}
For $3$-edge-colored complete simple graphs, \maxpt\ is MAX-SNP-hard. Furthermore, it is NP-hard to approximate within a factor strictly larger than $3203/3204$ even for instances containing a properly colored spanning tree.
\end{thm}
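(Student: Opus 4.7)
The plan is to follow the template of Theorem~\ref{thm:3snp} while carefully tracking the \maxpt optimum of the constructed instance, combined with an identity-reduction shortcut for the inapproximability bound.

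For the MAX-SNP-hardness statement, I would L-reduce from \maxpf on $2$-edge-colored simple graphs (MAX-SNP-hard by Theorem~\ref{thm:2snp}) using the very same construction as in the proof of Theorem~\ref{thm:3snp}: given an instance $G$ on $n$ vertices, let $G'$ be the $3$-edge-colored complete simple graph on $2n$ vertices obtained by adding $n$ fresh vertices and coloring every missing edge with a third color. The central structural claim is
\begin{equation*}
\opt_{\maxpt}(G') \;=\; 2\,\opt_{\maxpf}(G) + 1
\end{equation*}
on the instances produced by the L-reduction of Theorem~\ref{thm:2snp}. The upper bound is general: for any properly colored tree $T'$ of $G'$, the color-$\{1,2\}$ edges of $T'$ form a properly colored subforest $F \subseteq E(G)$, and the color-$3$ edges of $T'$ necessarily form a matching (two color-$3$ edges sharing a vertex would violate proper coloring). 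A slot-counting argument---each vertex of $V(T')$ admits at most one color-$3$ edge, combined with $|T'|=|V(T')|-1$---yields $|T'| \le 2|F|+1 \le 2\,\opt_{\maxpf}(G)+1$. For the matching lower bound, the maximum properly colored forest of the $\hat{G}$ from Theorem~\ref{thm:2snp} necessarily contains all $2n$ edges of the form $u_iv_i'$ and $u_iv_i''$, so each of its $k$ components contains both a ``red-copy'' vertex $v_i'$ and a ``blue-copy'' vertex $v_j''$; consequently any two components can be linked by a color-$3$ edge $v_i'v_j''$ in $G'$. I would use $k-1$ such links to weld the components into one tree and attach the remaining color-$3$ slots at old vertices to fresh ones as pendants, producing a properly colored tree of exactly $2\,\opt_{\maxpf}(G)+1$ edges.

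With the formula in hand, the L-reduction is given by $f(G)=G'$ and $g(T')=$ the color-$\{1,2\}$ edges of $T'$. The lower bound $\opt_{\maxpf}(G)\ge n/3$ from Theorem~\ref{thm:2snp} yields $\alpha=3$, and the extraction inequality $|g(T')|\ge(|T'|-1)/2$ directly gives $\opt_{\maxpf}(G)-|g(T')|\le(\opt_{\maxpt}(G')-|T'|)/2$, so $\beta=1/2$ suffices. For the $3203/3204$-inapproximability, I would take the shortcut of an identity reduction from Theorem~\ref{thm:3snp}: any instance $H$ of \maxpf on a $3$-edge-colored complete simple graph containing a properly colored spanning tree satisfies $\opt_{\maxpf}(H)=\opt_{\maxpt}(H)=|V(H)|-1$, and every properly colored tree of $H$ is a properly colored forest of the same size. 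Hence any $(1-\varepsilon)$-approximation for \maxpt on such $H$ is automatically a $(1-\varepsilon)$-approximation for \maxpf, contradicting Theorem~\ref{thm:3snp} whenever $1-\varepsilon>3203/3204$.

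The main obstacle is verifying the lower bound of the structural formula---that enough color-$3$ edges exist to weld the components of a maximum properly colored forest of $G$ into a single tree of $G'$---and this step relies on the specific structure of the instances built in Theorem~\ref{thm:2snp} rather than on any general property of $2$-edge-colored simple graphs.
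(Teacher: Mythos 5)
Your proposal is correct and follows essentially the same route as the paper: the same two-stage construction (the gadget of Theorem~\ref{thm:2snp} followed by the completion of Theorem~\ref{thm:3snp}), the same identity $\opt_{\maxpt}=2\cdot\opt_{\maxpf}+1$ proved by counting third-color edges as a matching on the tree's vertices for the upper bound and by welding components with third-color edges plus pendants for the lower bound, and the same observation that \maxpt and \maxpf coincide on instances with a properly colored spanning tree for the $3203/3204$ bound. The only cosmetic differences are that you justify the welding via the red-copy/blue-copy structure of the components (the paper instead invokes maximality of the underlying linear forest to guarantee the connecting edges get the third color), and note that the exchange argument of Theorem~\ref{thm:2snp} shows \emph{some} maximum properly colored forest contains all gadget edges, not that every one necessarily does---which is all your lower bound needs.
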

\begin{proof}

We reduce from \maxlf. Let $G$ be an instance of \maxlf\ with $n$ vertices. We use the same construction as in Theorem \ref{thm:2snp} to create a simple 2-edge-colored graph $G'$ on $3n$ vertices. Then, we use the construction from Theorem \ref{thm:3snp} to create a complete, simple 3-edge colored graph $G''$ on $6n$ vertices. This defines the function $f$.

Suppose we have an inclusionwise maximal linear forest $F$ in $G$. Then, we have that $F$ is a union of paths, such that there are no edges between the endpoints of the paths. As in Theorem \ref{thm:2snp}, we can create a properly colored forest $F'$ of size $|F|+2n$ in $G'$. By the above observation, we have that in $G''$ we can create a properly colored tree that covers all $3n$ vertices of $G'$, by ordering the paths $P_1,\dots, P_k$ in $F'$ and adding edges $v_{i_2}''v_{(i+1)_1}'$, where $v_{i_2}$ is the last vertex of the path $P_i$ and $v_{(i+1)_1}$ is the first vertex of the path $P_{i+1}$ for $i=1,\dots ,k-1$, because all these edges are of the new color green. Finally, we have that there is $2\cdot (3n-|F'|-1)=2n-2|F|-2$ vertices of $G'$ that now have an adjacent edge of color green. Hence, we can add $n+2|F|+2$ more edges of color green that go to the other $3n$ vertices to the properly colored spanning tree in $G'$. Hence, we can create a properly colored tree $F''$ of size $2|F|+4n+1$ in $G''$. Therefore, $\opt'' \ge 2\opt + 4n +1$.

For the other direction, suppose we have a properly colored forest $F''$ in $G''$. 
As $F''$ is a tree, we also have that the number of green edges can be at most $(|F''|+1)/2$. Hence, the number of edges from the other two colors is at least $(|F''|-1)/2$. Therefore, we can get a properly colored forest $F'$ of size $(|F''|-1)/2$ in $G'$ by deleting the green edges. Similarly as in Theorem \ref{thm:2snp}, we can get a linear forest of size at least  $(|F''|-1)/2 -2n$ in $G$, which defines the function $g$.
By Corollary \ref{cor:12} we have that MAX-SNP-hardness of \maxlf\ remains even if $\opt \ge \frac{n}{3}$.
Hence, $\opt '' \le 15\cdot \opt$. Also, $||F|-\opt | \le | (|F''|-1)/2-2n -((\opt ''-1)/2-2n) |\le  1/2\cdot ||F''|-\opt ''|$.
Therefore, we have an L-reduction with $\alpha = 15$, $\beta = 1/2$, which proves MAX-SNP-hardness.

By Theorem \ref{thm:3snp}, \maxpf is NP-hard to approximate within a factor of $1-\varepsilon$ for any $\varepsilon <1/3204$, even in 3-edge-colored complete simple graphs containing a properly colored spanning tree. Furthermore, observe that for such graphs, a $(1-\varepsilon )$-approximation algorithm for \maxpt gives a properly colored tree of size at least $(1-\varepsilon)(n-1)$, hence it gives a $(1-\varepsilon )$-approximation for \maxpf as well, concluding the proof.
\end{proof}

\section{Approximation algorithms}
\label{sec:approx}

In this section, we provide approximation algorithms for \maxpf and \maxpt in various settings. First, in Section~\ref{sec:prep}, we establish a connection between \maxpf and the sum of matching matroids defined by the color classes of the coloring of the graph. In Section~\ref{sec:complete}, we discuss 2-edge-colored complete multigraphs and show that \maxpf is solvable in polynomial time for this class. Our main result is a general $5/9$-approximation algorithm for \maxpf in multigraphs, presented in Section~\ref{sec:general}. In Section~\ref{sec:simple}, we explain how the approximation factor can be improved if the graph is simple or the number of colors is at most three, and then we further improve the approximation factor for 2- and 3-edge-colored simple graphs in Section~\ref{sec:small}. Finally, an approximation algorithm for \maxpt is given in Section~\ref{sec:maxptapx}. 

Throughout the section, we denote by $\opt[G]$ the size of an optimal solution for the underlying problem, i.e., \maxpf or \maxpt, in graph $G$.

\subsection{Preparations}
\label{sec:prep}

For analyzing the proposed algorithms, we need some preliminary observations. Consider an instance of \maxpf, that is, a $k$-edge-colored graph $G=(V,E)$ on $n$ vertices. Recall that $E_i$ denotes the set of edges colored by $i$ and that a subset of vertices $U\subseteq V$ is called matching-coverable if there exist matchings $M_i\subseteq E_i$ for $i\in[k]$ such that $\bigcup_{i=1}^k M_i$ covers $U$. Using matroid terminology, this is equivalent to $U$ being independent in the sum of the matching matroids defined by the color classes. The next lemma shows that it suffices to restrict the problem to a maximum sized matching-coverable set.

\begin{lem}
\label{lemma:key}
For any matching-coverable set $U\subseteq V$, there exists a maximum-size properly colored forest $F_{opt}$ in $G$ such that $d_{F_{opt}}(u)\geq 1$ for every $u\in U$. Furthermore, if $U$ is a maximum-size matching-coverable set, then $\opt[G] = \opt[G[U]]$. 
\end{lem}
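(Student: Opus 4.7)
The plan is to prove the first assertion by an augmenting-swap argument and derive the second from it as a short corollary.

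Paragraph 1 (Reduction of the second part to the first). Suppose $U$ is a maximum-size matching-coverable set, and apply the first part to obtain a maximum-size properly colored forest $F_{opt}$ with $U\subseteq V(F_{opt})$. The color classes $F_{opt}\cap E_i$ are monochromatic matchings whose union covers $V(F_{opt})$, so $V(F_{opt})$ is itself matching-coverable. Maximality of $|U|$ then gives $|V(F_{opt})|\leq |U|$, and combined with $U\subseteq V(F_{opt})$ this forces $V(F_{opt})=U$, so $F_{opt}\subseteq E[U]$. This yields $\opt[G]=|F_{opt}|\leq \opt[G[U]]$, and the reverse inequality is immediate since $G[U]$ is a subgraph of $G$.

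Paragraph 2 (Exchange step for the first part). Among all maximum-size properly colored forests, pick one, call it $F$, that maximizes $|V(F)\cap U|$, and assume for contradiction that some $u_0\in U\setminus V(F)$ exists. Let $N_1,\dots,N_k$ be monochromatic matchings witnessing that $U$ is matching-coverable, with $N:=\bigcup_i N_i$ covering $U$, and choose any $e_0=u_0v_0\in N_{i_0}$. Since $u_0$ is isolated in $F$, the set $F+e_0$ is automatically a forest; it cannot be properly colored (else it would contradict the maximality of $|F|$), so there exists $f_0=v_0w_0\in F$ with $c(f_0)=i_0$. Then $F^{(1)}:=F+e_0-f_0$ is again a maximum-size properly colored forest: it is a subforest of $F+e_0$, has size $|F|$, and the only potential color conflict (at $v_0$) is eliminated by removing $f_0$.

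Paragraph 3 (Iteration and the main obstacle). If $w_0\notin U$, or if $w_0$ retains another $F$-edge (so $w_0\in V(F^{(1)})$), then $|V(F^{(1)})\cap U|>|V(F)\cap U|$, contradicting the choice of $F$. The only remaining case is $w_0\in U$ with $f_0$ being its unique $F$-edge, in which case we iterate the argument with $F^{(1)}$ and $u_1:=w_0$ in place of $F$ and $u_0$. This produces an alternating trail $u_0,v_0,u_1,v_1,u_2,\dots$ whose consecutive edges alternate between $N$ and the current forest and share a common color at each $v_j$. The main obstacle is proving that this process cannot continue indefinitely. I would handle this by selecting a \emph{shortest} augmenting trail of the above form over all choices of starting vertex in $U\setminus V(F)$ and all valid branchings: any internal revisit of a vertex would allow a shortcut yielding a strictly shorter valid trail that still terminates in the escape case, contradicting minimality. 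Hence the shortest trail must end with the gain in $|V(\cdot)\cap U|$ derived in the easy case, contradicting the maximality of $|V(F)\cap U|$. An alternative route, which avoids the combinatorial termination argument, is to recognize the search as the augmenting-path algorithm for the matroid union of the matching matroids defined by the color classes and to invoke Edmonds and Fulkerson's matroid union theorem directly on the independent sets $V(F)$ and $U$ of $\cM_\Sigma$.
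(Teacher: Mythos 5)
Your reduction of the second statement to the first is exactly the paper's argument and is correct. For the first statement, your overall strategy --- an exchange argument on an extremally chosen maximum-size properly colored forest --- is also the right one, but your choice of potential, $|V(F)\cap U|$, creates a termination problem that your write-up does not actually resolve. A single swap $F+e_0-f_0$ only preserves $|V(F)\cap U|$ in the bad case, so you are forced into an iterated augmenting trail, and your argument for why this trail must eventually reach the ``escape'' case is circular as stated: you select ``a shortest trail \dots\ that still terminates in the escape case,'' which presupposes that such a trail exists --- precisely what needs to be proved. (The iteration can in fact be shown to terminate: each swap inserts an edge of $N=\bigcup_i N_i$ and, because $N_{i_j}$ is a matching containing $e_j$ while $f_j$ shares the vertex $v_j$ and the color $i_j$ with $e_j$, it necessarily removes an edge \emph{not} in $N$; hence $|F^{(j)}\cap N|$ strictly increases and is bounded by $|N|$. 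Alternatively, one checks that whenever $w_j$ coincides with a previously visited vertex it remains covered after the swap, so a non-escaping trail keeps visiting fresh vertices of a finite graph. Neither observation appears in your proposal, and without one of them the proof is incomplete.)

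The paper sidesteps the issue entirely by choosing the potential to be $|F_{opt}\cap(M_1\cup\dots\cup M_k)|$ rather than $|V(F_{opt})\cap U|$: with that choice the very first swap already produces a properly colored forest of the same size sharing strictly more edges with the matchings (using exactly the fact noted above, that the conflicting edge $f$ cannot lie in $\bigcup_i M_i$), so the contradiction is immediate and no iteration or trail is needed. Your fallback of invoking Edmonds--Fulkerson matroid union on $V(F)$ and $U$ is also under-specified: the exchange axiom in $\cM_\Sigma$ yields a new matching cover of $V(F)+u$, not a modification of the forest $F$ itself, so additional work would still be required to convert that into a maximum-size properly colored forest covering more of $U$.
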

\begin{proof}
Let $U\subseteq V$ be a matching-coverable set and let $M_1,\dots,M_k$ be matchings satisfying $M_i\subseteq E_i$ and $U\subseteq V(\bigcup_{i=1}^k M_i)$. Let $F_{opt}$ be a maximum-size properly colored forest in $G$ that has as many edges in common with $M_1\cup \dots \cup M_k$ as possible. We claim that $F_{opt}$ covers $U$. Suppose indirectly that there exists a vertex $u\in U$ that is not covered by $F_{opt}$. For any edge $e\in M_1\cup \dots \cup M_k$ incident to $u$, $F_{opt}+e$ is still a forest by the indirect assumption. Moreover, $F_{opt}$ contains at most one edge adjacent to $e$ having the same color as $e$. Since $F_{opt}$ has maximum size, there exists exactly one such edge $f$. However, as $f$ has the same color as $e\in M_1\cup \dots \cup M_k$, we get that $f\notin M_1\cup \dots \cup M_k$. Therefore, $F_{opt}-f+e$ is a maximum-size properly colored forest containing more elements from $M_1\cup\dots\cup M_k$ than $F_{opt}$, a contradiction. This proves the first half of the lemma. 

To see the second half, let $U$ be a maximum-size matching-coverable set and $F_{opt}$ be a maximum-size properly colored forest covering $U$, implying $U\subseteq V(F_{opt})$. Note that $N_i=E_i\cap F_{opt}$ is a matching for every $i\in[k]$, hence $V(F_{opt})$ is also a matching-coverable set. By the maximality of $U$, we get $U=V(F_{opt})$, concluding the proof.
\end{proof}

\begin{rem}\label{rem:alg}
Since a rank oracle for the matching matroid of a graph can be constructed in polynomial time~\cite{edmonds1965transversals}, a maximum-size matching-coverable set $U$ can be found by using the matroid sum algorithm of Edmonds and Fulkerson~\cite{edmonds1965transversals}. The algorithm also provides a partition $U=U_1\cup \cdots \cup U_k$ where $U_i$ is independent in the matching matroid defined by $E_i$. For each $U_i$, one can find a matching $M_i\subseteq E_i$ that covers $U_i$ using Edmonds' matching algorithm~\cite{edmonds1965paths}. Furthermore, each matching $M_i$ can be chosen to be a maximum matching in $E_i$, due the the underlying matroid structure. Concluding the above, a maximum-size matching-coverable set $U$ together with maximum matchings $M_1,\dots, M_k$ with $M_i\subseteq E_i$ and $V(\bigcup_{i=1}^k M_i)=U$ can be found in polynomial time.    
\end{rem}

\subsection{2-edge-colored complete multigraphs}
\label{sec:complete}

Though \maxpf is hard even to approximate in general, the problem turns out to be tractable for $2$-edge-colored complete multigraphs. Our algorithm is presented as Algorithm \ref{Polyalgo:maxpf}.

\begin{algorithm}[t!]
\caption{Algorithm for \maxpf in 2-edge-colored complete multigraphs.}\label{Polyalgo:maxpf}
\DontPrintSemicolon

\KwIn{A 2-edge-colored complete multigraph $G=(V,E)$.}
\KwOut{A properly colored forest $F$.}

\medskip

Find maximum matchings $M_1\subseteq E_1, M_2\subseteq E_2$ maximizing $|V(M_1\cup M_2)|$.\;
Let $F \coloneqq M_1 \cup M_2$ and $U\coloneqq V(F)$.\;
Let $\mathcal{P}$ and $\mathcal{C}$ denote the path and cycle components in $\comp(F)$, respectively.\;
\If{$\mathcal{P}=\emptyset$}{
\begin{varwidth}{0.85\linewidth}
Delete any edge of $F$, transform the remaining set of edges into a properly colored Hamiltonian path $P$ using Theorem~\ref{thm:bang}, and update $F\gets P$.\label{Polystep:cycle}\;    
\end{varwidth}
}
\Else{
\begin{varwidth}{0.85\linewidth}
Let $P\in\mathcal{P}$ arbitrary and let $F'\coloneqq F[P\cup\bigcup_{C\in\mathcal{C}} C]$.\;
Transform $F'$ into a properly colored Hamiltonian path $P'$ using Theorem~\ref{thm:bang} and update $F\gets (F\setminus F')\cup P'$.\label{Polystep:path}\;
\end{varwidth}
}
\Return F\;
\end{algorithm}

\begin{thm}\label{thm:ck2}
Algorithm \ref{Polyalgo:maxpf} outputs a maximum-size properly colored forest for $2$-edge-colored complete multigraphs in polynomial time.
\end{thm}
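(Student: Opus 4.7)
The plan is to reduce to the induced subgraph $G[U]$ via Lemma~\ref{lemma:key}, derive a tight upper bound on $\opt[G[U]]$ from the matching structure of $F = M_1 \cup M_2$, and show that the algorithm attains it using Theorem~\ref{thm:bang}. By Remark~\ref{rem:alg}, step~1 can be performed in polynomial time so that each $M_i$ is a maximum matching of $E_i$ while $U = V(M_1 \cup M_2)$ is a maximum-size matching-coverable set; Lemma~\ref{lemma:key} then gives $\opt[G] = \opt[G[U]]$, so it suffices to analyze the algorithm on $G[U]$.

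I would first examine the structure of $F$. Since each vertex of $U$ is incident to at most one edge of each color in $F$, the components of $F$ are alternating paths (the family $\mathcal{P}$) and alternating even cycles (the family $\mathcal{C}$), and a direct count yields $|F| = |M_1| + |M_2| = |U| - |\mathcal{P}|$. For any properly colored forest $F^*$ in $G[U]$, the set $F^* \cap E_i$ is a matching in $E_i$ (two equally-colored edges at a vertex would violate the proper coloring condition), so the maximality of $M_i$ within $E_i$ gives
\[
|F^*| \;=\; |F^* \cap E_1| + |F^* \cap E_2| \;\leq\; |M_1| + |M_2| \;=\; |U| - |\mathcal{P}|.
\]
Combining this with the trivial forest bound $|F^*| \leq |U| - 1$ yields the upper bound $\opt[G[U]] \leq |U| - \max\{1, |\mathcal{P}|\}$.

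It then remains to show that the algorithm's output attains this value. If $\mathcal{P} = \emptyset$, deleting one edge of an arbitrary cycle in $F$ produces a properly colored 1-path-cycle factor of $G[U]$, which Theorem~\ref{thm:bang} transforms in polynomial time into a properly colored Hamiltonian path with $|U| - 1$ edges. If $\mathcal{P} \neq \emptyset$, pick any $P \in \mathcal{P}$: the restriction of $F$ to $V(P) \cup \bigcup_{C \in \mathcal{C}} V(C)$ is a properly colored 1-path-cycle factor of the induced complete subgraph, so Theorem~\ref{thm:bang} replaces it by a properly colored Hamiltonian path on these vertices, and gluing this path back onto the unchanged paths of $\mathcal{P} \setminus \{P\}$ yields a properly colored forest on $U$ with exactly $|U| - |\mathcal{P}|$ edges, matching the upper bound.

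The only mildly subtle point is that Remark~\ref{rem:alg} is what lets us simultaneously take each $M_i$ to be a maximum matching in $E_i$ \emph{and} $V(M_1 \cup M_2)$ to be a maximum-size matching-coverable set, via the matroid union algorithm applied to the matching matroids of the color classes; this double maximality is what drives both the upper bound and the matching lower bound above. Polynomial running time is then immediate from the polynomial bounds for matroid union, Edmonds' matching algorithm, and Theorem~\ref{thm:bang}.
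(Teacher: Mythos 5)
Your proposal is correct and follows essentially the same route as the paper's proof: in the all-cycles case the bound $\opt[G]=\opt[G[U]]\le |U|-1$ from Lemma~\ref{lemma:key} is matched by the Hamiltonian path from Theorem~\ref{thm:bang}, and otherwise the bound $\opt[G]\le |M_1|+|M_2|$ from the maximality of the matchings is matched because Step~\ref{Polystep:path} preserves the edge count. Your unified formulation $\opt[G[U]]\le |U|-\max\{1,|\mathcal{P}|\}$ is just a compact restatement of the paper's two-case analysis.
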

\begin{proof}
Note that each component of $M_1\cup M_2$ is either a path or a cycle whose edges are alternating between $M_1$ and $M_2$. If $M_1 \cup M_2$ is the union of cycles, then Algorithm~\ref{Polyalgo:maxpf} gives a properly colored Hamiltonian path in $G[U]$ by Step \ref{Polystep:cycle}. By Lemma~\ref{lemma:key}, $G[U]$ contains a maximum-size properly colored forest and hence $\opt\leq |U|-1$, implying that $F$ is optimal.

If $M_1 \cup M_2$ has a path component, then Step~\ref{Polystep:path} of Algorithm~\ref{Polyalgo:maxpf} does not reduce the number of edges, i.e., the output $F$ of the algorithm has size $|M_1|+|M_2|$. Since $M_1$ and $M_2$ were chosen to be maximum matchings in $E_1$ and $E_2$, respectively, the sum of their sizes is clearly an upper bound on the maximum size of a properly colored forest, implying that $F$ is optimal.

The overall running time of the algorithm is polynomial by Theorem~\ref{thm:bang} and Remark~\ref{rem:alg}.
\end{proof}

\subsection{General case}
\label{sec:general}

This section is dedicated for the proof of our main result, a general approximation algorithm for \maxpf. The high-level idea of our approach is as follows. With the help of Lemma~\ref{lemma:key}, we restrict the problem to a subgraph $G[U]$ where $U$ is a maximum-size matching-coverable set. Throughout the algorithm, we maintain maximum matchings $M_i\subseteq E_i$ for $i\in[k]$ such that $F=\bigcup_{i=1}^k M_i$ covers $U$. We then try to improve the structure of $F$ by decreasing the number of its components of size $2$ by local changes. These local improvement steps consist of adding one or two appropriately chosen edges. If no improvement is found, a careful analysis of the structure of the current solution gives a better-than-1/2 guarantee for the approximation factor.

Before stating the algorithm and the theorem, let us remark that there are several ways of getting a $1/2$-approximation for \maxpf in general. As it was mentioned already in Section~\ref{sec:intro}, the algorithms of~\cite{kiraly2012degree} and~\cite{linhares2020approximate} provide such a solution. However, there is a simple direct approach as well: find matchings $M_i \subseteq E_i$ for $i \in [k]$ maximizing the size of $U \coloneqq V(\bigcup_{i=1}^k M_i)$, and take a maximum forest $F$ in $\bigcup_{i=1}^k M_i$. This provides a $1/2$-approximation by Lemma~\ref{lemma:key}, since $|F| \ge |U|/2 \ge \opt[G[U]]/2 \ge \opt[G]$ holds.
However, improving the $1/2$ approximation factor is non-trivial and requires new ideas. Our main contribution is to break the 1/2 barrier and show that the problem can be approximated within a factor strictly better than 1/2. The algorithm is presented as Algorithm~\ref{algo:maxpf}.

\begin{algorithm}[t!] 
\caption{Approximation algorithm for \maxpf in multigraphs.}\label{algo:maxpf}
\DontPrintSemicolon

\KwIn{A multigraph $G=(V,E)$ with edge-coloring $c\colon E\to [k]$.}
\KwOut{A properly colored forest $F$ in $G$.}

\medskip

Find matchings $M_i\subseteq E_i$ for $i\in[k]$ maximizing $|\bigcup_{i=1}^k V(M_i)|$.{\footnotesize\color{Blue}{\tcp*[r]{Preprocessing steps.}}}
Let $F\coloneqq \bigcup_{i=1}^k M_i$ and $U\coloneqq V(F)$.\;
\SetAlgoVlined
$U_s\coloneqq\bigcup\{C\in\comp(F)\mid |C|=2\}$.\label{step:spr}     
{\footnotesize\color{Blue}{\tcp*[r]{Union of size-two components.}}}
$U_r\coloneqq U\setminus U_s$.
{\footnotesize\color{Blue}{\tcp*[r]{Remaining vertices.}}}
Take a maximum forest $F^\circ$ in $F[U_r]$ and set $F\gets (F\setminus F[U_r])\cup F^\circ$.\label{step:f'}{\footnotesize\color{Blue}{\tcp*[r]{Maximum forest in $U_r$.}}}
Let $E'\coloneqq E[U_s]\cup \{vw\in E\mid v\in U_s,w\in U_r,c(vw)\notin c(\delta_F(w))\}$.\label{step:e'}{\footnotesize\color{Blue}{\tcp*[r]{Candidate edges for extending $F^\circ$.}}}
Let $E'_i\coloneqq E' \cap E_i$.\; 
\For({{\footnotesize\color{Blue}{\tcp*[f]{Trying to add single edges.}}}}){$uv \in E\setminus F$ \textbf{with} $c(uv) \not \in c(\delta_F(u)\cup \delta_F(v))$ \label{step:for3}}{
\begin{varwidth}{0.9\linewidth}
If $u$ and $v$ are in different components of $F$, then $F \gets F+uv$ and go to Step~\ref{step:spr}.
\label{step:max}\;        
\end{varwidth}
}
\For({{\footnotesize\color{Blue}{\tcp*[f]{Trying to improve using single edges.}}}}){$uv\in E'$ \textbf{with} $u\in U_s,v\in U_r$\label{step:for1}}{
\begin{varwidth}{0.9\linewidth}
If there exist matchings $N_i\subseteq E'_i$ for $i\in[k]$ such that $uv\in N_{c(uv)}$ and $U_s+v\subseteq V(\bigcup_{i=1}^k N_i)$, then $F\gets (F\setminus F[U_s])\cup(\bigcup_{i=1}^k N_i)$ and go to Step~\ref{step:spr}.\label{step:case1}\;
\end{varwidth}        
}
\For({{\footnotesize\color{Blue}{\tcp*[f]{Trying to improve using pairs of edges.}}}}){$uv_1,uv_2\in E'[U_s]$ \textbf{with} $v_1\neq v_2$, $c(uv_1)\neq c(uv_2)$\label{step:for2}}{
\begin{varwidth}{0.9\linewidth}
If there exist matchings $N_i\subseteq E'_i[U_s]$ for $i\in[k]$ such that $uv_1\in N_{c(uv_1)}$, $uv_2\in N_{c(uv_2)}$ and $U_s\subseteq V(\bigcup_{i=1}^k N_i)$, then $F\gets (F\setminus F[U_s])\cup(\bigcup_{i=1}^k N_i)$ and go to Step~\ref{step:spr}.\label{step:case2}\;        
\end{varwidth}
}
Take a maximum forest $F^\bullet$ in $F[U_s]$ and set $F\gets (F\setminus F[U_s])\cup F^\bullet$.\label{step:f''}{\footnotesize\color{Blue}{\tcp*[r]{Getting rid of parallel edges.}}}
\Return{$F$}\;
\end{algorithm}

\begin{thm}\label{thm:main}
Algorithm~\ref{algo:maxpf} provides a $5/9$-approximation for \maxpf in multigraphs in polynomial time.
\end{thm}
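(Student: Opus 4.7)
The plan is to combine the reduction of Lemma~\ref{lemma:key} with a tight analysis of the output of Algorithm~\ref{algo:maxpf}, using the failure of its three improvement loops to sharpen the naive upper bound on $\opt$. Polynomial running time follows from exhibiting a monovariant—e.g.\ the size of the maximum forest contained in $F$—that strictly improves with each \emph{goto Step~\ref{step:spr}} triggered in Step~\ref{step:max}, Step~\ref{step:case1}, or Step~\ref{step:case2}; the individual primitives (matroid-sum covers, per-color maximum matchings, maximum forests) are polynomial by Remark~\ref{rem:alg}.

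For the approximation ratio, let $r = |U_r|$, $s = |U_s|/2$, and let $c_r$ denote the number of components of $F[U_r]$. Each such component has at least three vertices by construction, so $c_r \le r/3$; Step~\ref{step:f'} then leaves $r - c_r \ge 2r/3$ edges on $U_r$ and Step~\ref{step:f''} leaves $s$ edges on $U_s$, giving $|F| \ge 2r/3 + s$. Lemma~\ref{lemma:key} yields $\opt \le |U| - 1 = r + 2s - 1$, and an elementary calculation shows that $(2r/3 + s)/(r + 2s - 1) \ge 5/9$ whenever $s \le r$. The nontrivial regime is therefore $s > r$, in which size-two components dominate.

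In this regime I would sharpen the upper bound on $\opt$ via the termination conditions. Fix an optimal properly colored forest $F^*$ in $G[U]$, assumed by Lemma~\ref{lemma:key} to cover $U$, and partition its edges by how their endpoints split between $U_s$ and $U_r$. Termination of Step~\ref{step:for3} forbids $F^*$ from using any edge $uv$ with $u$ and $v$ in distinct $F$-components whose color is free at both endpoints, so every such bridging edge is color-conflicted with $F$ at one end. Termination of Steps~\ref{step:for1} and~\ref{step:for2} is precisely the statement that the sum of the color-class matching matroids restricted to $E'$ (respectively to $E'[U_s]$) does not admit an augmentation of the current cover of $U_s$ by any single edge from $U_r$, nor by any cherry inside $U_s$. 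Via the Edmonds--Fulkerson rank formula for matroid sums, each of these non-augmentabilities produces a bottleneck witness, which I would exploit through a charging scheme to obtain a bound of the form $\opt \le \alpha r + \beta s$ with coefficients tuned so that, combined with $|F| \ge 2r/3 + s$, the ratio $|F|/\opt$ is at least $5/9$ throughout $s > r$.

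The main obstacle is the final translation: converting the local matroid-theoretic non-augmentability delivered by Steps~\ref{step:for1}--\ref{step:for2} into a global edge count on $F^*$. Concretely, one must design a charging scheme that distributes each $F^*$-edge incident to $U_s$ over witnesses extracted from the matroid-sum min-cut certificates, with weights chosen to make the resulting linear inequality tight at the claimed $5/9$. The case analysis is driven by whether a given $F^*$-edge lies inside $U_s$, crosses from $U_s$ to $U_r$, or stays inside a single size-two component; matching these case-by-case charges to the coefficients $\alpha, \beta$ in the upper bound on $\opt$ is the delicate computational core of the argument.
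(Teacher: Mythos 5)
Your setup is sound: the reduction to $G[U]$ via Lemma~\ref{lemma:key}, the lower bound $|F|\ge 2r/3+s$ (each component of $F[U_r]$ has at least three vertices, each size-two component contributes one edge), the observation that the trivial bound $\opt\le |U|-1=r+2s-1$ already gives $5/9$ when $r\ge s-5$, and the running-time monovariant all check out. But the proof has a genuine gap exactly where you flag it: in the regime $s>r$ you never actually produce the sharpened upper bound on $\opt$. You announce a bound of the form $\opt\le\alpha r+\beta s$ with coefficients ``tuned'' via an unspecified charging scheme built on Edmonds--Fulkerson min-cut certificates, and you yourself call this ``the delicate computational core.'' Since that bound is the entire content of the theorem in the hard regime, the argument is incomplete as written; moreover it is not clear that a bound purely in terms of $r$ and $s$ is the right target.

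For comparison, the paper's derivation avoids rank formulas entirely and bounds $\opt$ in terms of $|F[U_r]|$ rather than $r$. It splits $E[U]$ into $E'$ and $E[U]\setminus E'$ and bounds the two pieces separately. For $G'=(U,E')$ it shows $\opt[G']=|U_s|/2$ exactly: a properly colored forest in $G'$ covering $U_s$ decomposes into its color classes, which are matchings in the sets $E'_i$; if it had more than $|U_s|/2$ edges it would contain either an edge leaving $U_s$ or a bichromatic cherry inside $U_s$, and these color-class matchings are then precisely the witnesses $N_1,\dots,N_k$ that Steps~\ref{step:case1} and~\ref{step:case2} search for, contradicting termination. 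For $G''=(U,E[U]\setminus E')$ it first shows (using termination of Step~\ref{step:max} and the definition of $E'$) that every $G''$-edge joining two components of $F$ is adjacent to a same-colored edge of $F[U_r]$; since each edge of $F[U_r]$ can absorb at most two such edges and a forest has at most $|F[U_r]|$ further edges inside components, $\opt[G'']\le 3|F[U_r]|$. Combining, $\opt\le |F|+2|F[U_r]|$, which together with $2|F|\ge\opt-|U_r|+2|F[U_r]|$ and $|F[U_r]|\ge 2|U_r|/3$ yields $9|F|\ge 5\opt$. To repair your proof you would need to supply this (or an equivalent) quantitative translation of the loop-termination conditions; the elementary decomposition of the optimal forest into its color-class matchings is the idea that replaces your proposed charging scheme.
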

\begin{proof}
We first prove that Algorithm~\ref{algo:maxpf} constructs a feasible solution to \maxpf, and provide a lower bound on its size. Finally, we analyze the time complexity.

\paragraph{\textmd{\it Feasibility.}} We show that if the algorithm terminates, then it returns a properly colored forest $F$ in $G$. Throughout the algorithm, the edge set $F$ is the union of matchings $M_i\subseteq E_i$ for $i\in[k]$, hence it is properly colored. By Steps~\ref{step:f'} and~\ref{step:f''}, the algorithm outputs the union of a forest $F^\circ$ covering $U_r$ and a forest $F^\bullet$ covering $U_s$, which is a forest. These prove feasibility.

\paragraph{\textmd{\it Approximation factor.}} Let $F$, $U_s$, $U_r$ and $E'$ denote the corresponding sets at the termination of the algorithm, and let $G'\coloneqq (U,E')$ and $G''=(U, E[U]\setminus E')$. By Lemma~\ref{lemma:key}, we have 
\begin{equation}  
\opt[G] = \opt[G[U]] \le \opt[G'] + \opt[G'']. \label{eq:g'plusg''}
\end{equation} 
We give upper bounds on $\opt[G']$ and $\opt[G'']$ separately.

\begin{cl} \label{cl:g'}
$\opt[G'] = |F[U_s]| = |U_s|/2$.
\end{cl}
\begin{proof}
Clearly, $\opt[G']\ge |U_s|/2$ as the output of Algorithm \ref{algo:maxpf} has this many edges in $E[U_s]\subseteq E'$. 

Let $F'$ be a maximum-size properly colored forest of $G'$ that covers every vertex in $U_s$; note that such a forest exists by Lemma~\ref{lemma:key}. Suppose to the contrary that $|F'|>|U_s|/2$. Then, either there is an edge $e=uv\in F' \setminus E[U_s]$, or there are edges $e_1=uv_1$ and $e_2=uv_2$ with $c(e_1)\ne c(e_2)$ such that $e_1,e_2\in F'\cap E[U_s]$. In particular, there are matchings $N_1,\dots, N_k$ with $N_i\subseteq E'_i$ such that they either cover every vertex in $U_s+v$ and $uv\in N_{c(uv)}$, or they cover $U_s$ and $e_1\in N_{c(e_1)}$ and $e_2\in N_{c(e_2)}$. 
Both cases lead to a contradiction, since the algorithm would have found such matchings $N_1,\dots, N_k$ in Step~\ref{step:case1} or Step~\ref{step:case2}. Therefore, $\opt [G']=|U_s|/2$ indeed holds.
\end{proof}

We use the following simple observation to bound $\opt[G'']$.

\begin{cl} \label{cl:adj}
If an edge $e \in E[U] \setminus E'$ connects two components of $F$, then there exists an edge in $F[U_r]$ which is adjacent to $e$ and has the same color.
\end{cl}
\begin{proof}
Since $E[U_s] \subseteq E'$, $e$ has at least one endpoint in $U_r$.
If $e=vw$ such that $v \in U_s$ and $w\in U_r$, then $c(e) \in c(\delta_F(w))$ by $e \not\in E'$ and the definition of $E'$.
Otherwise, $e$ is spanned by $U_r$, and since it was not added to $F$ in Step~\ref{step:max}, it is adjacent to an edge of $F$ having the same color.
\end{proof}

With the help of the claim, we can bound $\opt[G'']$.

\begin{cl}  \label{cl:g''}
$\opt[G''] \le 3\cdot |F[U_r]|$.
\end{cl}
\begin{proof}
Let $F''$ be a maximum-size properly colored forest of $G''$.
For each edge $f \in F[U_r]$, $F''$ has at most two edges adjacent to $f$ having color $c(f)$. 
Then, Claim~\ref{cl:adj} implies that $F''$ has at most $2\cdot |F[U_r]|$ edges connecting different components of $F$. 
As $F''$ is a forest, it has at most $|F[U_r]|$ edges spanned by a component of $F[U_r]$, thus $|F''| \le 3\cdot |F[U_r]|$ follows.
\end{proof}

Using \eqref{eq:g'plusg''}, Claim~\ref{cl:g'}, and Claim~\ref{cl:g''}, we get \[\opt[G] \le \opt[G'] + \opt[G''] \le |F[U_s]| + 3\cdot  |F[U_r]|  = |F| + 2\cdot |F[U_r]|,\]
which yields 
\begin{equation} \label{eq:main}
|F| \ge \opt[G] - 2\cdot |F[U_r]|.
\end{equation}
Using that $|U| \ge \opt[G[U]] = \opt[G]$, we get 
\begin{equation} \label{eq:nbound}
2\cdot |F| = |U_s| + 2\cdot |F[U_r]| = |U| - |U_r| + 2\cdot  |F[U_r]| \ge \opt[G] - |U_r| + 2\cdot  |F[U_r]|.
\end{equation}
Since each component of $F[U_r]$ has size at least three, we have $|F[U_r]| \ge 2/3\cdot |U_r|$. Thus \eqref{eq:nbound} implies  
\begin{equation} \label{eq:comb}
8|F| \ge 4\cdot \opt[G] - 4 \cdot |U_r| + 8\cdot |F[U_r]| \ge 4 \cdot \opt[G] + 2\cdot |F[U_r]|.
\end{equation}
By adding \eqref{eq:main} and \eqref{eq:comb}, we obtain
\[9\cdot |F| \ge 5 \cdot  \opt[G],\]
proving the approximation factor.

\paragraph{\textmd{\it Time complexity.}}

By Remark~\ref{rem:alg}, each step of the algorithm can be performed in polynomial time, and the total number of for loops in Steps~\ref{step:for1} and~\ref{step:for2} is also clearly polynomial in the number of edges of the graph. Hence it remains to show that the algorithm makes polynomially many steps back to Step~\ref{step:spr}. This follows from the fact that whenever the algorithm returns to Step~\ref{step:spr}, a local improvement was found and so the sum $|U_s|+|\comp(F)|$ strictly decreases that can happen at most $2n$ times. This concludes the proof of the theorem.
\end{proof}

The analysis in Theorem~\ref{thm:main} is tight for $k$-edge-colored multigraphs if $k\geq 4$; see Figure~\ref{fig:tight1} for an example.

\subsection{Simple graphs and multigraphs with small numbers of colors}
\label{sec:simple}
While Algorithm~\ref{algo:maxpf} provides a $5/9$-approximation in general, the approximation factor can be improved if the graph is simple or the number of colors is small. In what follows, we show how to get better guarantees when $G$ is simple or $k\leq 3$.

\begin{thm} \label{thm:simple} 
Algorithm~\ref{algo:maxpf} provides a $4/7$-approximation for \maxpf in simple graphs and in 3-edge-colored multigraphs.
\end{thm}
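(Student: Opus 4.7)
The plan is to follow the framework of the proof of Theorem~\ref{thm:main}, refining the bound $\opt[G'']\le 3c$ of Claim~\ref{cl:g''} in the restricted settings. Retain the notation $s=|U_s|$, $r=|U_r|$, $c=|F[U_r]|$ from that proof, and let $t$ denote the number of size-three components of $F[U_r]$ at the termination of the algorithm.

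The main new ingredient is the strengthened bound $\opt[G'']\le 3c-t$. To prove it, take an optimal $F''$ in $G''$ and, as in the proof of Claim~\ref{cl:g''}, charge each cross $F''$-edge to an adjacent $F[U_r]$-edge of the same color (guaranteed by Claim~\ref{cl:adj}). Define the \emph{contribution} of a component $C$ of $F[U_r]$ as the number of internal $F''$-edges in $V(C)$ plus the number of cross edges charged to an $F[U_r]$-edge of $C$; then $|F''|$ equals the sum of contributions. For components of size at least four the contribution is at most $3(|V(C)|-1)$, exactly as in Claim~\ref{cl:g''}.

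The heart of the argument is to show that a size-three component $C=\{x,y,z\}$, with $F[U_r]$-edges $xy$ of color $a$ and $yz$ of color $b$, contributes at most $5$. I would proceed by a short case analysis on $F''[V(C)]$, which is a forest on three vertices with at most two edges. Proper coloring of $F''$ forbids a cross edge of color $\gamma$ at a vertex already incident to an $F''$-edge of color $\gamma$; combined with the generic ``at most two charges per $F[U_r]$-edge'' bound, this caps the contribution at $5$ in every configuration in the special settings. The only configuration that could achieve contribution $6$ forces $F''[V(C)]$ to contain two internal edges of distinct colors both outside $\{a,b\}$ (for instance, an edge parallel to $yz$ of some color $e\notin\{a,b\}$ combined with an $xz$-edge of some color $d\notin\{a,b\}$). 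This is ruled out in simple graphs, where no parallel edges exist, and in $3$-edge-colored multigraphs, where only one color lies outside $\{a,b\}$ and so cannot supply two distinct such colors.

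Summing the contributions yields $\opt[G'']\le 5t+3(c-2t)=3c-t$. A simple counting argument also gives $r\le t/3+4c/3$: size-three components account for $3t$ vertices and $2t$ edges of $F[U_r]$, while every component of size $m\ge 4$ satisfies $m\le \tfrac{4}{3}(m-1)$.

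Combining with Lemma~\ref{lemma:key} and Claim~\ref{cl:g'},
\[|F|=\tfrac{s}{2}+c\ge \opt-2c+t \quad\text{and}\quad 2|F|=s+2c\ge\opt-r+2c\ge\opt+\tfrac{2c}{3}-\tfrac{t}{3}.\]
Adding the first inequality with weight $1$ and the second with weight $3$ yields $7|F|\ge 4\opt$, establishing the $4/7$-approximation factor.

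The main obstacle will be carrying out the size-three case analysis exhaustively: every configuration of $F''[V(C)]$ has to be examined, and the exclusion of the contribution-$6$ configuration must be justified separately under each of the two special assumptions, with particular care for parallel edges that are allowed in the $3$-color multigraph setting.
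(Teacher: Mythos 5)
Your proposal is correct and follows essentially the same route as the paper: the key refinement $\opt[G'']\le 3\,|F[U_r]|-m_3$ (your $3c-t$) is exactly the paper's strengthened claim, and the final weighted combination of the two inequalities is identical. The only (equivalent) difference is in how the size-three bound is justified — you rule out the contribution-$6$ configuration by a direct color case analysis, whereas the paper observes that two internal $F''$-edges force one of them to share a color and an endpoint with an $F$-edge, thereby consuming one of the four cross-charge slots; both arguments go through.
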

\begin{proof}
We use the notation and extend the proof of Theorem~\ref{thm:main}.
Consider an instance where $G$ is simple or $k=3$;
this assumption is in fact used only in the next simple observation.

\begin{cl} \label{cl:spec}
Let $C$ be a component of $F$ with $|C|=3$.
If $|F''[C]| = 2$, then there exist $e \in F''[C]$ and $f \in F[U_r]$ such that $c(e)=c(f)$ and $e$ and $f$ has at least one common endpoint.
\end{cl}
\begin{proof} 
If $G$ is simple, then $|E[C]| \le 3$, thus $|F''[C] \cap F[U_r]| \ge 1$.
If $k=3$, then $|c(F''[C]) \cap c(F[C])| \ge 1$, that is, $c(e)=c(f)$ for some $e \in F''[C]$ and $f \in F[C]$. Since $|C|=3$, $e$ and $f$ has at least one common endpoint.
\end{proof}

Let $m_3 \coloneqq |\{C \in \comp(F) \mid |C| = 3\}|$. Using Claim~\ref{cl:spec}, we strengthen Claim~\ref{cl:g''} as follows.

\begin{cl} \label{cl:g''spec} $\opt[G''] \le 3\cdot |F[U_r]|-m_3$.
\end{cl}
\begin{proof}
Let $\gamma \coloneqq |\{C \in \comp(F) \mid |C| = 3, |F''[C]| = 2\}|$.
Let $F''_1$ denote the set of edges $uv \in F''$ such that $u$ and $v$ are in different components of $F$, and let $F''_2 \coloneqq F'' \setminus F''_1$.
Claim~\ref{cl:adj} and Claim~\ref{cl:spec} imply that $F''$ has at least $|F''_1| + \gamma$ edges $e$ for which there exists $f \in F[U_r]$ such that $c(e)=c(f)$ and $e$ and $f$ has at least one common endpoint.
For each $f \in F[U_r]$, $F''$ has at most two edges having the same color as $f$ and sharing at least one common endpoint with $f$, implying $2\cdot |F[U_r]| \ge |F''_1| + \gamma$.
Since $F$ has $m_3-\gamma$ size-three components spanning at most one edge of $F''_2$, we have $|F''_2| \le |F[U_r]|-(m_3-\gamma)$.
Then, \[|F''| = |F''_1| + |F''_2| \le (2\cdot |F[U_r]-\gamma) + (|F[U_r]+\gamma-m_3) = 3\cdot |F[U_r]| - m_3,\]
and the claim follows.
\end{proof}

Using \eqref{eq:g'plusg''}, Claim~\ref{cl:g'} and Claim~\ref{cl:g''spec}, we get
\[\opt[G] \le \opt[G'] + \opt[G''] \le |F[U_s]| + 3\cdot |F[U_r]| - m_3 = |F| + 2\cdot |F[U_r]| - m_3,\]
that is, 
\begin{equation} \label{eq:mainspec}
|F| \ge \opt[G] - 2\cdot |F[U_r]| + m_3.
\end{equation}
Since $F[U_r]$ has $m_3$ components of size three and the other components of $F[U_r]$ has size at least four, we have $|F[U_r]| \ge 2 m_3 + 3/4 \cdot (|U_r| - 3m_3) = 3/4 \cdot |U_r| - 1/4 \cdot m_3$.
Then, \eqref{eq:nbound} implies  
\begin{equation} \label{eq:combspec}
6\cdot |F| \ge 3\cdot \opt[G] - 3\cdot  |U_r| + 6\cdot |F[U_r]| \ge 3 \cdot \opt[G] + 2\cdot |F[U_r]| -m_3.
\end{equation}
By adding \eqref{eq:mainspec} and \eqref{eq:combspec}, we obtain
\[7\cdot |F| \ge 4 \cdot  \opt[G],\]
proving the approximation factor.
\end{proof}

The analysis in Theorem~\ref{thm:simple} is tight for $3$-edge-colored multigraphs and for $k$-edge-colored simple graphs for $k\geq 4$; see Figures~\ref{fig:tight2} and~\ref{fig:tight3} for examples.

\begin{figure}[t!]
\centering
\begin{subfigure}{0.48\textwidth}
    \centering
    \includegraphics[width=0.85\textwidth]{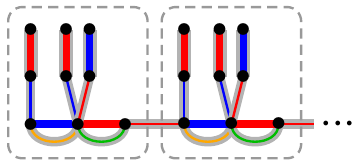}
    \caption{An example showing that the approximation factor of $5/9$ in Theorem~\ref{thm:main} is tight for 4-edge-colored multigraphs. If the  graph consists of $\ell$ blocks, then the approximation ratio is $5\ell/(9\ell-1)$.}
    \label{fig:tight1}
\end{subfigure}\hfill
\begin{subfigure}{0.48\textwidth}
    \centering
    \includegraphics[width=0.68\textwidth]{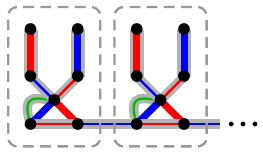}
    \caption{An example showing that the approximation factor of $4/7$ in Theorem~\ref{thm:simple} is tight for 3-edge-colored multigraphs. If the  graph consists of $\ell$ blocks, then the approximation ratio is $4\ell/(7\ell-1)$.}
    \label{fig:tight2}
\end{subfigure}

\vspace{0.5cm}

\begin{subfigure}{0.48\textwidth}
    \centering
    \includegraphics[width=0.85\textwidth]{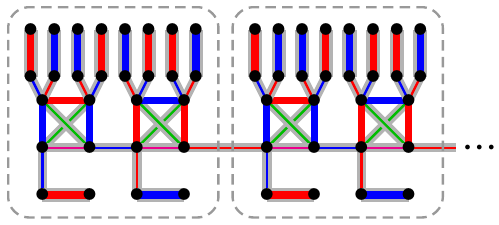}
    \caption{An example showing that the approximation factor of $4/7$ in Theorem~\ref{thm:simple} is tight for 4-edge-colored simple graphs. If the  graph consists of $\ell$ blocks, then the approximation ratio is $16\ell/(28\ell-1)$.}
    \label{fig:tight3}
\end{subfigure}\hfill
\begin{subfigure}{0.48\textwidth}
    \centering
    \includegraphics[width=0.73\textwidth]{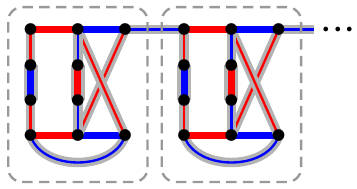}
    \caption{An example showing that the approximation factor of $3/5$ in Theorem~\ref{thm:multi2} is tight for 2-edge-colored graphs. If the  graph consists of $\ell$ blocks, then the approximation ratio is $6\ell/(10\ell-1)$.}
    \label{fig:tight4}
\end{subfigure}
\caption{Tight examples for Algorithm~\ref{algo:maxpf} in different settings. Thick edges denote the properly colored forest found by the algorithm, while edges with a grey outline form an optimal solution. The graphs are obtained by repeating the blocks enclosed by the dashed boxes $\ell$ times.}
\label{figh:tight}
\end{figure}

\begin{thm} \label{thm:multi2}
Algorithm~\ref{algo:maxpf} provides a $3/5$-approximation for \maxpf in 2-edge-colored multigraphs.
\end{thm}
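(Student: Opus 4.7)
The plan is to extend the template used in the proofs of Theorems~\ref{thm:main} and~\ref{thm:simple}. Adopting the same notation $F$, $U_s$, $U_r$, $G'=(U,E')$, $G''=(U,E[U]\setminus E')$, and $F''$ for a maximum-size properly colored forest in $G''$, I retain the identities $\opt[G']=|F[U_s]|=|U_s|/2$ (Claim~\ref{cl:g'}) and $2|F|\ge \opt[G]+|F[U_r]|/2$, the latter following from $|U|\ge\opt[G]$ together with $|U_r|\le(3/2)|F[U_r]|$ (since every component of $F[U_r]$ has at least three vertices). Consequently, to obtain a $3/5$-approximation factor it suffices to establish the refined bound $\opt[G'']\le 2|F[U_r]|$: combining it with $\opt[G]\le|F[U_s]|+\opt[G'']$ gives $|F|\ge \opt[G]-|F[U_r]|$, and adding twice the standing inequality yields $5|F|\ge 3\opt[G]$.

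Two structural observations specific to $k=2$ fuel the sharper bound on $\opt[G'']$. First, $F''$ is itself a linear forest, since any properly 2-edge-colored graph has maximum degree two. Second, for every $v\in U_s$, writing $c$ for the unique color of the $F$-edge at $v$, any edge of $G''$ incident to $v$ must also have color $c$—otherwise, by the definition of $E'$ in Step~\ref{step:e'}, the edge would lie in $E'$ and be removed from $G''$; the proper coloring of $F''$ then forces $d_{F''}(v)\le 1$ for every $v\in U_s$, while $d_{F''}(v)\le 2$ for every $v\in U_r$. Note also that $F''$ has no edges inside $U_s$-components nor between distinct $U_s$-components, because $E[U_s]\subseteq E'$.

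Using these inputs, I plan a charging argument mapping each edge of $F''$ to an edge of $F[U_r]$. A within-component edge of $F''$ is charged to a distinct $F[U_r]$-edge in the same component (possible because $F''$ restricted to the component is a forest with at most $|F[C]|$ edges), while a between-component edge of $F''$ is charged to the adjacent $F[U_r]$-edge of matching color supplied by Claim~\ref{cl:adj}. The crude version of this scheme yields at most $|F[U_r]|$ within-charges and at most $2|F[U_r]|$ between-charges, reproducing the $3|F[U_r]|$ bound of Theorem~\ref{thm:main}. The saving of an additional $|F[U_r]|$ should come from the termination of the local improvement loops Steps~\ref{step:for1} and~\ref{step:for2}: whenever an $F[U_r]$-edge $f=xy$ of color $c$ simultaneously receives a within-charge and between-charges at both $x$ and $y$, one can translate this configuration into matchings $N_i\subseteq E'_i$ witnessing either a Step~\ref{step:case1}- or a Step~\ref{step:case2}-improvement, contradicting the fact that the algorithm has halted.

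The main obstacle is formalising this last step: the termination conditions must be expressed as non-augmentability statements in the union of matching matroids on $E'_1$ and $E'_2$, and then combined with the linear-forest structure of $F''$ and the color-admissibility restriction at $U_s$-vertices to exclude the triple-charged configurations. The delicate part is handling the regime $|F[U_s]|\gtrsim |F[U_r]|$, where the naive degree-sum bound $|F''|\le |F[U_s]|+(3/2)|F[U_r]|$ is only equivalent to $|U|\ge\opt[G]$ and so does not suffice; here the matroidal argument on $U_s$-to-$U_r$ edges of $F''$ is what extracts the extra slack. Tightness of the $3/5$-factor is witnessed by the family depicted in Figure~\ref{fig:tight4}.
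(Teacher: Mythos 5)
Your reduction of the theorem to the single bound $\opt[G'']\le 2\,|F[U_r]|$ is arithmetically correct, but that bound is false, so the plan cannot be completed as stated. Take a component $C$ of $F[U_r]$ that is an alternating path $v_1v_2v_3v_4$ with colors $1,2,1$, so $|F[C]|=3$, and let $U_s$ supply enough size-two components. The forest consisting of the chord $v_1v_4$ (color $2$, which lies in $G''$ since all of $E[U_r]$ does and Step~\ref{step:max} never adds edges inside a component of $F$), one color-$1$ edge from each of $v_1$ and $v_4$ to $U_s$, and two edges (one of each color) from each of $v_2$ and $v_3$ to $U_s$ is a properly colored forest in $G''$ with $7>6=2\,|F[C]|$ edges, and the configuration is consistent with the algorithm having terminated. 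This is precisely why the paper's Claim~\ref{cl:adjk2} must treat even and odd components of $F[U_r]$ separately (the uncharged edge $v_1v_\ell$ of the ``wrong'' color exists only when $\ell$ is even), and why its final bound is $\opt[G'']\le |F[U_r]|+|U_r|-m_3$ --- which in the example above equals $3+4-0=7$ and is tight --- rather than $2\,|F[U_r]|$. Closing the argument then requires carrying the $m_3$ term through both inequalities via $|F[U_r]|\ge \tfrac34|U_r|-\tfrac14 m_3$, not just the crude $|U_r|\le\tfrac32|F[U_r]|$ you use.

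Two further points in your outline would also fail. First, you misread Step~\ref{step:e'}: for an edge $vw$ with $v\in U_s$, $w\in U_r$, membership in $E'$ is decided by whether $c(vw)\notin c(\delta_F(w))$, i.e., by the colors at the \emph{$U_r$-endpoint}; there is no color restriction at $v$, so your claim that every $G''$-edge at $v\in U_s$ carries the color of $v$'s $F$-edge, and hence that $d_{F''}(v)\le 1$, is unjustified (and false in general --- such a $v$ can have one $G''$-edge of each color to suitably colored $U_r$-vertices). Second, the saving you hope to extract from the termination of Steps~\ref{step:for1} and~\ref{step:for2} cannot materialize: those loops inspect only edges of $E'$, which is disjoint from $E[U]\setminus E'$, so their failure constrains $\opt[G']$ (this is exactly Claim~\ref{cl:g'}) but says nothing about $F''$. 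The genuine source of slack in the $k=2$ case is the alternating-path structure of the components of $F[U_r]$ together with their parity, not non-augmentability in the matching-matroid sum.
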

\begin{proof}
We use the notation and extend the proof of Theorem~\ref{thm:main} assuming that $k=2$. For $e \in F''$, define 
\[x(e) \coloneqq |\{f \in F[U_r] \mid c(e)=c(f), \text{ $e$ and $f$ has at least one common endpoint}\}|.\]
For a subset $S \subseteq F''$, we use the notation $x(S) \coloneqq \sum_{e \in S} x(e)$.

\begin{cl} \label{cl:adjk2} $x(F''[C]) \ge |F''[C]|-1$ for every even component $C \in \comp(F[U_r])$, and $x(F[C]) \ge |F''[C]|$ for every odd component $C \in \comp(F[U_r])$.
\end{cl}
\begin{proof}
Let $\ell \coloneqq |C|$. Since $k=2$, $F[C]$ is an alternating path, let $v_1, v_2, \dots, v_\ell$ denote its vertices and $f_1, \dots, f_{\ell-1}$ denote its edges such that $f_i = v_iv_{i+1}$ for $i \in [\ell-1]$.
For each edge $e \in F''[U_r]$ we have $x(e) \ge 1$ unless $e = v_1 v_\ell$ and $c(e) \ne c(f_1) = c(f_{\ell-1})$. This proves the claim since $c(f_1) \ne c(f_{\ell-1})$ if $\ell = |C|$ is odd.
\end{proof}

Let $m_3 \coloneqq |\{C \in \comp(F) \mid |C| = 3\}|$.  Using Claim~\ref{cl:adjk2}, we strengthen Claim~\ref{cl:g''} as follows.

\begin{cl} \label{cl:g''k2}
$\opt[G''] \le |F[U_r]| + |U_r| + m_3$.
\end{cl}
\begin{proof}
$F[U_r]$ has $|U_r|-|F[U_r]|$ components, thus it has at most $|U_r|-|F[U_r]|-m_3$ even components.
Claim~\ref{cl:adj} implies that $x(e) \ge 1$ holds for each edge $e \in F''$ connecting two components of $F$.
Using Claim~\ref{cl:adjk2}, it follows that $x(F'') \ge |F''|-(|U_r|-|F[U_r]|-m_3)$. For each edge $f \in F[U_r]$, $F''$ has at most two edges having the same color as $f$ and at least one common endpoint of $f$, thus $x(F'') \le 2|F[U_r]|$. Then,
\[|F''| \le x(F'') + |U_r| - |F[U_r]| - m_3 \le |F[U_r]| + |U_r| - m_3.\]
\end{proof}
Using \eqref{eq:g'plusg''}, Claim~\ref{cl:g'} and Claim~\ref{cl:g''k2}, we get
\[\opt[G] \le \opt[G'] + \opt[G''] \le |F[U_s]| + |F[U_r]| +|U_r| - m_3 = |F| + |U_r| - m_3,\]
that is, 
\begin{equation} \label{eq:maink2}
|F| \ge \opt[G] - |U_r| + m_3.
\end{equation}
As in the proof of Theorem~\ref{thm:simple},  $|F[U_r]\ge 3/4 \cdot |U_r| - 1/4 \cdot m_3$, thus 
\eqref{eq:nbound} implies  
\begin{equation} \label{eq:combk2}
4\cdot |F| \ge 2\cdot \opt[G] -  2|U_r| + 4|F[U_r]| \ge 2\cdot \opt[G] + |U_r| -m_3.
\end{equation}
By adding \eqref{eq:maink2} and \eqref{eq:combk2}, we get \[5 \cdot |F| \ge 3 \cdot \opt[G],\]
proving the approximation factor.
\end{proof}

The analysis in Theorem~\ref{thm:multi2} is tight for $2$-edge-colored multigraphs; see Figure~\ref{fig:tight4} for an example.

\subsection{Simple graphs with small numbers of colors}
\label{sec:small}

For simple graphs, the algorithm can be significantly simplified while leading to even better approximation factors if the number of colors is small. The modified algorithm is presented as Algorithm~\ref{alg-2-colors}. First, we consider the case $k=2$.

\begin{algorithm}[t!]
\caption{Approximation algorithm for \maxpf in simple graphs.}\label{alg-2-colors}
\DontPrintSemicolon

\KwIn{A simple graph $G=(V,E)$ with edge-coloring $c\colon E\to[k]$.}
\KwOut{A properly colored forest $F$ in $G$.}

\medskip

Find maximum matchings $M_i\subseteq E_i$ for $i\in[k]$ maximizing $|\bigcup_{i=1}^k V(M_i)|$.\label{step:1}\;
Let $F'\coloneqq \bigcup_{i=1}^k M_i$.\label{step:2}\;
Take a maximum forest $F$ in $F'$.\label{step:3}\;
\Return{$F$}\;
\end{algorithm} 

\begin{thm}\label{thm:34}
Algorithm \ref{alg-2-colors} provides a $\frac{3}{4}$-approximation for \maxpf in 2-edge-colored simple graphs in polynomial time.
\end{thm}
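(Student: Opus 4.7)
The plan is to exploit the very restrictive structure of $F'=M_1\cup M_2$ when $G$ is simple and $k=2$. Since each vertex has at most one incident edge of each color in $F'$, the maximum degree in $F'$ is at most $2$, and a degree-$2$ vertex has one edge of each color. Hence every connected component of $F'$ is either a path or a properly colored cycle, and any such cycle must alternate between the two colors; in particular its length is even and, because parallel edges are forbidden in a simple graph, at least $4$.

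My next step would be to bound $\opt[G]$ from above by $|F'|$. By Remark~\ref{rem:alg}, Step~\ref{step:1} of Algorithm~\ref{alg-2-colors} can be implemented in polynomial time so that each $M_i$ is a maximum matching in $E_i$ while $U=V(M_1\cup M_2)$ is simultaneously a maximum-size matching-coverable set. Since any properly colored forest $F^\ast$ in $G$ decomposes into the matchings $F^\ast\cap E_1$ and $F^\ast\cap E_2$, this immediately yields $\opt[G]\le |M_1|+|M_2|=|F'|$.

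To finish, I would compare $|F|$ with $|F'|$. Step~\ref{step:3} converts $F'$ into a maximum forest $F$ by deleting exactly one edge from each cycle component; if $c$ denotes the number of cycle components, then $|F|=|F'|-c$. Combining this with $|F'|\ge 4c$, which follows from the fact that every cycle of $F'$ has at least four edges, yields
\[
|F|=|F'|-c\ge |F'|-\tfrac{|F'|}{4}=\tfrac{3}{4}|F'|\ge \tfrac{3}{4}\opt[G].
\]
Polynomial running time then follows from Remark~\ref{rem:alg} and the greedy computability of a maximum forest.

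The main delicate point in this argument is the joint guarantee needed in Step~\ref{step:1}: that one can simultaneously choose each $M_i$ to be a maximum matching in $E_i$ \emph{and} have $V(M_1\cup M_2)$ be a maximum-size matching-coverable set, all in polynomial time. This is exactly the content of Remark~\ref{rem:alg}, but without it the pivotal inequality $\opt[G]\le |M_1|+|M_2|$ collapses, so this is the step I would verify most carefully.
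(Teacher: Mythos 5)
Your proof is correct and follows essentially the same route as the paper: bound $\opt[G]\le|M_1|+|M_2|$ by decomposing any properly colored forest into one matching per color, observe that the components of $F'$ are paths and alternating cycles of length at least $4$, and lose at most one edge per cycle. The ``delicate point'' you flag at the end is actually unnecessary: as the paper's own remark after the theorem notes, the argument only uses that $M_1$ and $M_2$ are maximum matchings and does not rely on $|V(M_1\cup M_2)|$ being maximized.
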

\begin{proof}
Let $M_1$ and $M_2$ denote the maximum matchings found in Step~\ref{step:1} of the algorithm. Then in Step~\ref{step:2}, $F'$ is a properly colored edge set which is the vertex-disjoint union of paths and even cycles. As the graph is simple, every cycle has length at least $4$. In Step~\ref{step:3}, we delete no edge from the paths and exactly one edge from each cycle. Since every cycle had length at least $4$, we deleted at most $1/4\cdot(|M_1|+|M_2|)$ edges and hence the algorithm outputs a solution of size $|F|\geq 3/4\cdot(|M_1|+|M_2|)$. On the other hand, $\opt[G]\leq |M_1|+|M_2|$ clearly holds since every properly colored forest of $G$ decomposes into the union of a matching in $E_1$ and a matching in $E_2$. This concludes the proof of the theorem.
\end{proof}

The analysis in Theorem~\ref{thm:34} is tight for $2$-edge-colored simple graphs; see Figure~\ref{fig:tight6} for an example.

\begin{rem}
Note that the proof of Theorem~\ref{thm:34} only uses that $M_1$ and $M_2$ are maximum matchings and does not rely on the fact that $|V(M_1\cup M_2)|$ is maximized.     
\end{rem}

Now we discuss the case when $k=3$.

\begin{thm}\label{thm:58}
Algorithm \ref{alg-2-colors} provides a $\frac{5}{8}$-approximation for \maxpf in 3-edge-colored simple graphs in polynomial time.
\end{thm}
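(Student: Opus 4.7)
The plan is to bound $\opt[G]$ by two complementary inequalities and combine them with a per-component structural estimate on $F'$. Write $F' = M_1\cup M_2\cup M_3$, $U = V(F')$, $c = |\comp(F')|$, and $\beta = |F'|-|F|$; since $F$ is a maximum forest in $F'$, $|F| = |U|-c$. Because the $M_i$ are maximum matchings in their color classes and any properly colored forest decomposes into one matching per color, $\opt[G] \le |F'| = |F|+\beta$. Moreover $U$ is a maximum-size matching-coverable set (Remark~\ref{rem:alg}), so by Lemma~\ref{lemma:key} we have $\opt[G] = \opt[G[U]] \le |U|-1 = |F|+c-1$. Combining, $\opt[G]\le |F|+\min(\beta,c-1)$.

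The heart of the argument is the per-component inequality $2 e_C \le 5 n_C - 8$ for every component $C$ of $F'$, where $n_C = |V(C)|$ and $e_C = |E(C)|$. Since $F'$ is the union of three matchings, its maximum degree is at most $3$, hence $2 e_C \le 3 n_C$, which already yields the inequality for $n_C \ge 4$. For $n_C = 2$ the graph being simple forces $e_C = 1$, so $2 = 5\cdot 2 - 8$; for $n_C = 3$ one has $e_C \le 3$ and $6 \le 7 = 5\cdot 3 - 8$. Summing over all components of $F'$ gives $2|F'| \le 5|U| - 8c$, which, using $|F'| = |F|+\beta$ and $|U| = |F|+c$, rearranges to
\[
2\beta + 3 c \le 3|F|.
\]

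A short case split on whether $\beta \le c-1$ closes the argument. If $\beta \le c-1$, then $c \ge \beta + 1$, so $5\beta + 3 \le 2\beta + 3c \le 3|F|$, whence $\beta \le (3|F|-3)/5$. Otherwise $\beta \ge c$, in which case $5c \le 2\beta + 3c \le 3|F|$, so $c-1 \le 3|F|/5 - 1$. Either way $\min(\beta,c-1)\le 3|F|/5$, hence $\opt[G]\le |F| + 3|F|/5 = 8|F|/5$, which is exactly the desired $5/8$-approximation.

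The main obstacle is finding the right per-component inequality. Relying only on $\opt[G]\le |F'|$ cannot give $5/8$ because dense components such as $K_4$ already have $(n_C-1)/e_C = 1/2 < 5/8$, while the forest bound $\opt[G]\le |U|-1$ alone is too weak when $F'$ has many small components. The inequality $2 e_C \le 5 n_C - 8$ is the sharpest linear bound satisfied by every admissible component of $F'$: it is tight precisely at single edges and at $K_4$, which are exactly the configurations that drive the asymptotic tightness of the $5/8$ ratio.
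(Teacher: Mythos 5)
Your proof is correct. It rests on the same two pillars as the paper's argument --- the matching bound $\opt[G]\le |F'|=\sum_i|M_i|$ and the forest bound $\opt[G]=\opt[G[U]]\le |U|-1$ coming from Lemma~\ref{lemma:key} --- but you combine them differently. The paper bounds $|F'(C)|$ per component with a parity-sensitive estimate ($\le \tfrac32|C|$ for even components, $\le\tfrac32(|C|-1)$ for odd ones, the latter requiring a degree-sequence and perfect-matching argument in its Claim on odd components), and then takes a weighted sum of the two resulting inequalities in which the counts of size-two and large even components cancel. You instead use the single cruder linear bound $2e_C\le 5n_C-8$, which follows immediately from maximum degree $3$ for $n_C\ge 4$ and from simplicity for $n_C\in\{2,3\}$, and close the argument with $\opt[G]\le |F|+\min(\beta,c-1)$ and a two-case split on whether $\beta\le c-1$. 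This sidesteps the parity argument entirely, which is the most delicate step of the paper's proof, at the cost of a slightly less transparent final combination; both routes correctly identify single edges and $K_4$ components as the tight configurations. One small point worth making explicit: the bound $\opt[G]\le|U|-1$ does require that $U$ is a \emph{maximum-size} matching-coverable set (so that Lemma~\ref{lemma:key} applies), which holds because the algorithm maximizes $|\bigcup_i V(M_i)|$ over maximum matchings and every matching-coverable set can be covered by maximum matchings; you cite Remark~\ref{rem:alg} for this, which suffices.
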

\begin{proof}
Let $M_1,M_2$ and $M_3$ denote the maximum matchings found in Step~\ref{step:1} of the algorithm. Then in Step~\ref{step:2}, $F'$ is a properly colored edge set in which every vertex has degree at most $3$.

\begin{cl}\label{cl:two}
$|F'(C)|= 1$ for every component $C\in\comp(F')$ of size $2$.
\end{cl}
\begin{proof}
The statement follows by the assumption that $G$ is simple.
\end{proof}

\begin{cl}\label{cl:even}
$|F'(C)|\leq 3/2\cdot |C|$ for every even component $C\in\comp(F')$.
\end{cl}
\begin{proof}
The statement follows from the fact that each vertex has degree at most $3$ in $F'$.
\end{proof}

\begin{cl}\label{cl:odd}
$|F'(C)|\leq 3/2\cdot (|C|-1)$ for every odd component $C\in\comp(F')$.
\end{cl}
\begin{proof}
Suppose to the contrary that $|F'(C)|>3/2\cdot (|C|-1)$. Since every vertex has degree at most $3$ in $F'$, $C$ either contains $|C|-2$ vertices of degree $3$ and two vertices of degree $2$, or $|C|-1$ vertices of degree $3$ and one vertex $u$ of degree at least one in $F'$. However, the former case cannot happen as $C$ is an odd component and the sum of the degrees of the vertices is an even number, namely $2|F'|$. Let $e\in F'$ be an edge incident to $u$. Since every vertex in $C-u$ has degree exactly $3$, each vertex in $C$ is incident to an edge of color $c(e)$ in $F'$. However, $F'$ is a properly colored edge set, hence the edges in $F'(C)$ colored by $c(e)$ form a perfect matching of $C$, contradicting $|C|$ being odd. 
\end{proof}

For $i\in[n]$, let $m_i$ denote the number of components in $\comp(F')$ containing $i$ vertices. Furthermore, let $m\coloneqq\sum_{i=2}^{\lfloor n/2\rfloor} m_{2i}$, that is, $m$ is number of even components in $F'$ of size at least four. 
Using Claim~\ref{cl:two}, Claim~\ref{cl:even} and Claim~\ref{cl:odd},
we get
\begin{align*}
2\cdot |F'| &\le \sum_{\substack{C \in \comp(F')\\|C|=2}} 2 + \sum_{\substack{C \in \comp(F')\\ \text{$C$ is even}\\ |C| \ge 4}} 3\cdot |C| + \sum_{\substack{C \in \comp(F')\\ \text{$C$ is odd}}} 3\cdot (|C|-1) \\
& = \sum_{C \in \comp(F')}3\cdot (|C|-1) - m_2 + 3m \\
& = 3\cdot |F| - m_2 + 3m.
\end{align*}
Note that $\opt[G]\leq |M_1|+|M_2|+|M_3| = |F'|$ clearly holds since every properly colored forest of $G$ decomposes into the union of a matching in $E_1$, a matching in $E_2$, and a matching in $E_3$. Then, by rearranging the previous inequality, we get
\begin{equation} \label{eq:main3col}
3\cdot  |F| \ge 2\cdot \opt[G] + m_2 - 3m.
\end{equation}

Let $U \coloneqq \bigcup_{i=1}^3 V(M_i)$. Since each matching-coverable set can be covered by maximum matchings, $U$ is a maximum-size matching-coverable set, thus $\opt[G] = \opt[G[U]]$ holds by Lemma~\ref{lemma:key}.
Now $F$ is a forest, thus
$|F| = |F[U]| = |U| - \sum_{i=2}^n m_i = |U| - m_2 - m - \sum_{j=1}^{\lfloor (n-1)/2\rfloor} m_{2j+1}$, that is, $\sum_{j=1}^{\lfloor (n-1)/2\rfloor} m_{2j+1} = |U| - m_2 - m - |F|$.
Using this equation and the fact that $U$ is the union of the components of $F$ with size at least two, we have
\begin{align*}
2\cdot |U| & = 2\cdot \sum_{i=2} i \cdot m_i\\
& \ge 4 m_2 + 8m + 6 \cdot \sum_{j=1}^{\lfloor (n-1)/2\rfloor} m_{2j+1} \\
& \ge 4 m_2 + 8m + 5 \cdot \sum_{j=1}^{\lfloor (n-1)/2\rfloor} m_{2j+1} \\
& = 4m_2 + 8m + 5 \cdot (|U| - m_2 - m - |F|) \\
& =5\cdot |U| - m_2 + 3 m - 5\cdot  |F|.
\end{align*}
Rearranging and using $|U| \ge \opt[G[U]] = \opt[G]$, we obtain
\begin{equation} \label{eq:other3col}
5\cdot |F| \ge 3 \cdot \opt[G] - m_2 + 3m.
\end{equation}
By adding \eqref{eq:main3col} and \eqref{eq:other3col}, we get
\[8 \cdot |F| \ge 5 \cdot \opt[G],\]
proving the approximation factor.
\end{proof}

The analysis in Theorem~\ref{thm:58} is tight for $3$-edge-colored simple graphs; see Figure~\ref{fig:tight5} for an example.

\begin{figure}[t!]
\centering
\begin{subfigure}{0.48\textwidth}
    \centering
    \includegraphics[width=0.7\textwidth]{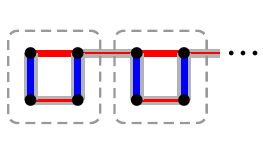}
    \caption{An example showing that the approximation factor of $3/4$ in Theorem~\ref{thm:34} is tight for 2-edge-colored simple graphs. If the  graph consists of $\ell$ blocks, then the approximation ratio is $3\ell/(4\ell-1)$.}
    \label{fig:tight6}
\end{subfigure}\hfill
\begin{subfigure}{0.48\textwidth}
    \centering
    \includegraphics[width=0.68\textwidth]{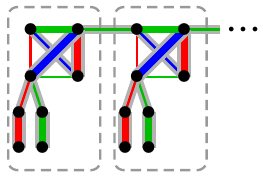}
    \caption{An example showing that the approximation factor of $5/8$ in Theorem~\ref{thm:58} is tight for 3-edge-colored simple graphs. If the  graph consists of $\ell$ blocks, then the approximation ratio is $5\ell/(8\ell-1)$.}
    \label{fig:tight5}
\end{subfigure}
\caption{Tight examples for Algorithm~\ref{alg-2-colors} for $k=2$ and $3$. Thick edges denote the properly colored
forest found by the algorithm, while edges with a grey outline form an optimal solution. The graphs are obtained by repeating the blocks enclosed by the dashed boxes $\ell$ times.}
\label{figh:tight_second}
\end{figure}

\begin{rem}
A key ingredient of Algorithm~\ref{alg-2-colors} is that it starts with maximum matchings $M_i\subseteq E_i$, which makes it possible to compare the size of the solution output by the algorithm against $\opt\leq\sum_{i=1}^k|M_i|$. In contrast, Algorithm~\ref{algo:maxpf} starts with arbitrary matchings $M_i\subseteq E_i$ maximizing $|V(\bigcup_{i=1}^k M_i)|$. The reason why that algorithm operates with matchings instead of maximum matchings is that in certain steps we need to find matchings containing some fixed edges, hence they cannot necessarily be chosen to be maximum matchings.
\end{rem}

\subsection{Approximating \maxpt}
\label{sec:maxptapx}

Finally, for any $\varepsilon >0$ we give an $1/\sqrt{(2+\varepsilon )(n-1)}$-approximation algorithm for \maxpt in complete multigraphs. The approximation factor is far from being constant; still, the algorithm is of interest since its approximation guarantee is better than the general upper bound on the approximability of \maxpt. 

Our algorithm for \maxpt in complete multigraphs is presented as Algorithm~\ref{algo:maxpt}.

\begin{thm}\label{thm:maxptalg}
For complete multigraphs on $n$ vertices and for any fixed constant $\varepsilon >0$, Algorithm \ref{algo:maxpt} provides a $1/\sqrt{(2+\varepsilon)(n-1)}$-approximation for \maxpt in polynomial time. 
\end{thm}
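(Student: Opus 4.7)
The plan is to build the proof around Theorem~\ref{thm:maxpt-partition}, which is the paper's main primitive for \maxpt on complete multigraphs. I would first invoke this theorem to obtain, in polynomial time, a partition $V=V_1\cup V_2$, a properly colored spanning tree $T_1$ of $G[V_1]$, and an optimal \maxpt-solution $T_2$ of $G[V_2]$. Algorithm~\ref{algo:maxpt} is expected to return the largest among a polynomial family of candidate trees assembled from these primitives. For the analysis, let $T^*$ be a fixed optimal properly colored tree in $G$ with $\opt=|E(T^*)|$, put $V_i^*:=V_i\cap V(T^*)$, and write $a:=|V_1^*|$, $b:=|V_2^*|$ so that $a+b=\opt+1\le n$.

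The key structural observation is that, because $T^*$ is a tree, every connected component of the forest $T^*-V_1^*$ is attached to $V_1^*$ by exactly one edge of $T^*$, as otherwise $T^*$ would contain a cycle. Letting $c$ denote the number of such components, this forces $T^*\cap G[V_2]$ to be a forest on $b$ vertices with $b-c$ edges, so by pigeonhole it has a connected subtree on at least $\lceil b/c\rceil$ vertices. Hence $|T_2|\ge \lceil b/c\rceil-1$, while trivially $|T_1|\ge a-1$.

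With these bounds I would introduce a threshold $L=L(\varepsilon,n)$ of order $\sqrt{(n-1)/(2+\varepsilon)}$ and case-split. If $|V_1|\ge L$ then $|T_1|\ge L-1$, which together with $\opt\le n-1$ yields the target ratio $1/\sqrt{(2+\varepsilon)(n-1)}$ immediately. Otherwise $a<L$ and $b>\opt-L$; averaging over $V_1^*$ produces a vertex $v\in V_1^*$ whose attached $V_2^*$-components have total size at least $b/a>(\opt-L)/L$, and together with $v$ they form a properly colored subtree of $T^*$ inside $G[V_2\cup\{v\}]$ of that size. The algorithm is expected to capture this by enumerating candidate trees that pass through each $v\in V_1$ -- concretely, by iterating Theorem~\ref{thm:maxpt-partition} on $G[V_2\cup\{v\}]$ or on related complete submultigraphs. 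Balancing the two lower bounds $a-1$ and $(\opt-L)/L$ is worst near $a\asymp \sqrt{\opt/(2+\varepsilon)}$, which after elementary manipulation and the use of $\opt\le n-1$ delivers the announced approximation factor.

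The main obstacle is the \emph{star} configuration in which $T^*$ hangs almost all of $V_2^*$ directly off a small $V_1^*$, so that $c$ is close to $b$ and the plain bound $|T_2|\ge b/c$ collapses to a constant. Handling this case requires that Algorithm~\ref{algo:maxpt} produce candidates beyond $T_1$ and $T_2$, namely trees that route through a carefully chosen vertex of $V_1$ so that the \emph{aggregate} of the small components hanging from that vertex is certified to be a properly colored subtree; the argument then compares this candidate against $|T_1|$ in the balanced regime. Polynomial running time is immediate from the polynomial number of enumerated candidates and the polynomial-time primitives supplied by Theorem~\ref{thm:maxpt-partition} (and, for the two-color substructures arising along the way, Theorem~\ref{thm:bang}).
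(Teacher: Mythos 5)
Your skeleton matches the paper's: both invoke Theorem~\ref{thm:maxpt-partition} to get $V_1\cup V_2$ with a spanning tree of $G[V_1]$ and an optimal tree of $G[V_2]$, and both balance a ``$V_1$-side'' candidate against $F_2$ around a threshold of order $\sqrt{n}$. But there are two problems, one minor and one fatal. The minor one: your structural claim that every component of $T^*-V_1^*$ is attached to $V_1^*$ by \emph{exactly one} edge is false --- in the path $v_1\,u\,v_2$ with $v_1,v_2\in V_1^*$ and $u\in V_2^*$, the component $\{u\}$ attaches by two edges and no cycle arises. The pigeonhole consequence $|T_2|\ge b/c-1$ survives (it only needs that $T^*-V_1^*$ has $c$ components on $b$ vertices), and for the averaging step you can simply assign each component to one chosen attachment vertex, so this is repairable.

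The fatal gap is exactly the ``star'' case you flag and then leave unresolved. Your analysis in the regime $a<L$ only establishes the \emph{existence} of a large properly colored subtree hanging off some $v\in V_1^*$; it does not give the algorithm a way to \emph{find} one, and Algorithm~\ref{algo:maxpt} does not enumerate over $v\in V_1$ or recurse on $G[V_2\cup\{v\}]$. Your proposed fix cannot work as stated: Theorem~\ref{thm:maxpt-partition} applied to $G[V_2\cup\{v\}]$ does not solve \maxpt on that graph --- it only solves it on the two parts of a partition \emph{it} chooses, and \maxpt on complete multigraphs is MAX-SNP-hard (Theorem~\ref{thm:maxpt-3}), so ``the aggregate of the small components hanging from $v$'' is not a quantity you can certify by these primitives. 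The paper's actual device is different and is the heart of the proof: it augments $F_1$ by a forest $F_{12}$ of $V_1$--$V_2$ edges whose colors are compatible with $F_1$ at their $V_1$-endpoints, chosen by a maximum bipartite matching to cover as many $V_2$-vertices as possible (each with degree one). Writing $x_1=|F_1|$, $y=|F_{12}|$, $x_2=|F_2|$, the counting bound is $\opt\le\min\{n-1,\;3x_1+y+(2x_1+y)x_2\}$: at most $2x_1+y$ vertices of $V_2$ are adjacent to $V_1$ in the optimum (each either conflicts in color with one of the $x_1$ edges of $F_1$, at most two conflicts per edge, or is attachable and hence counted by the maximality of $y$), so the optimum decomposes into at most that many subtrees of $G[V_2]$, each of size at most $x_2$. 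Comparing $\max\{x_1+y,\,x_2\}$ against this bound yields the stated factor. Without $F_{12}$ and this inequality --- or a concrete, polynomial-time substitute for them --- your argument does not go through.
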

\begin{proof}
First, we show that deleting edges in Step~\ref{step:delete} does not decrease the size of the optimal solution. Indeed, for any optimal solution $F_{opt}$, if $e=vw\in F_{opt}$ but $e$ is deleted, then there are at least $n$ parallel edges between $v$ and $w$ having different colors. As the degrees of $v$ and $w$ are at most $n-1$ in $F_{opt}$, there is always at least one edge $f$ among those parallel ones such that $F_{opt}-e+f$ is a properly colored tree again. Note that after the deletion of unnecessary parallel edges, the total number edges of the graph is bounded by $n^3$.

Let $\varepsilon >0$ be the parameter of the algorithm. If $n<n_{\varepsilon}=(\varepsilon^2+9\varepsilon+18)/\varepsilon^2$, then the output is clearly optimal. Furthermore, the number of possible solutions is bounded by $\binom{n_\varepsilon^3}{n_\varepsilon}$ which is a constant, hence the runtime is constant. 

Assume now that $n\ge n_{\varepsilon}$. Let $V_1\cup V_2$ be the partition of $V$ as in Theorem \ref{thm:maxpt-partition}. We may assume that $V_1,V_2\neq\emptyset$ since otherwise the algorithm clearly gives an optimal solution. Let $F_1$ and $F_2$ be maximum-size properly colored trees in $G[V_1]$ and $G[V_2]$, respectively. Let $n_1\coloneqq |V_1|$, $n_2\coloneqq |V_2|$ and $x_1\coloneqq \opt[G[V_1]]=|F_1|=n_1-1$, $x_2\coloneqq \opt[G[V_2]]=|F_2|$. The forest $F_{12}$ in Step~\ref{step:bm} can be determined using a maximum bipartite matching algorithm in a bipartite graph $H=(S,T;W)$ defined as follows. The vertex set $S$ contains a vertex $(v,i)$ for each $v\in V_1$ and color $i\in[k]$ such that $v$ has no incident edges in $F_1$ having color $i$, that is, $S=\{(v,i)\mid v\in V_1,i\notin c(\delta_{F_1}(v))\}$. The vertex set $T$ contains a copy of each vertex in $V_2$, that is, $T=\{v\mid v\in V_2\}$. Finally, there is an edge added between $(v,i)\in S$ and $u\in T$ in $W$ if $uv\in E$ has color $i$ in $G$. It is not difficult to check that a maximum matching of $H$ gives a properly colored forest $F_{12}$ that can be added to $F_1$ and with respect to that, covers as many vertices in $V_2$ as possible. 

For the output $F$ of Algorithm~\ref{algo:maxpt}, either we have $|F|=x_2$ or $|F|=x_1+y$, where  $y=|F_{12}|$. Recall that $n_1\geq 1$ by our assumption, hence $x_1+y\geq 1$. Indeed, this clearly holds if $n_1\geq 2$, while if $n_1=1$ then $y\geq 1$ by the completeness of the multigraph. Let $F_{opt}$ be an optimal properly colored tree in $G$. 
We claim that $\opt[G]=|F_{opt}|\le 3x_1+ y +(2x_1+y)x_2$. To see this, let $U\coloneqq\{u\in V_2\mid\text{there exists $uv\in F_{opt}$ with $v\in V_1$}\}$ and set $U'\coloneqq\{u\in U\mid \text{$c(uv)\in c(\delta_{F_1}(v))$ for every $uv\in F_{opt}$ with $v\in V_1$}\}$. Since every edge of $F_1$ is adjacent to at most two edges in $F_{opt}$ having the same color, we have $|U'|\le 2x_1$. Moreover, by the choice of $F_{12}$, we have $|U\setminus U'|\le y$. These together imply $|U|\le 2x_1+y$. 
Now $F_{opt}\setminus E[V_1\cup U]$ is the union of properly colored trees in $G[V_2]$, all of which can have size at most $x_2$. By the above, there are at most $|U|=2x_1+y$ such components as $F_{opt}$ is connected, leading to $|F_{opt} \setminus E[V_1\cup U]| \le (2x_1+y)x_2$. Finally, observe that $|F_{opt}\cap E[V_1\cup U]| \le 3x_1+y$ by $|V_1|\leq x_1+1$ and $|U|\leq 2x_1+y$. Since $F_{opt}$ has at most $|V|-1$ edges, these together show $\opt [G] =|F_{opt}|\le \min\{n-1,3x_1+y+(2x_1+y)x_2\}$. 

The approximation factor of Algorithm~\ref{algo:maxpt} is hence at least
$\max\{x_1+y, x_2\}/\min\{3x_1+y+(2x_1+y)x_2, n-1\}$. To lower bound this expression, let $x'_1\coloneqq x_1+y$. Then, it suffices to show that  
\[
\frac{\max\{x'_1,x_2\}}{\min\{n-1,3x'_1+2x'_1x_2\}} \ge \frac{1}{\sqrt{(2+\varepsilon )(n-1)}}
\]
for $1\leq x'_1\leq n-1$ and $0\le x_2\le n-1$, since the value on the left hand side is a lower bound on the approximation factor. Assume that this is not the case, and in particular, we have $x_2< \sqrt{n-1}/\sqrt{2+\varepsilon}$ and $x'_1/(3x'_1+2x'_1x_2)<1/\sqrt{(2+\varepsilon)(n-1)}$ for some $n\geq n_\varepsilon$. From the latter inequality, we get $\sqrt{(2+\varepsilon)(n-1)}/2-3/2<x_2$. Therefore, $\sqrt{(2+\varepsilon)(n-1)}/2-3/2<x_2<\sqrt{n-1}/\sqrt{2+\varepsilon}$. However, $\sqrt{(2+\varepsilon)(n-1)}/2-3/2\ge \sqrt{n-1}/\sqrt{2+\varepsilon}$ whenever $n\ge n_{\varepsilon}$, a contradiction.

We conclude that Algorithm~\ref{algo:maxpt} provides a $1/\sqrt{(2+\varepsilon )(n-1)}$-approximation. Also, by Theorem \ref{thm:maxpt-partition} and the fact that a maximum-size matching can be computed in polynomial time, the running time is polynomial. This concludes the proof of the theorem.
\end{proof}

\begin{algorithm}[t!] 
\caption{Approximation algorithm for \maxpt in complete multigraphs.}\label{algo:maxpt}
\DontPrintSemicolon

\KwIn{A complete multigraph $G=(V,E)$ with edge-coloring $c\colon E\to [k]$ and $\varepsilon>0$.}
\KwOut{A properly colored tree $F$ in $G$ such that $|F|\ge \opt /\sqrt{(2+\varepsilon)(n-1)} $.}

\medskip

\If{
$\exists v,w\in V,|E[\{v,w\}]|\geq n$
}{
Choose $n$ parallel edges between $v$ and $w$ arbitrarily and delete the remaining ones.\label{step:delete}\;
}
Let $F\coloneqq\emptyset$ and $n_\varepsilon\coloneqq (\varepsilon^2+9\varepsilon +18)/\varepsilon^2 $.\;
\If{ $n< n_{\varepsilon }$}{
Compute all properly colored trees in $G$ and let $F_{opt}$ be one with maximum size.\label{step:brute}\;
$F\gets F_{opt}$\;
}
\Else{
Compute $V_1,V_2$ and optimal properly colored tree $F_i$ of $G[V_i]$ for $i\in[2]$ as in Theorem \ref{thm:maxpt-partition}.\;
Let $E' \coloneqq \{ vw \mid v\in V_1, w\in V_2, c(vw)\notin c(\delta_{F_1}(v))\}$.\;
\begin{varwidth}{0.9\linewidth}
    Compute a properly colored forest $F_{12}\subseteq E'$ that covers a maximum number of vertices in $V_2$ and $|\delta_{F_{12}}(v)|\le 1$ for each $v\in V_2$.\label{step:bm}\;
\end{varwidth}
\If{$|F_1|+|F_{12}|\ge |F_2|$}{
$F\gets F_1\cup F_{12}$\;
}
\Else{
$F\gets F_2$\;
}
}
\Return{$F$}\;
\end{algorithm}

\begin{rem}
For $\varepsilon = 2$, the algorithm provides a $1/(2\sqrt{n-1})$-approximation and $n_2=10$. That is, the brute force approach of Step~\ref{step:brute} is only executed for $n\le 9$. However, any properly colored tree with two edges gives a $1/(2\sqrt{n-1})$-approximation, and deciding the existence of such a tree requires $\binom{|E|}{2}$ steps. 
\end{rem}

\section{Conclusions}
\label{sec:conclusions}

In this paper we introduced and studied the Maximum-size Properly Colored Forest problem, in which we are given an edge-colored undirected graph and the goal is to find a properly colored forest of maximum size. We showed that the problem is closely related to fundamental problems of combinatorial optimization such as the Bounded Degree Spanning Tree, the Bounded Degree Matroid, the Multi-matroid Intersection, and the (1,2)-Traveling Salesman problems. We considered the problem for complete and non-complete, simple and non-simple graphs, and presented polynomial-time approximation algorithms as well as inapproximability results depending on the number of colors. 

We close the paper by mentioning some open problems:

\begin{enumerate}\itemsep0em
    \item The probably most straightforward question is closing the gap between the lower and upper bounds on the approximability of the problem. Our results on the inapproximability of the problem provide only very weak (close to 1) upper bounds. Providing significantly smaller upper bounds would be a significant step towards getting an idea of the exact values.
    \item The weighted variant of \maxpf can be defined in a straightforward manner, where the goal is to find a properly colored forest of maximum total weight. While some of the results, e.g. Theorem~\ref{thm:34} can be extended to the weighted setting as well, this is not always true. A systematic study of the problem assuming edge weights is therefore of interest. 
    \item The algorithms of \cite{kiraly2012degree} and \cite{linhares2020approximate} both rely on iteratively solving a corresponding LP and then fixing variables having $0$ or $1$, with additional ideas for relaxing the constraints which lead to an approximate solution. An interesting question is whether such an approach can be used for approximating the maximum size or maximum weight of a properly colored forest in our setting.
\end{enumerate}

\medskip
\paragraph{Acknowledgement.}
Yuhang Bai was supported by the National Natural Science Foundation of China -- NSFC, grant numbers 12071370 and 12131013 and by Shaanxi Fundamental Science Research Project for Mathematics and Physics -- grant number 22JSZ009. Gergely K\'al Cs\'aji was supported by the Hungarian Scientific Research Fund, OTKA, Grant No. K143858 and by the Doctoral Student Scholarship Program (number C2258525) of the Cooperative Doctoral Program of the Ministry of Innovation and Technology financed by the National Research, Development, and Innovation Fund.
Tamás Schwarcz was supported by the \'{U}NKP-23-3 New National Excellence Program of the Ministry for Culture and Innovation from the source of the National Research, Development and Innovation Fund. This research has been implemented with the support provided by the Lend\"ulet Programme of the Hungarian Academy of Sciences -- grant number LP2021-1/2021, by the Ministry of Innovation and Technology of Hungary from the National Research, Development and Innovation Fund, financed under the ELTE TKP 2021-NKTA-62 funding scheme, and by Dynasnet European Research Council Synergy project (ERC-2018-SYG 810115).

\bibliographystyle{abbrv}
\bibliography{maxpf}
    
\end{document}